\documentclass{article}

    \PassOptionsToPackage{numbers, compress}{natbib}



\usepackage[final]{neurips_2019}


\usepackage[utf8]{inputenc} 
\usepackage[T1]{fontenc}    
\usepackage{hyperref}       
\usepackage{url}            
\usepackage{booktabs}       
\usepackage{amsfonts}       
\usepackage{nicefrac}       
\usepackage{microtype}      
\usepackage{amsmath,amsthm,graphicx,enumitem,natbib}
\usepackage{caption}
\usepackage{subcaption}

\usepackage{ifthen}
\newboolean{full}
\setboolean{full}{true}

\numberwithin{equation}{section}
\newtheorem{theorem}{Theorem}[section]
\newtheorem{lemma}{Lemma}[section]

\newtheorem{proposition}{Proposition}[section]
\theoremstyle{remark}
\newtheorem{remark}{Remark}[section]
\title{Nonzero-sum Adversarial Hypothesis Testing Games}

%

 \author{
  Sarath Yasodharan 
  \\
   Department of Electrical Communication Engineering\\
   Indian Institute of Science\\
   Bangalore 560~012, India  \\
   \texttt{sarath@iisc.ac.in} \\
  \And
   Patrick Loiseau 
   \\
   Univ. Grenoble Alpes, Inria, CNRS, Grenoble INP, LIG \& MPI-SWS \\
   700 avenue Centrale\\
    Domaine Universitaire\\
    38400 St Martin d’H\'eres, France
    \\
   \texttt{patrick.loiseau@inria.fr} \\
}

\begin{document}

\maketitle

\begin{abstract}
We study nonzero-sum hypothesis testing games that arise in the context of adversarial classification, in both the Bayesian as well as the Neyman-Pearson frameworks. We first show that these games admit mixed strategy Nash equilibria, and then we examine some interesting concentration phenomena of these equilibria. Our main results are on the exponential rates of convergence of classification errors at equilibrium, which are analogous to the well-known Chernoff-Stein lemma and Chernoff information that describe the error exponents in the classical binary hypothesis testing problem, but with parameters derived from the adversarial model. The results are validated through numerical experiments.
\end{abstract}

\section{Introduction}
\label{section:introduction}

Classification is a simple but important task that has numerous applications in a variety of domains such as computer vision or security. A traditional assumption that is used in the design of classification algorithms is that the input data is generated without knowledge of the classifier being used, hence the data distribution is independent of the classification algorithm. This assumption is no longer valid in the presence of an adversary, as an adversarial agent can learn the classifier and deliberately alter the data such that the classifier makes an error. This is the case in particular in security applications where the classifier's goal is to detect the presence of an adversary from the data  it observes.

Adversarial classification has been studied in two main settings. 
The first focuses on adversarial versions of a standard classification task in machine learning, where the adversary attacks the classifier (defender/decision maker) by directly choosing vectors from a given set of data vectors; whereas the second focuses on adversarial hypothesis testing, where the adversary (attacker) gets to choose a distribution from a set of distributions and independent data samples are generated from this distribution. The main differences of the latter framework from the former are that: (i) the adversary only gets to choose a distribution (rather than the actual attack vector) and data is generated independently from this distribution, and (ii) the defender makes a decision only once after it observes a whole data sequence instead of making a decision for each individual data sample it receives. Both of these frameworks have applications in a variety of domains, but prior literature has mainly focused on the first setting; see Section~\ref{section:relatedwork} for a description of the related literature.

In this paper, we focus on the setting of adversarial hypothesis testing. To model the interaction between the attacker and defender, we formulate a nonzero-sum two-player game between the adversary and the classifier where the adversary picks a distribution from a given set of distributions, and data is generated independently from that distribution (a non-attacker always generates data from a fixed distribution). The defender on his side makes a decision based on observation of $n$ data points. Our model can also be viewed as a game-theoretic extension of the classical binary hypothesis testing problem where the distribution under the alternate hypothesis is chosen by an adversary. Based on our game model, we are then able to extend to the adversarial setting the main results of the classical hypothesis testing problem (see Section~\ref{section:classical-hypothesis-testing}) on the form of the best decision rule and on the rates of decrease of classification errors. More specifically, our contributions can be summarized as follows: \vspace{-1mm}
\begin{enumerate}
\item We propose nonzero-sum games to model adversarial hypothesis testing problems in a flexible manner.\vspace{-1mm}

\item We show existence of mixed strategy Nash equilibria in which the defender employs certain likelihood ratio tests similar to that used in the classical binary hypothesis testing problem.\vspace{-1mm}

\item We show that the classification errors under all Nash equilibria for our hypothesis testing games decay exponentially fast in the number of data samples. We analytically obtain these error exponents, and it turns out that they are same as those arising in certain classical hypothesis testing problem, with parameters derived from the adversarial model.\vspace{-1mm}

\item We illustrate the results, in particular the importance of some assumptions, using simulations.\vspace{-2mm}

\end{enumerate}
Throughout our analysis, an important difficulty lies in that the strategy spaces of both the players are uncountable; we believe, however, that it is an important feature of the model to be realistic.

\subsection{Related Work}
\label{section:relatedwork}

Adversarial classification and the security of machine learning have been studied extensively in the past decade, see e.g., \cite{dalvi-etal-04,Lowd05a,Barreno10a,Huang11a,Li15a,Papernot18a}; here we focus only on game-theoretic approaches to tackle the problem. Note that, besides the adversarial learning problem, game theory has been successfully used to tackle several other security problems such as allocation of monitoring resources to protect targets, see e.g., \cite{chen-leneutre-14,korzhyk-etal-11}. We review here only papers relating to classification. 


A number of game-theoretic models have appeared in the past decade to study the adversarial classification problem in the classical setting of classification tasks in machine learning. \cite{dalvi-etal-04} studies the best response in an adversarial classification game, where the adversary is allowed to alter training data. A number of zero-sum game models were also proposed where the attacker is restricted on the amount of modifications he can do to the training set, see~\cite{Kantarcioglu11a,Zhou12a,Zhou14a}. \cite{bruckner-etal-12} studies the problem of choosing the best linear classifier in the presence of an adversary (a similar model is also studied in \cite{Bruckner11a}) using a nonzero-sum game, and shows the existence of a unique pure strategy Nash equilibrium. Similar to our formulation, the strategy sets in this case are uncountable, and therefore showing the existence and uniqueness of Nash equilibrium needs some work. However, in our formulation, there may not always exist a Nash equilibrium in pure strategies, which makes the subsequent analysis of error exponents  more difficult. \cite{lisy-etal-14} studies an adversarial classification game where the utilities of the players are defined by using ROC curves. The authors study Nash equilibria for their model and provide numerical discretization techniques to compute the equilibria.
\cite{dritsoula-etal-17} studies a nonzero-sum adversarial classification game where the defender has no restriction on the classifier, but the attacker is limited to a finite set of vectors. The authors show that the defender can, at equilibrium, use only a small subset of ``threshold classifiers'' and characterize the equilibrium through linear programming techniques. 
In our model, the utility functions share similarities with that of~\cite{dritsoula-etal-17}, but we work in the hypothesis testing framework and with uncountable action sets, which completely modifies the analysis. 
Several studies appeared recently on ``strategic classification'', where the objective of the attacker(s) is to improve the classification outcome in his own direction, see~\cite{Hardt16a,Dong18a}. 


On the other hand, adversarial hypothesis testing has been studied by far fewer authors. \cite{barni-tondi-13} studies a source identification game in the presence of an adversary, where the classifier needs to distinguish between two source distributions $P_0$ and $P_1$ in which the adversary can corrupt samples from $P_0$ before it reaches the classifier. They show that the game has an asymptotic Nash equilibrium when the number of samples becomes large, and compute the error exponent associated with the false negative probability. \cite{barni-tondi-14} and \cite{tondi-etal-18} study further extensions of this framework.

A (non game-theoretic) hypothesis testing problem in an adversarial setting has been studied by~\cite{brandao-etal-14}, which is the closest to our work. Here, there are two sets of probability distributions and nature outputs a fixed number of independent samples generated by using distributions from either one of these two sets. The goal of the classifier is to detect the true state of nature. The authors derive error exponents associated with the classification error, in both Bayesian and Neyman-Pearson frameworks using a worst-case maxmin analysis. Although we restrict to i.i.d. samples and let the non-attacker play a single distribution, we believe that our nonzero-sum game model with flexible utilities can better capture the interaction between adversary and classifier. 
There also exists extensive prior work within the statistics literature \cite{ingster-suslina} on minimax hypothesis testing, which relates to our paper, but we defer a discussion of how our work differs from it to after we have exposed the details of our model.

Game-theoretic models were also used to study adversarial classification in a sequential setting, see~\cite{soper-musacchio-14,bao-etal-11,lye-wing-05}, but with very different techniques and results. 


\section{Basic Setup and Hypothesis Testing Background} 
\label{section:classical-hypothesis-testing}

In this section, we present the basic setup results in classical binary hypothesis testing. 

Throughout the paper, we consider an alphabet set $\mathcal{X}$ that we assume finite. In a classical hypothesis testing problem, we are given two distribution $p$ and $q$, and a realization of a sequence of independent and identically distributed random variables $X_1, \dots, X_n$, which are distributed as either $p$ (under hypothesis $H_0$) or $q$ (under hypothesis $H_1$). Our goal is to distinguish between the two alternatives:
\begin{align*}
H_0:  X_1, X_2, \dots, X_n \text{ i.i.d.}\sim p \quad
\text{versus}
\quad
H_1:  X_1, X_2, \dots, X_n  \text{ i.i.d.}\sim q.
\end{align*}
In this setting, we could make two possible types of errors: (i) we declare $H_1$, whereas the true state of nature is $H_0$ (Type I error, or false alarm), and (ii) we declare $H_0$ whereas the true state of nature is $H_1$ (Type II error, or missed detection). Note that one can make one of these errors arbitrarily small at the expense of the other by always declaring $H_0$ or $H_1$. 

The trade-off between the two types of errors can be captured  using two frameworks. If we have knowledge on the prior probabilities of the two hypotheses, then we can seek a decision rule that minimizes the average probability of error (this is the Bayesian framework). On the other hand, if we do not have any information on the prior probabilities, then we can fix  $\varepsilon > 0$ and seek a decision rule that minimizes the Type II error among all decision rules whose Type I error is at most $\varepsilon$ (this is the Neyman-Pearson framework). In both of these frameworks, it can be shown that the optimal test is a likelihood ratio test, i.e., given $\mathbf{x}^n = (x_1, \dots, x_n)$ we compute the likelihood ratio $\frac{q(\mathbf{x}^n)}{p(\mathbf{x}^n)}$ and compare it to a threshold to  make a decision (with possible randomization at the boundary in the Neyman-Pearson framework). Here, $p(\mathbf{x}^n)$ (resp.~$q(\mathbf{x}^n$)) denotes the probability of observing the $n$-length word $\mathbf{x}^n$ under the distribution $p$ (resp.~$q$). See Section II.B and II.D in~\cite{poor} for an introduction to hypothesis testing.

For large enough $n$, by the law of large numbers, the fraction of $i$ in an observation $\mathbf{x}^n$ is very close to $p(i)$ (resp. $q(i)$) under $H_0$ (resp.~under $H_1$), for each $i \in \mathcal{X}$. Therefore, one anticipates that the probability of correct decision is very close to $1$ for large enough $n$. Hence, one can study the rate at which the errors go to $0$ as $n$ becomes large. It is shown that, under both frameworks, the error decays exponentially in $n$. In the Bayesian framework, the error exponent associated with the average probability of error is $-\Lambda^*_0(0)$, where $\Lambda^*_0(\cdot)$ is the Fenchel-Legendre transform of the log-moment generating function of the random variable $\frac{q(X)}{p(X)}$ under $H_0$, i.e., when $X\sim p$. In the Neyman-Pearson case, the error exponent associated with the Type II error is $-D(p||q)$ where $D$ is the relative entropy functional. The above error exponents are known as Chernoff information and Chernoff-Stein lemma, respectively (see Section~3.4 in~\cite{dembo-zeitouni} for the analysis on error exponents).

In this work, we propose extensions of the classical hypothesis testing framework to an adversarial scenario modeled as a game, both in the Bayesian and in the Neyman-Pearson frameworks; and we investigate how the corresponding results are modified. Due to space constraints, we present only the model and results for the Bayesian framework in the main body of the paper. The corresponding analysis for the Neyman-Pearson framework follows similar ideas and is relegated to \ifthenelse{\boolean{full}}{Appendix~\ref{section:neyman-pearson}}{Appendix~A of the full version of this paper \cite{Sarath19aFull}}. The proofs of all results presented in the paper (and in \ifthenelse{\boolean{full}}{Appendix~\ref{section:neyman-pearson}}{Appendix~A of the the full version \cite{Sarath19aFull}}) can be found in \ifthenelse{\boolean{full}}{Appendix~\ref{section:proofs}}{Appendix~B of the full version \cite{Sarath19aFull}}.

\section{Hypothesis Testing Game in the Bayesian Framework}
\label{section:bayesian}

In this section, we formulate a one-shot adversarial hypothesis testing game in the Bayesian framework, motivated by security problems where there might be an attacker who modifies the data distribution and a defender who tries to detect the presence of the attacker. Game theoretic modelling of such problems has found great success in understanding the behavior of the agents via equilibrium analysis in many applications, see Section~\ref{section:relatedwork}. We first present the model and then elaborate on its motivations and on how it relates to related works in statistics. 


\subsection{Problem Formulation}
Let $\mathcal{X} = \{0,1, \dots, d-1\}$ denote the alphabet set with cardinality $d$, and  let $M_1(\mathcal{X})$  denote the space of probability distributions on $\mathcal{X}$. Fix $n \geq 1$. 

The game is played as follows. There are two players: the external agent and the defender. The external agent can either be a non-attacker or an attacker. In the Bayesian framework, we assume that the external agent is an attacker with probability $\theta$, and a non-attacker (normal user) with probability $1-\theta$. The non-attacker is not strategic and she does not have any adversarial objective. If the external agent is a non-attacker, she generates $n$ samples independently from the distribution $p$. If the external agent is an attacker, she picks a distribution $q$ from a set of distributions $Q \subseteq M_1(\mathcal{X})$ and generates $n$ samples independently from $q$. The defender, upon observing the $n$-length word generated by the external agent, wants to detect the presence of the attacker.

Throughout the paper, a decision rule implemented by the defender is denoted by $\varphi : \mathcal{X}^n \to  [0,1]$, with the interpretation that $\varphi(\mathbf{x}^n)$ is the probability with which hypothesis $H_1$ is accepted (i.e., the presence of an adversary is declared) when the defender observes the $n$-length word $\mathbf{x}^n = (x_1, \dots, x_n)$. We say that a decision rule $\varphi$ is deterministic if $\varphi (\mathbf{x}^n)\in \{0, 1\}$ for all $\mathbf{x}^n \in \mathcal{X}^n$. 

To define the game, let the attacker's strategy set be $Q \subseteq M_1(\mathcal{X})$, and that of the defender be
\begin{equation*}
\Phi_n= \{\varphi:\mathcal{X}^n \to [0,1]\},
\end{equation*}
which is the set of all randomized decision rules on $n$-length words.

To define the utilities, consider the attacker first. We assume that there is a cost associated with choosing a distribution from $Q$ which we model using a cost function $c : Q \to \mathbb{R}_+$. The goal of the attacker is to fool the defender as much as possible, i.e., he wants to maximize the probability that the defender classifies an $n$-length word as coming from the non-attacker whereas it is actually being generated by the attacker. To capture this, the utility of the attacker when she plays the pure strategy $q \in Q$ and the defender plays the pure strategy $\varphi \in \Phi_n$ is defined as
\begin{align}
u^A_n(q,\varphi) = \sum_{\mathbf{x}^n} (1-\varphi(\mathbf{x}^n)) q(\mathbf{x}^n) - c(q),
\label{eqn:attacker_revenue_pure}
\end{align}
where $q(\mathbf{x}^n)$ denotes the probability of observing the $n$-length word $\mathbf{x}^n$ when the symbols are generated independently from the distribution $q$. 

For the defender, the goal is to minimize the classification error. Similar to the classical hypothesis testing problem, there could be two types of errors: (i) the external agent is actually a non-attacker whereas the defender declares that there is an attack (Type I error, or false alarm), and (ii) the external agent is an attacker whereas the defender declares that there is no attack (Type II error, or missed detection). The goal of the defender is to minimize a weighted sum of the above two types of errors. After suitable normalization, we define the utility of the defender as
\begin{align}
u^D_n(q,  \varphi) =  -\left(\sum_{\mathbf{x}^n} (1-\varphi(\mathbf{x}^n)) q(\mathbf{x}^n ) + \gamma \sum_{\mathbf{x}^n}\varphi(\mathbf{x}^n)p(\mathbf{x}^n) \right),
\label{eqn:defender_revenue_pure}
\end{align}
where $\gamma >0$ is a constant that captures the exogenous probability of attack (i.e., $\theta$), as well as the relative weights given to the error terms.

We denote our Bayesian hypothesis testing game with utility functions (\ref{eqn:attacker_revenue_pure}) and (\ref{eqn:defender_revenue_pure}) by $\mathcal{G}^B(d,n)$.
With a slight abuse of notation, we denote by $u^A_n(\sigma^A_n, \sigma^D_n)$ and $u^D_n(\sigma^A_n, \sigma^D_n)$, the utility of the players under a mixed strategy $(\sigma^A_n, \sigma^D_n)$, where $\sigma^A_n \in M_1(Q)$, and $\sigma^D_n \in M_1(\Phi_n)$.

For our analysis of game $\mathcal{G}^B(d,n)$, we will make use of the following assumptions:\vspace{-1mm}
\begin{enumerate}[label=({A\arabic*})]
\item $Q$ is a closed subset of $M_1(\mathcal{X})$, and $p \notin Q$.  \label{assm:a1}
\item $p(i) > 0$ for all $i \in \mathcal{X}$. Furthermore, for each $q \in Q$, $q(i) > 0$ for all $i \in \mathcal{X}$. \label{assm:a1b}
\item \label{assm:a2} $c$ is continuous on $Q$, and there exists a unique $q^* \in Q$ such that\vspace{-1mm}
\begin{equation*}
q^* = \arg\min_{q \in Q} c(q).
\end{equation*}\vspace{-3mm}
\item The point $p$ is distant from the set $Q$ relative to the point $q^*$, i.e.,
\begin{align*}%
\{\mu \in M_1(\mathcal{X}):D(\mu||p) \leq  D(\mu ||q^*)\} \cap Q = \emptyset,
\end{align*}\label{assm:a3}%
where $D(\mu || \nu) = \sum_{i \in \mathcal{X}} \mu(i) \log \frac{\mu(i)}{\nu(i)}, \mu, \nu \in M_1(\mathcal{X})$, denotes the relative entropy between the distributions $\mu $ and $\nu$.
\end{enumerate}
Note that~\ref{assm:a1} and \ref{assm:a1b} are very natural. In~\ref{assm:a1b}, if $p(i) = 0$ for some $i \in \mathcal{X}$ and $q(i) > 0$ for some $q \in Q$, then the adversary will never pick $q$, as the defender can easily detect the presence of the attacker by looking for element $i$. On the other hand, if $p(i) = 0$ and $q(i) = 0$ for all $q \in Q$, we may consider a new alphabet set without $i$. In~\ref{assm:a2}, continuity of the cost function $c$ is natural and we do not assume any extra condition other than the requirement that there is a unique minimizer.
Assumption~\ref{assm:a3} is used to show certain property of the equilibrium of the defender, which is later used in the study of error exponents associated with classification error. Specifically, Assumption~\ref{assm:a3} is used in the proofs of Lemma~\ref{lemma:errorgoesto0}, Lemma~\ref{lemma:eqconcentration} and  Theorem~\ref{theorem:errorexponent};  all other results are valid without this assumption. We will further discuss the role of \ref{assm:a2} and \ref{assm:a3} in Section~\ref{section:error_exp} after Theorem~\ref{theorem:errorexponent}.

\subsection{Model discussion}

Our setting is that of adversarial hypothesis testing, where the attacker chooses a distribution and points are then generated i.i.d. according to it. This is a reasonable model in applications such as multimedia forensics (where one tries to determine if an attacker has tampered with an image from signals that can be modeled as random variables following an image-dependent distribution) or biometrics (where again one tries to detect from random signals whether the perceived signals do come from the characteristics of a given individual or they come from tampered characteristics)---see more details about these applications in \cite{barni-tondi-13,barni-tondi-14,tondi-etal-18}. In such applications, it is reasonable that different ways of tampering have different costs for the attacker and that one can estimate those costs for a given application at least to some extent. Modeling the attacker's utility via a cost function is classical in other settings, for instance in adversarial classification \cite{dritsoula-etal-17,soper-musacchio-14,bruckner-etal-12} and experiments with real-world applications where a reasonable cost function can be estimated has been done, for instance, in~\cite{bruckner-etal-12}.



Our setting is very similar to that of a composite hypothesis testing framework where nature picks a distribution from a given set and generates independent samples from it. However, in such problems, one does not model a utility function for the nature/statistician and one is often interested in existence and properties of uniformly most powerful test or locally most powerful test (depending on the Bayesian or frequentist approach; see Section~II.E~in~\cite{poor}). In contrast, here, we specifically model the utility functions for the agents and investigate the behavior at Nash equilibrium using very different analysis, which is more natural in adversarial settings where two rational agents interact.  

Our setting also coincides with the well-studied setting of minimax testing~\cite{ingster-suslina} when $c(q)=0$ for all $q \in Q$ (and hence every $q$ is a minimizer of $c$). Note, however, that this case is not included in our model due to Assumption~\ref{assm:a2}---rather we study the opposite extreme where $c$ has a unique minimizer. Our results are not an easy extension of the classical results because our game is now a nonzero-sum game (whereas the minimax setting corresponds to a zero-sum game). We can therefore not inherit any of the nice properties of zero-sum games; in particular we cannot compute the NE and we instead have to prove properties of the NE (e.g., concentration) without being able to explicitly compute it. In fact, our results too are quite different since we show that the error rate is the same as a simple test where $H_1$ would contain only $q^*$, which is different from the classical minimax case.



Finally, in our model we fix the sample size $n$, i.e., the defender makes a decision only after observing all $n$ samples. We restrict to this simpler setting since it has applications in various domains (see~Section~\ref{section:relatedwork}), and understanding the equilibrium of such games leads to interesting and non-trivial results. We leave the study of a sequential model where the defender has the flexibility to choose the number of samples for decision making as future work.


\section{Main Results}


\subsection{Mixed Strategy Nash Equilibrium for $\mathcal{G}^B(d,n)$}

We first examine the Nash equilibrium for $\mathcal{G}^B(d,n)$. Note that the strategy sets of both the attacker and the defender are uncountable, hence it is a priori not clear whether our game has a Nash equilibrium.

Towards this, we equip the set $\Phi_n$ of all randomized decision rules with the sup-norm metric, i.e.,
\begin{align*}
d_n(\varphi_1,\varphi_2) = \max_{\mathbf{x}^n \in \mathcal{X}^n} |\varphi_1(\mathbf{x}^n) -  \varphi_2(\mathbf{x}^n)|,
\end{align*}
for $\varphi_1, \varphi_2 \in \Phi_n$. It is easy to see that the set $\Phi_n$ endowed with the above metric is a compact metric space.
We also equip $M_1(\mathcal{X})$ with the usual Euclidean topology on $\mathbb{R}^d$, and equip $Q$ with the subspace topology. Also, for studying the mixed extension of the game, we equip the spaces $M_1(Q)$ and $M_1(\Phi_n)$ with their corresponding weak topologies. Product spaces are always equipped with the corresponding product topology.

We begin with a simple continuity property of the utility functions. 
\begin{lemma}
Assume~\ref{assm:a1}-\ref{assm:a2}. Then, the utility functions $u^A_n$ and $u^D_n$ are continuous on $Q \times \Phi_n$.
\label{lemma:cts-payoffs}
\end{lemma}

We now show the main result of this subsection, namely existence and partial characterization of a NE for our hypothesis testing game.
\begin{proposition}
Assume~\ref{assm:a1}-\ref{assm:a2}. Then, there exists a mixed strategy Nash equilibrium for $\mathcal{G}^B(d,n)$. If $(\hat{\sigma}^A_n, \hat{\sigma}^D_n)$ is a NE, then so is $(\hat{\sigma}^A_n, \hat{\varphi}_n)$ where $\hat{\varphi}_n$ is the likelihood ratio test given by
\begin{align}
\hat{\varphi}_n(\mathbf{x}^n) = \left\{
\begin{array}{lll}
1, & \!\!\text{if } q_{\hat{\sigma}^A_n}(\mathbf{x}^n) - \gamma p(\mathbf{x}^n) & \!\!> 0, \\
\varphi_{\hat{\sigma}^D_n}, & \!\!\text{if } q_{\hat{\sigma}^A_n}(\mathbf{x}^n) - \gamma p(\mathbf{x}^n) & \!\!= 0, \\
0, & \!\!\text{if } q_{\hat{\sigma}^A_n}(\mathbf{x}^n) - \gamma p(\mathbf{x}^n) & \!\!< 0,
\end{array}
\right.
\end{align}\label{prop:defenderpure}%
where $q_{\hat{\sigma}^A_n}(\mathbf{x}^n) = \int q(\mathbf{x}^n) \hat{\sigma}^A_n(dq)$, and $ \varphi_{\hat{\sigma}^D_n} = \int \varphi(\mathbf{x}^n)\hat{\sigma}^{D}_n(d\varphi)$.
\end{proposition}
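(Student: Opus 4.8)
The plan is to obtain existence from a compactness-plus-continuity argument and then to derive the likelihood-ratio characterization by analyzing the defender's best-response correspondence. For existence I would invoke Glicksberg's theorem, the extension of Nash's theorem to games with compact metric strategy spaces and jointly continuous payoffs. Its hypotheses are readily verified here: by Assumption~\ref{assm:a1}, $Q$ is a closed subset of the compact simplex $M_1(\mathcal{X})$ and is therefore compact; the defender's set $\Phi_n$ is identified with $[0,1]^{\mathcal{X}^n}$, which is compact in the sup-norm metric as already noted; and $u^A_n, u^D_n$ are continuous on $Q\times\Phi_n$ by Lemma~\ref{lemma:cts-payoffs}. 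Glicksberg's theorem then yields a mixed strategy NE $(\hat{\sigma}^A_n,\hat{\sigma}^D_n)\in M_1(Q)\times M_1(\Phi_n)$.

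For the characterization, the first observation is that both payoffs depend on the defender's mixed strategy only through its \emph{average decision rule} $\varphi_{\sigma^D_n}(\mathbf{x}^n)=\int\varphi(\mathbf{x}^n)\sigma^D_n(d\varphi)$. Indeed, by linearity of the payoffs in $\varphi$ we have $u^A_n(q,\sigma^D_n)=\sum_{\mathbf{x}^n}(1-\varphi_{\sigma^D_n}(\mathbf{x}^n))q(\mathbf{x}^n)-c(q)$ and the analogous identity for $u^D_n$. Hence it suffices to show $\hat{\varphi}_n\equiv\varphi_{\hat{\sigma}^D_n}$ on $\mathcal{X}^n$: if the pure strategy $\hat{\varphi}_n$ induces the same average decision rule as $\hat{\sigma}^D_n$, then replacing $\hat{\sigma}^D_n$ by $\hat{\varphi}_n$ leaves every player's payoff unchanged against every deviation, so the equilibrium property is inherited verbatim.

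To identify $\varphi_{\hat{\sigma}^D_n}$ I would solve the defender's best-response problem against the fixed strategy $\hat{\sigma}^A_n$. A short computation gives $u^D_n(\hat{\sigma}^A_n,\varphi)=-1+\sum_{\mathbf{x}^n}\varphi(\mathbf{x}^n)\bigl(q_{\hat{\sigma}^A_n}(\mathbf{x}^n)-\gamma p(\mathbf{x}^n)\bigr)$, using that $\sum_{\mathbf{x}^n}q_{\hat{\sigma}^A_n}(\mathbf{x}^n)=1$. Since the sum decouples across words and each $\varphi(\mathbf{x}^n)$ ranges freely over $[0,1]$, the maximizers are exactly those $\varphi$ equal to $1$ where $q_{\hat{\sigma}^A_n}(\mathbf{x}^n)-\gamma p(\mathbf{x}^n)>0$, equal to $0$ where it is negative, and arbitrary where it vanishes---precisely the likelihood-ratio structure of $\hat{\varphi}_n$. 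The crux is the standard fact that a best-responding mixed strategy must be supported on pure best responses: because $\hat{\sigma}^D_n$ maximizes the linear functional $\varphi\mapsto u^D_n(\hat{\sigma}^A_n,\varphi)$ integrated against $\hat{\sigma}^D_n$, the pointwise optimality conditions hold $\hat{\sigma}^D_n$-almost everywhere, and these values survive the averaging defining $\varphi_{\hat{\sigma}^D_n}$. Thus $\varphi_{\hat{\sigma}^D_n}$ equals $1$ (resp.\ $0$) on the strict-positivity (resp.\ strict-negativity) region, while on the tie set $\{q_{\hat{\sigma}^A_n}=\gamma p\}$ the rule $\hat{\varphi}_n$ is \emph{defined} to equal $\varphi_{\hat{\sigma}^D_n}$. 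Hence $\hat{\varphi}_n\equiv\varphi_{\hat{\sigma}^D_n}$, as required. I expect the main obstacle to be making this ``supported on best responses'' step fully rigorous---deducing the almost-everywhere pointwise optimality from optimality of $\hat{\sigma}^D_n$ and the continuity of $u^D_n$, so that it is preserved under the integral defining the average decision rule.
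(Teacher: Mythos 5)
Your proposal is correct and follows essentially the same route as the paper: existence via Glicksberg's theorem with the compactness and continuity facts already established, and the characterization via the expansion $u^D_n(\hat{\sigma}^A_n,\varphi)=-1+\sum_{\mathbf{x}^n}\varphi(\mathbf{x}^n)\bigl(q_{\hat{\sigma}^A_n}(\mathbf{x}^n)-\gamma p(\mathbf{x}^n)\bigr)$ combined with the fact that an equilibrium mixed strategy is concentrated on pure best responses, which forces the $1$/$0$ values off the tie set and makes $\hat{\varphi}_n$ coincide with the average rule $\varphi_{\hat{\sigma}^D_n}$. The only (harmless) organizational difference is that you verify the equilibrium property of $(\hat{\sigma}^A_n,\hat{\varphi}_n)$ by showing $\hat{\varphi}_n\equiv\varphi_{\hat{\sigma}^D_n}$ and inheriting payoffs, whereas the paper checks the two best-response inequalities directly; the step you flag as the main obstacle is exactly what the paper handles by citing the standard support characterization of Nash equilibria.
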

The existence of a NE follows from Glicksberg's fixed point theorem (see e.g.,~Corollary~2.4~in~\cite{reny-05}); for the form of the defender's equilibrium strategy, we have to examine the utility function $u^D_n$. 
\begin{remark}
Note that we have considered randomization over $\Phi_n$ to show existence of a NE. Once this is established, we can then show the form of the strategy of the defender $\hat{\varphi}_n$ at equilibrium; the existence of a NE is not clear if we do not consider randomization over $\Phi_n$.
\end{remark}
\begin{remark}Note that the distribution $q_{\hat{\sigma}^A_n}$ on $\mathcal{X}^n$ cannot necessarily be written as an $n$-fold product distribution of some element from $M_1(\mathcal{X})$. Therefore, the test $\hat{\varphi}_n$ is slightly different from the usual likelihood ratio test that appears in the classical hypothesis testing problem where samples are generated independently.
\end{remark}
\begin{remark}
Apart from the conditions of the above proposition, a sufficient condition for existence of pure strategy Nash equilibrium is that the utilities are individually quasiconcave, i.e., $u^A_n(\cdot, \varphi)$ is quasiconcave for all $\varphi \in \Phi_n$, and $u^D_n(q,\cdot)$ is quasiconcave for all $q \in Q$. However, it is easy to check that the Type II error term is not quasiconcave in the attacker's strategy, and hence the utility of the attacker is not quasiconcave. Hence, a pure strategy Nash equilibrium is not guaranteed to exist---see numerical experiments in \ifthenelse{\boolean{full}}{Appendix~\ref{section:neyman-pearson}}{Appendix~C of the full version of this paper \cite{Sarath19aFull}}.
\end{remark}
\begin{remark}
Proposition~\ref{prop:defenderpure} does not provide any information about the structure of the attacker's strategy at a NE. We believe that obtaining the complete structure of a NE and computing it is a difficult problem in general because the strategy spaces of both players are uncountable (and there is no pure-strategy NE in general), and we cannot use the standard techniques for finite games. However, we emphasize that we are able to obtain error exponents at an equilibrium (see Theorem~\ref{theorem:errorexponent}) without explicitly computing the structure of a NE. Also, one could study Stackelberg equilibrium for our game $\mathcal{G}^B(d, n)$ to help solve computational issues, although we note that most of the security games literature using Stackelberg games assumes finite action spaces~(see, for example, \cite{korzhyk-etal-11}); however we do not address the study of Stackelberg equilibrium in this paper.
\end{remark}
\subsection{Concentration Properties of Equilibrium}
\label{section:eqconcentration}
We now study some concentration properties of the mixed strategy Nash equilibrium for the game $\mathcal{G}^B(d,n)$ for large $n$. The results in this section will be used later to show the exponential convergence of the classification error at equilibrium. 

Let $e_n$ denote the classification error, i.e., $e_n(q, \varphi) = -u^D_n(q,\varphi), q \in Q, \varphi \in \Phi_n$. We begin with the following lemma, which asserts that the error at equilibrium is small for large enough $n$.
\begin{lemma}
Assume~\ref{assm:a1}-\ref{assm:a2}. Let $(\hat{\sigma}^A_n, \hat{\sigma}^D_n)_{n \geq 1}$ be a sequence such that, for each $n \geq 1$,  $(\hat{\sigma}^A_n, \hat{\sigma}^D_n)$ is a mixed strategy Nash equilibrium for $\mathcal{G}^B(d,n)$. Then,  $e_n(\hat{\sigma}^A_n, \hat{\sigma}^D_n) \to 0$ as  $n \to \infty$.
\label{lemma:errorbound}
\end{lemma}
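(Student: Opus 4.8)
The plan is to exploit the defender's best-response condition at a Nash equilibrium to reduce the claim to exhibiting a single explicit test whose error vanishes. Writing $e_n = -u^D_n$, note that the extension of $u^D_n$ to mixed strategies is linear in each argument, so for any pure test $\tilde\varphi_n \in \Phi_n$ one has $e_n(\hat\sigma^A_n, \tilde\varphi_n) = \int_Q e_n(q, \tilde\varphi_n)\,\hat\sigma^A_n(dq) \le \sup_{q\in Q} e_n(q,\tilde\varphi_n)$. Since $(\hat\sigma^A_n,\hat\sigma^D_n)$ is a NE, the defender's equilibrium payoff dominates that of deviating to the pure strategy $\tilde\varphi_n$, i.e.\ $u^D_n(\hat\sigma^A_n,\hat\sigma^D_n) \ge u^D_n(\hat\sigma^A_n,\tilde\varphi_n)$, which is exactly $e_n(\hat\sigma^A_n,\hat\sigma^D_n) \le e_n(\hat\sigma^A_n,\tilde\varphi_n)$. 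As $e_n \ge 0$, it thus suffices to construct tests $\tilde\varphi_n$ with $\sup_{q\in Q} e_n(q,\tilde\varphi_n) \to 0$.

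For such a test I would use the empirical type. Let $L_n(\mathbf{x}^n)$ be the empirical distribution of $\mathbf{x}^n$. By~\ref{assm:a1} the set $Q$ is closed and $p\notin Q$; since $M_1(\mathcal{X})$ is compact ($\mathcal{X}$ finite), this yields a strictly positive separation $\delta := \inf_{q\in Q}\|p-q\| > 0$. Fix the ball $B = \{\mu : \|\mu - p\| < \delta/2\}$ and set the deterministic rule $\tilde\varphi_n(\mathbf{x}^n) = \mathbf{1}\{L_n(\mathbf{x}^n) \notin B\}$, i.e.\ the defender declares an attack exactly when the observed type is far from $p$. Then $e_n(q,\tilde\varphi_n) = P_q(L_n \in B) + \gamma\, P_p(L_n \notin B)$, so both terms are probabilities of large-deviation events: under $H_1$ the type must land near $p$ although the data come from $q$, and under $H_0$ it must land far from $p$.

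Both probabilities are controlled by the method of types. For the type~II term, every $\mu \in B$ satisfies $\|\mu - q\| \ge \delta/2$ for all $q\in Q$ by the triangle inequality, and Pinsker's inequality converts this separation into a uniform gap $\inf_{q\in Q}\inf_{\mu\in \bar B} D(\mu\|q) \ge \kappa > 0$. Using $q^n(T(\mu)) \le e^{-nD(\mu\|q)}$ and the polynomial bound $(n+1)^d$ on the number of types gives $\sup_{q\in Q} P_q(L_n\in B) \le (n+1)^d e^{-n\kappa} \to 0$. The same estimate applied on $B^c$, together with $\inf_{\mu\notin B} D(\mu\|p) > 0$ (again via Pinsker), gives $P_p(L_n\notin B) \le (n+1)^d e^{-n\kappa'} \to 0$. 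Hence $\sup_{q\in Q} e_n(q,\tilde\varphi_n)\to 0$, and combining with the first paragraph proves the lemma.

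The step I expect to be the main obstacle is obtaining the decay \emph{uniformly over $Q$}. The attacker's equilibrium mixed strategy $\hat\sigma^A_n$ is an arbitrary measure on $Q$, so $q_{\hat\sigma^A_n}$ need not be a product distribution and a Sanov estimate for a single fixed $q$ is insufficient; what is needed is that the relative-entropy gap between $B$ and $Q$ be bounded below by one constant $\kappa$ simultaneously for all $q$, which is precisely what closedness of $Q$ (for the positive separation $\delta$) and Pinsker's inequality (to turn separation into an entropy gap) deliver. Note that the cost function and~\ref{assm:a2} play no role in the argument beyond guaranteeing, through Proposition~\ref{prop:defenderpure}, that the equilibria in the hypothesis exist.
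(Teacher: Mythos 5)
Your proof is correct and follows essentially the same route as the paper's: deviate the defender to a deterministic type-based test that accepts $H_0$ on a small ball around $p$, invoke the Nash equilibrium inequality $e_n(\hat\sigma^A_n,\hat\sigma^D_n)\le e_n(\hat\sigma^A_n,\tilde\varphi_n)$, and kill the Type~II term uniformly over $Q$ via the method of types. The only (cosmetic) differences are that the paper handles the Type~I term by the weak law of large numbers rather than a second types bound, and that your explicit use of Pinsker's inequality to certify the uniform entropy gap is in fact slightly more careful than the paper's bare assertion that $\inf_{\nu\in B(p,\delta),\,q\in Q}D(\nu\|q)>0$.
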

The main idea in the proof is to let the defender play a decision rule whose acceptance set is a small neighborhood around the point $p$, and then bound $e_n(\hat{\sigma}^A_n, \hat{\sigma}^D_n)$ using the  error of the above strategy. 

We now show that the mixed strategy profile of the attacker $\hat{\sigma}^A_n$ converges weakly to the point mass at $q^*$ (denoted by $\delta_{q^*}$) as $n \to \infty$. This is a consequence of the fact that $q^*$ is the minimizer of $c$, and hence for large enough $n$, the attacker does not gain much by deviating from the point $q^*$.
\begin{lemma}Assume~\ref{assm:a1}-\ref{assm:a2}, and let $(\hat{\sigma}_n^A, \hat{\sigma}_n^D)_{n \geq 1}$ be as in Lemma~\ref{lemma:errorbound}. Then, $\hat{\sigma}^A_n \to \delta_{q^*}$ weakly as $n \to \infty$.  \label{lemma:weakconv}
\end{lemma}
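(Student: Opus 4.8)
The plan is to combine the attacker's best-response condition with the vanishing of the equilibrium error from Lemma~\ref{lemma:errorbound} to force the attacker's expected cost down to $\min_{q\in Q}c(q)=c(q^*)$, and then to upgrade this into weak convergence to $\delta_{q^*}$ via compactness and the uniqueness of the minimizer. First I would introduce the shorthand $T_n(q,\hat{\sigma}_n^D)=\sum_{\mathbf{x}^n}(1-\varphi_{\hat{\sigma}_n^D}(\mathbf{x}^n))q(\mathbf{x}^n)\in[0,1]$ for the expected Type II error when the attacker plays $q$ against the defender's mixed strategy $\hat{\sigma}_n^D$, so that the attacker's expected equilibrium payoff decomposes as
\begin{align*}
u_n^A(\hat{\sigma}_n^A,\hat{\sigma}_n^D)=\int T_n(q,\hat{\sigma}_n^D)\,\hat{\sigma}_n^A(dq)-\int c(q)\,\hat{\sigma}_n^A(dq).
\end{align*}
Since $e_n=-u_n^D$ is the sum of the (nonnegative) expected Type II and Type I error terms, Lemma~\ref{lemma:errorbound} gives $e_n(\hat{\sigma}_n^A,\hat{\sigma}_n^D)\to 0$, whence in particular the first integral above tends to $0$.

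Next I would exploit the equilibrium condition. Because the attacker plays a best response, deviating to the pure strategy $q^*$ cannot improve her payoff, so
\begin{align*}
u_n^A(\hat{\sigma}_n^A,\hat{\sigma}_n^D)\geq u_n^A(q^*,\hat{\sigma}_n^D)=T_n(q^*,\hat{\sigma}_n^D)-c(q^*)\geq -c(q^*),
\end{align*}
using $T_n\geq 0$. Substituting the decomposition and the fact that the Type II integral vanishes, this rearranges to $\int c(q)\,\hat{\sigma}_n^A(dq)\leq c(q^*)+o(1)$. On the other hand, by \ref{assm:a2} the point $q^*$ minimizes $c$, so $\int c(q)\,\hat{\sigma}_n^A(dq)\geq c(q^*)$ for every $n$. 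Combining the two bounds yields $\int c\,d\hat{\sigma}_n^A\to c(q^*)=\min_{q\in Q}c(q)$.

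Finally I would convert convergence of the expected cost into weak convergence of the measures. Since $\mathcal{X}$ is finite, $M_1(\mathcal{X})$ is compact, and by \ref{assm:a1} the closed set $Q$ is compact; hence $M_1(Q)$ is compact in the weak topology (Prokhorov), so $(\hat{\sigma}_n^A)$ admits weakly convergent subsequences. Let $\sigma$ be any subsequential limit. As $c$ is continuous and bounded on the compact set $Q$ by \ref{assm:a2}, weak convergence gives $\int c\,d\sigma=c(q^*)$, i.e.\ $\int(c(q)-c(q^*))\,\sigma(dq)=0$. The integrand is nonnegative and vanishes only at the unique minimizer $q^*$, so $\sigma$ is concentrated on $\{q^*\}$, that is $\sigma=\delta_{q^*}$. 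Since every subsequential limit equals $\delta_{q^*}$, compactness of $M_1(Q)$ forces the whole sequence to converge, giving $\hat{\sigma}_n^A\to\delta_{q^*}$ weakly.

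I expect the main obstacle to be precisely this last step: passing from convergence of the expected cost to weak convergence of the measures genuinely requires both the compactness of $M_1(Q)$ and the uniqueness of the minimizer in \ref{assm:a2} (without uniqueness one would obtain only concentration on the set of minimizers). The earlier steps are essentially bookkeeping once one observes that $e_n\to 0$ controls the Type II term that appears in the attacker's payoff, and that a Dirac deviation to $q^*$ provides the matching lower bound $-c(q^*)$.
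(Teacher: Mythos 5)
Your proposal is correct and follows essentially the same route as the paper's proof: deviate to the pure strategy $q^*$ to lower-bound the attacker's equilibrium utility, use $e_n \to 0$ from Lemma~\ref{lemma:errorbound} to kill the Type II term and conclude $\int c\,d\hat{\sigma}_n^A \to c(q^*)$, then apply Prohorov's theorem and uniqueness of the minimizer to identify every subsequential limit as $\delta_{q^*}$. The only cosmetic difference is at the last step, where the paper argues by contradiction (finding $\delta>0$ with $c > c(q^*)+\delta$ off a ball around $q^*$) while you note directly that a nonnegative continuous integrand with zero integral must vanish $\sigma$-a.e.; these are interchangeable, and your version also makes explicit the bookkeeping the paper leaves implicit.
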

Note that it is not clear from the above lemma that the equilibrium strategy of the attacker $\hat{\sigma}^A_n$ is supported on a small neighborhood around $q^*$ for large enough $n$. By playing a strategy $q$ that is far from $q^*$ we could still have $u_n^A(q, \hat{\sigma}^D_n) = u^A_n(\hat{\sigma}^A_n, \hat{\sigma}^D_n)$, since the error term in $u^A_n$ could compensate for the possible loss of utility from the cost term. We now proceed to show that this cannot happen under Assumption~\ref{assm:a3}. We first argue that the equilibrium error is small even when the attacker deviates from her equilibrium strategy.
\begin{lemma}\label{lemma:errorgoesto0}%
Assume~\ref{assm:a1}-\ref{assm:a3}, and 
let $(\hat{\sigma}^A_n, \hat{\sigma}^D_n)_{n \geq 1}$ be as in Lemma~\ref{lemma:errorbound}. Then,
\begin{align*}
\sup_{q \in Q} e_n(q, \hat{\sigma}^D_n) \to 0 \text{ as } n \to \infty.
\end{align*}\vspace{-3mm}
\end{lemma}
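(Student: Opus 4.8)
The plan is to reduce the statement to a uniform large-deviations bound on the Type~II error of the defender's likelihood ratio test. Because $e_n(q,\cdot)=-u^D_n(q,\cdot)$ is affine in the decision rule, $e_n(q,\hat{\sigma}^D_n)=e_n(q,\bar{\varphi}_n)$ where $\bar{\varphi}_n(\mathbf{x}^n)=\int\varphi(\mathbf{x}^n)\hat{\sigma}^D_n(d\varphi)$, and by Proposition~\ref{prop:defenderpure} this barycenter is the likelihood ratio test $\hat{\varphi}_n$; hence I may assume the defender plays $\hat{\varphi}_n$. Write $e_n(q,\hat{\varphi}_n)=\beta_n(q)+\gamma\,\alpha_n$ with Type~II term $\beta_n(q)=\sum_{\mathbf{x}^n}(1-\hat{\varphi}_n(\mathbf{x}^n))q(\mathbf{x}^n)$ and $q$-independent Type~I term $\alpha_n=\sum_{\mathbf{x}^n}\hat{\varphi}_n(\mathbf{x}^n)p(\mathbf{x}^n)$. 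Since $e_n(\hat{\sigma}^A_n,\hat{\varphi}_n)\to 0$ by Lemma~\ref{lemma:errorbound} and both summands are nonnegative, $\alpha_n\to 0$, so it suffices to prove $\sup_{q\in Q}\beta_n(q)\to 0$. As $\hat{\varphi}_n=0$ exactly on $\{q_{\hat{\sigma}^A_n}(\mathbf{x}^n)<\gamma p(\mathbf{x}^n)\}$ and lies in $[0,1]$ on the tie set, I bound $\beta_n(q)\le q(R_n)$, where $R_n=\{\mathbf{x}^n:q_{\hat{\sigma}^A_n}(\mathbf{x}^n)\le\gamma p(\mathbf{x}^n)\}$ and $q(R_n)$ is computed under the $n$-fold product of $q$.

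Next I would pass to the method of types. For the empirical type $L=L_{\mathbf{x}^n}$ and any product distribution $r$ (full support, \ref{assm:a1b}), $r(\mathbf{x}^n)=\exp(-n(H(L)+D(L\|r)))$ with $H$ the Shannon entropy, so $R_n$ is a union of type classes and, after cancelling $\exp(-nH(L))$, the type $L$ belongs to $R_n$ iff $\int e^{-nD(L\|q')}\hat{\sigma}^A_n(dq')\le\gamma\,e^{-nD(L\|p)}$. Set $U_\epsilon=\{\mu:\|\mu-q^*\|<\epsilon\}$; by Lemma~\ref{lemma:weakconv} and the portmanteau theorem, $\hat{\sigma}^A_n(U_\epsilon)\ge\tfrac12$ for large $n$, whence the integral is at least $\tfrac12 e^{-n\sup_{q'\in U_\epsilon\cap Q}D(L\|q')}$. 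Uniform continuity of $(L,q')\mapsto D(L\|q')$ near the full-support point $q^*$ gives $\sup_{q'\in U_\epsilon\cap Q}D(L\|q')\le D(L\|q^*)+\omega(\epsilon)$ with $\omega(\epsilon)\to 0$ uniformly in $L$. Combining the two inequalities and taking logarithms, every type $L$ in $R_n$ satisfies $\ell(L)\le\omega(\epsilon)+\tfrac{1}{n}\log(2\gamma)$, where I abbreviate $\ell(\mu):=D(\mu\|p)-D(\mu\|q^*)=\sum_{i}\mu(i)\log\frac{q^*(i)}{p(i)}$, a continuous \emph{linear} functional of $\mu$ by \ref{assm:a1b}.

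Now the separation enters through \ref{assm:a3}. Since $Q$ is compact (\ref{assm:a1}), $\ell$ is continuous, and \ref{assm:a3} states precisely that no $q\in Q$ satisfies $\ell(q)\le 0$, we have $\ell_0:=\min_{q\in Q}\ell(q)>0$. Fix $\rho_0=\ell_0/2$, then choose $\epsilon$ small enough that $\omega(\epsilon)<\rho_0/2$ and $n$ large enough that $\tfrac{1}{n}\log(2\gamma)<\rho_0/2$ and $\hat{\sigma}^A_n(U_\epsilon)\ge\tfrac12$. For such $n$, the types in $R_n$ all lie in the compact set $K=\{L:\ell(L)\le\rho_0\}$, which is disjoint from $Q$ because $\ell>\rho_0$ on $Q$. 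Continuity of $(q,L)\mapsto D(L\|q)$ on $Q\times K$ (full support by \ref{assm:a1b}) together with compactness and disjointness yields $\kappa:=\inf_{q\in Q,\,L\in K}D(L\|q)>0$. Sanov's upper bound via the method of types (at most $(n+1)^d$ types, each with $q(\text{type }L)\le e^{-nD(L\|q)}$) then gives $q(R_n)\le (n+1)^d e^{-n\kappa}$ \emph{uniformly} over $q\in Q$, so $\sup_{q\in Q}\beta_n(q)\to 0$, completing the argument.

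The main obstacle is the step converting the purely qualitative weak convergence $\hat{\sigma}^A_n\to\delta_{q^*}$ into an exponential-scale lower bound on the mixture $q_{\hat{\sigma}^A_n}$: the factor $n$ sits in the exponent while the concentration of $\hat{\sigma}^A_n$ carries no rate. The resolution is that $\epsilon$ must be chosen fixed, independently of $n$, and small enough that the continuity slack $\omega(\epsilon)$ undercuts the fixed positive margin $\rho_0$ supplied by \ref{assm:a3}; only afterwards does one send $n\to\infty$. It is exactly this nonzero margin from \ref{assm:a3} that makes the order of limits legitimate, which is why that assumption is indispensable here (and the conclusion can fail without it, as in the minimax/zero-cost case discussed in the model section).
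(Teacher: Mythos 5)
Your proof is correct, and it reaches the paper's conclusion along a route whose skeleton is the same---reduce to the likelihood ratio test $\hat{\varphi}_n$ of Proposition~\ref{prop:defenderpure}, show that the empirical types compatible with accepting $H_0$ are eventually separated from $Q$, then finish with a method-of-types bound uniform over $q \in Q$---but whose two key steps are executed by genuinely different means. For the separation step, the paper relaxes the acceptance region via Jensen's inequality to the set $\Gamma_n=\{\mathcal{P}_{\mathbf{x}^n} : \int \log\left(q(\mathbf{x}^n)/p(\mathbf{x}^n)\right)\hat{\sigma}^A_n(dq) \le \log\gamma\}$, proves an auxiliary uniform-convergence result (Lemma~\ref{lemma:uconv-integrals}: $\sup_{\mu}\left|\int D(\mu||q)\,\hat{\sigma}^A_n(dq)-D(\mu||q^*)\right|\to 0$, via a finite-subcover argument), and then obtains $\Gamma_n\cap Q^{\eta}=\emptyset$ for large $n$ by contradiction along a convergent subsequence; you instead work with the acceptance set itself and convert the rate-free weak convergence of Lemma~\ref{lemma:weakconv} into a quantitative constraint through the portmanteau bound $\hat{\sigma}^A_n(U_\epsilon)\ge\tfrac12$ and uniform continuity of $D(L||\cdot)$ near $q^*$, which pins every acceptance-region type into the fixed compact set $K=\{\ell\le\rho_0\}$ disjoint from $Q$ by \ref{assm:a3}---no Jensen step, no auxiliary lemma, no contradiction argument (your set $K$ plays the role of the paper's $\eta$-expansion $Q^\eta$). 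Your insistence on fixing $\epsilon$ before sending $n\to\infty$, leaning on the strict margin $\ell_0>0$, is exactly the right resolution of the order-of-limits issue that the paper resolves qualitatively. Your treatment of the Type~I term is also different and in fact cleaner: you observe it is $q$-independent and must vanish because the equilibrium error does (Lemma~\ref{lemma:errorbound}), whereas the paper bounds it by $\gamma p(\mathcal{P}_{\mathbf{x}^n}\notin\Gamma_n)$, an inequality that is not immediate from the inclusion actually established (Jensen gives acceptance $\subseteq\Gamma_n$, hence rejection $\supseteq\Gamma_n^c$, the wrong direction, leaving the contribution $\gamma\sum_{\mathcal{P}_{\mathbf{x}^n}\in\Gamma_n}\hat{\varphi}_n(\mathbf{x}^n)p(\mathbf{x}^n)$ unaccounted for); your decomposition sidesteps this subtlety entirely. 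What the paper's formulation buys is reuse---Lemma~\ref{lemma:uconv-integrals} and the $\Gamma_n$ device are recycled in the Neyman--Pearson analogue (Lemma~\ref{lemma:errorgoesto0-np})---while yours buys a self-contained, fully quantitative argument with an explicit uniform exponential bound $(n+1)^d e^{-n\kappa}$ on the Type~II error.
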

We are now ready to show the concentration of the attacker's equilibrium:
\begin{lemma}
Assume~\ref{assm:a1}-\ref{assm:a3}, and let $(\hat{\sigma}^A_n, \hat{\sigma}^D_n)_{n \geq 1}$ be as in Lemma~\ref{lemma:errorbound}. Let $(q_n)_{n \geq 1}$ be a sequence such that $q_n \in \text{supp}(\hat{\sigma}^A_n)$ for each $n \geq 1$. Then, $q_n \to q^*$ as $n \to \infty$. 
\label{lemma:eqconcentration}
\end{lemma}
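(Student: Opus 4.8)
The plan is to exploit the defining property of the support of a mixed-strategy equilibrium: every point of $\mathrm{supp}(\hat{\sigma}^A_n)$ must be a pure best response of the attacker against $\hat{\sigma}^D_n$. Write $V_n := \max_{q \in Q} u^A_n(q, \hat{\sigma}^D_n)$; this maximum is attained since $Q$ is compact (closed by~\ref{assm:a1}, inside the compact simplex $M_1(\mathcal{X}) \subseteq \mathbb{R}^d$) and $u^A_n(\cdot, \hat{\sigma}^D_n)$ is continuous by Lemma~\ref{lemma:cts-payoffs}. Because the attacker's mixed payoff is linear in her own strategy, $u^A_n(\hat{\sigma}^A_n, \hat{\sigma}^D_n) = \int_Q u^A_n(q, \hat{\sigma}^D_n)\, \hat{\sigma}^A_n(dq)$, and since $\hat{\sigma}^A_n$ is a best response this integral equals $V_n$. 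As the integrand is everywhere $\le V_n$, the open set $\{q : u^A_n(q, \hat{\sigma}^D_n) < V_n\}$ is $\hat{\sigma}^A_n$-null, so its closed complement contains $\mathrm{supp}(\hat{\sigma}^A_n)$. Hence every $q_n \in \mathrm{supp}(\hat{\sigma}^A_n)$ satisfies $u^A_n(q_n, \hat{\sigma}^D_n) = V_n$.

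Next I would translate this identity into a statement about the cost $c$. Decompose the payoff via the missed-detection term $\beta_n(q, \varphi) := \sum_{\mathbf{x}^n}(1 - \varphi(\mathbf{x}^n)) q(\mathbf{x}^n)$, so that $u^A_n(q, \varphi) = \beta_n(q, \varphi) - c(q)$ and, since the false-alarm term $\gamma \sum_{\mathbf{x}^n}\varphi(\mathbf{x}^n)p(\mathbf{x}^n)$ is nonnegative, $\beta_n(q, \varphi) \le e_n(q, \varphi)$. Lemma~\ref{lemma:errorgoesto0} gives $s_n := \sup_{q \in Q} e_n(q, \hat{\sigma}^D_n) \to 0$, whence $\sup_{q \in Q}\beta_n(q, \hat{\sigma}^D_n) \le s_n \to 0$. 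Now sandwich $V_n$: on one hand $V_n \ge u^A_n(q^*, \hat{\sigma}^D_n) \ge -c(q^*)$ (as $\beta_n \ge 0$); on the other hand, for every $q$, $u^A_n(q, \hat{\sigma}^D_n) \le s_n - c(q) \le s_n - c(q^*)$ because $q^*$ minimizes $c$ by~\ref{assm:a2}, so $V_n \le s_n - c(q^*)$. Feeding $u^A_n(q_n, \hat{\sigma}^D_n) = V_n$ and $\beta_n(q_n, \hat{\sigma}^D_n) \ge 0$ into these bounds yields $c(q^*) \le c(q_n) \le c(q^*) + s_n$, so $c(q_n) \to c(q^*) = \min_{q\in Q} c(q)$.

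Finally I would upgrade convergence of costs to convergence of the points by compactness and uniqueness. Suppose $q_n \not\to q^*$; then some subsequence stays at distance $\ge \varepsilon$ from $q^*$, and by compactness of $Q$ a further subsequence converges to some $\tilde q \in Q$ with $\tilde q \ne q^*$. Continuity of $c$ (\ref{assm:a2}) forces $c(\tilde q) = \lim c(q_n) = c(q^*) = \min_Q c$, contradicting uniqueness of the minimizer in~\ref{assm:a2}. Hence $q_n \to q^*$.

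I expect the main obstacle to be the first paragraph: rigorously justifying the ``every support point is a best response'' characterization in this continuum-action setting (measurability and $\hat{\sigma}^A_n$-nullity of the sub-optimal set, attainment of $V_n$, and the linearity that passes from the mixed best-response condition to the pointwise identity on the support). Once that is in hand the remaining steps are short, relying only on the uniform smallness of the missed-detection term from Lemma~\ref{lemma:errorgoesto0} and on compactness of $Q$. Note that Lemma~\ref{lemma:weakconv} alone would not suffice, since weak convergence of $\hat{\sigma}^A_n$ to $\delta_{q^*}$ does not preclude a vanishing amount of mass sitting far from $q^*$; it is precisely Assumption~\ref{assm:a3}, entering through Lemma~\ref{lemma:errorgoesto0}, that rules this out.
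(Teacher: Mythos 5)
Your proposal is correct and follows essentially the same route as the paper's proof: both combine the best-response property of points in $\mathrm{supp}(\hat{\sigma}^A_n)$ with the uniform error bound of Lemma~\ref{lemma:errorgoesto0} to sandwich $c(q_n)$ between $c(q^*)$ and $c(q^*)+o(1)$, and then conclude via compactness of $Q$, continuity of $c$, and uniqueness of the minimizer in~\ref{assm:a2}. The only difference is that you rigorously justify the ``every support point is a pure best response'' step (open sub-optimal set being $\hat{\sigma}^A_n$-null), which the paper simply asserts; this is a sound elaboration rather than a different argument.
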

The above concentration phenomenon is a consequence of the uniqueness of $q^*$ and Assumption~\ref{assm:a3}. The main idea in the proofs of Lemma~\ref{lemma:errorgoesto0} and Lemma~\ref{lemma:eqconcentration} is to essentially show that, for large enough $n$, the acceptance region of $H_0$ under (any) mixed strategy Nash equilibrium does not intersect the set $Q$. If we do not assume~\ref{assm:a3}, then the decision region at equilibrium could intersect $Q$, and we may not have the concentration property in the above lemma (we will still have the convergence property in Lemma~\ref{lemma:weakconv} though, which does not use \ref{assm:a3}).

\subsection{Error Exponents}
\label{section:error_exp}
With the results on concentration properties of the equilibrium from the previous section, we are now ready to examine the error exponent associated with the classification error at equilibrium. Let $\Lambda_0$ denote the log-moment generating function of the random variable $\frac{q^*(X)}{p(X)}$ under $H_0$, i.e., when $X\sim p$: $\Lambda_0(\lambda) \!=\! \log \sum_{i \in \mathcal{X}} \exp{\left(\lambda \frac{q^*(i)}{p(i)}\right)} p(i),\, \lambda \in \mathbb{R}$. Define its Fenchel-Legendre transform
\begin{align*}
\Lambda^*_0(x) = \sup_{\lambda \in \mathbb{R}} \{\lambda x - \Lambda_0(\lambda)\}, \ x \in \mathbb{R}. 
\end{align*}
Our main result in the paper (for the Bayesian case) is the following theorem.
\begin{theorem}\label{theorem:errorexponent}%
Assume~\ref{assm:a1}-\ref{assm:a3}, and let $(\hat{\sigma}^A_n, \hat{\sigma}^D_n)_{n \geq 1}$ be as in Lemma~\ref{lemma:errorbound}. Then,
\begin{align*}
\lim_{n \to \infty} \frac{1}{n} \log e_n(\hat{\sigma}^A_n, \hat{\sigma}^D_n) = -\Lambda_0^*(0).
\end{align*}
\end{theorem}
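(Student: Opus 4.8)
The plan is to reduce the equilibrium error to the classical Bayesian error of a \emph{simple} test of $p$ against $q^*$, up to subexponential corrections, and then invoke the classical Chernoff estimate. First I would replace the defender's equilibrium strategy by the likelihood ratio test $\hat{\varphi}_n$. By Proposition~\ref{prop:defenderpure}, $(\hat{\sigma}^A_n, \hat{\varphi}_n)$ is again a Nash equilibrium; since in any equilibrium the defender best-responds to $\hat{\sigma}^A_n$, we have $u^D_n(\hat{\sigma}^A_n, \hat{\sigma}^D_n) = u^D_n(\hat{\sigma}^A_n, \hat{\varphi}_n)$, hence $e_n(\hat{\sigma}^A_n, \hat{\sigma}^D_n) = e_n(\hat{\sigma}^A_n, \hat{\varphi}_n)$. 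Writing out the latter with the three-way form of $\hat{\varphi}_n$ and inspecting the integrand symbol by symbol (on the boundary set $q_{\hat{\sigma}^A_n}(\mathbf{x}^n) = \gamma p(\mathbf{x}^n)$ the randomization $\varphi_{\hat{\sigma}^D_n}$ is immaterial, since the contribution is $\gamma p(\mathbf{x}^n)$ regardless), I would establish the clean identity
\begin{equation}
e_n(\hat{\sigma}^A_n, \hat{\varphi}_n) = \sum_{\mathbf{x}^n} \min\bigl(q_{\hat{\sigma}^A_n}(\mathbf{x}^n),\, \gamma\, p(\mathbf{x}^n)\bigr).
\label{eqn:minid}
\end{equation}
This is exactly the form of the minimum Bayesian error in classical testing, except that the ``alternative'' $q_{\hat{\sigma}^A_n}$ is a mixture of product measures rather than a single product measure.

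Next I would use the concentration result to compare $q_{\hat{\sigma}^A_n}$ with the product measure built from $q^*$. Fix $\epsilon > 0$. From Lemma~\ref{lemma:eqconcentration} one deduces that $\mathrm{supp}(\hat{\sigma}^A_n)$ lies in the $\epsilon$-ball around $q^*$ for all large $n$: otherwise one could extract a sequence of support points staying at distance $\geq \epsilon$ from $q^*$, contradicting the lemma. By Assumption~\ref{assm:a1b} we have $q^*(i) > 0$ for every $i$, so for $q$ in a small enough ball, $q(i)/q^*(i) \in [e^{-\delta(\epsilon)}, e^{\delta(\epsilon)}]$ with $\delta(\epsilon) \to 0$ as $\epsilon \to 0$; multiplying over the $n$ coordinates gives, for every such $q$ and every $\mathbf{x}^n$, the bound $e^{-n\delta(\epsilon)} q^*(\mathbf{x}^n) \leq q(\mathbf{x}^n) \leq e^{n\delta(\epsilon)} q^*(\mathbf{x}^n)$. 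Since $q_{\hat{\sigma}^A_n}(\mathbf{x}^n)$ is a convex combination of such $q(\mathbf{x}^n)$, the same two-sided bound holds for $q_{\hat{\sigma}^A_n}(\mathbf{x}^n)$.

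Finally I would insert these bounds into \eqref{eqn:minid}. Using the elementary inequalities $\min(ca, b) \leq c\,\min(a,b)$ for $c \geq 1$ and $\min(ca, b) \geq c\,\min(a,b)$ for $c \leq 1$, the two-sided bound yields
\begin{equation*}
e^{-n\delta(\epsilon)} \sum_{\mathbf{x}^n} \min\bigl(q^*(\mathbf{x}^n), \gamma\, p(\mathbf{x}^n)\bigr) \;\leq\; e_n(\hat{\sigma}^A_n, \hat{\varphi}_n) \;\leq\; e^{n\delta(\epsilon)} \sum_{\mathbf{x}^n} \min\bigl(q^*(\mathbf{x}^n), \gamma\, p(\mathbf{x}^n)\bigr).
\end{equation*}
The central sum is the classical minimum Bayesian error for the simple test of $p$ against $q^*$, whose exponential rate is the Chernoff information between $p$ and $q^*$; by the standard Chernoff bound $\min(a,b)\le a^{1-\lambda}b^\lambda$ together with its matching method-of-types lower bound (see Section~3.4 in~\cite{dembo-zeitouni}), one has $\frac{1}{n}\log\sum_{\mathbf{x}^n}\min(q^*(\mathbf{x}^n), \gamma\, p(\mathbf{x}^n)) \to -\Lambda_0^*(0)$. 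Taking $\frac{1}{n}\log$ of the sandwich, sending $n \to \infty$, and then letting $\epsilon \to 0$ pins the limit at $-\Lambda_0^*(0)$.

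The main obstacle is precisely that $q_{\hat{\sigma}^A_n}$ is a mixture, not a product, so the classical large-deviation machinery does not apply verbatim; the crux is that Lemma~\ref{lemma:eqconcentration} (which itself relies on Assumption~\ref{assm:a3}) lets me trap the mixture between $e^{\pm n\delta(\epsilon)}$ multiples of the single product measure associated with $q^*$, with $\delta(\epsilon)$ negligible on the exponential scale. The technical points I would check with care are the passage from the sequential statement of Lemma~\ref{lemma:eqconcentration} to uniform containment of $\mathrm{supp}(\hat{\sigma}^A_n)$ in an $\epsilon$-ball, and the uniform positivity of $q(i)$ over that ball (furnished by Assumption~\ref{assm:a1b}) that makes $\delta(\epsilon)$ well defined and vanishing as $\epsilon\to 0$.
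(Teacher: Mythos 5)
Your proof is correct, and it takes a genuinely different route from the paper's. The paper establishes the two bounds by two separate unilateral-deviation arguments: for the lower bound it lets the attacker deviate to $q^*$ (passing through a best-response-equivalent zero-sum game to handle the nonzero-sum structure) and invokes the classical Chernoff asymptotics for the simple test of $p$ versus $q^*$; for the upper bound it lets the defender deviate to the fixed threshold-one likelihood ratio test between $p$ and $q^*$ and then combines method-of-types estimates (Lemma~2.1.9 of~\cite{dembo-zeitouni}) with the concentration Lemma~\ref{lemma:eqconcentration} and uniform continuity of relative entropy over explicit types regions. You instead exploit the exact best-response structure from Proposition~\ref{prop:defenderpure}: since both $\hat{\sigma}^D_n$ and $\hat{\varphi}_n$ are best responses to $\hat{\sigma}^A_n$, the equilibrium error equals the minimum Bayes error against the mixture, $\sum_{\mathbf{x}^n}\min\bigl(q_{\hat{\sigma}^A_n}(\mathbf{x}^n),\gamma p(\mathbf{x}^n)\bigr)$, and you then trap the mixture between $e^{\pm n\delta(\epsilon)}$ multiples of the $n$-fold product measure of $q^*$, using the uniform support containment deduced from Lemma~\ref{lemma:eqconcentration} together with the positivity in Assumption~\ref{assm:a1b}; both bounds then drop out simultaneously from the classical rate of $\sum_{\mathbf{x}^n}\min\bigl(q^*(\mathbf{x}^n),\gamma p(\mathbf{x}^n)\bigr)$ (the weight $\gamma$ is immaterial on the exponential scale). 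Your sandwich is more economical---it dispenses with the zero-sum equivalence and the types-region computations, and it makes the heuristic behind the theorem transparent: at equilibrium the attacker's mixture is exponentially equivalent to the i.i.d.\ law of $q^*$. What the paper's route buys in exchange is, first, that its lower bound never uses Assumption~\ref{assm:a3} (concentration enters only its upper bound, whereas both sides of your sandwich rely on Lemma~\ref{lemma:eqconcentration}), and second, that the deviation template is more robust: it transfers essentially verbatim to the Neyman--Pearson game of Appendix~\ref{section:neyman-pearson}, where the defender's best response is given by the Neyman--Pearson lemma rather than by a closed-form pointwise minimum.
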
%
Our approach to show this result is via obtaining asymptotic lower and upper bounds for the classification error at equilibrium $e_n(\hat{\sigma}^A_n, \hat{\sigma}^D_n)$. Since we do not have much information about the structure of the equilibrium, we first let one of the players deviate from their equilibrium strategy, so that we can estimate the error corresponding to the new pair of strategies, and then use these estimates to compute the error rate at equilibrium. The lower bound easily follows by letting the attacker play the strategy $q^*$, and using the error exponent in the classical hypothesis testing problem between $p$ and $q^*$. For the upper bound, we let the defender play a specific deterministic decision rule, and make use of the concentration properties of the equilibrium of the attacker in Section~\ref{section:eqconcentration}. 

Thus, we see that the error exponent is the same as that for the classical hypothesis testing problem of $X_1, \dots, X_n$ i.i.d.$\sim p$ versus $X_1, \dots, X_n$ i.i.d.$\sim q^*$ (see Corollary 3.4.6 in~\cite{dembo-zeitouni}). That is, for large values of $n$, the adversarial hypothesis testing game is not much different from the above classical setting (whose parameters are derived from the adversarial setting) in terms of classification error. We emphasize that we have not used any property of the specific structure of the mixed strategy Nash equilibrium in obtaining the error exponent associated with the classification error, and hence Theorem~\ref{theorem:errorexponent} is valid for any NE. We believe that obtaining the actual structure of a NE is a difficult problem, as the strategy spaces are infinite, and the utility functions do not possess any monotonicity properties in general. For numerical computation of error exponents in a simple case, see Section~\ref{section:numericals}.

We conclude this section by discussing the role of Assumptions~\ref{assm:a2} and~\ref{assm:a3}. We used \ref{assm:a3} to obtain the concentration of equilibrium in Lemma~\ref{lemma:eqconcentration}. Without this assumption, Theorem~\ref{theorem:errorexponent} is not valid; see Section~\ref{section:numericals} for numerical counter-examples. Also, in our setting, unlike the classical minimax testing, it is not clear whether it is always true that the error goes to $0$ as the number of samples becomes large, and whether the attacker should always play a point close to $q^*$ at equilibrium. It could be that playing a point far from $q^*$ is better if she can compensate the loss from $c$ from the error term. In fact, that is what happens when (A4) is not satisfied, since there is partial overlap of the decision region of the defender with the set $Q$. Regarding \ref{assm:a2}, when $c$ has multiple minimizers, our analysis can only tell us that the equilibrium of the attacker is supported around the set of minimizers for large enough $n$; to study error exponents in such cases, one has to do a finer analysis of characterizing the attacker's equilibrium. All in all, using \ref{assm:a2} and~\ref{assm:a3} allows us to establish interesting concentration properties of the equilibrium (which is not a priori clear) and error exponents associated with classification error without characterizing a NE, hence we believe that these assumptions serve as a good starting point. 

\section{Numerical Experiments}\label{section:numericals}

%
%
%

In this section, we present two numerical examples in the Bayesian formulation to illustrate the result in Theorem~\ref{theorem:errorexponent} and the importance of Assumption~\ref{assm:a3}. Due to space limitations, additional experiments in the Bayesian formulation are relegated to  \ifthenelse{\boolean{full}}{Appendix~\ref{section:additional-experiments}}{Appendix~C of the full version \cite{Sarath19aFull}}, which illustrate (a) best response strategies of the players, (b) existence of pure strategy Nash equilibrium for large values on $n$ as suggested by Lemma~\ref{lemma:eqconcentration}, and (c) importance of Assumption~\ref{assm:a3}.\footnote{\ifthenelse{\boolean{full}}{Appendix~\ref{section:additional-experiments}}{Appendix~C of the full version \cite{Sarath19aFull}} also contains numerical experiments in the Neyman-Pearson formulation presented in \ifthenelse{\boolean{full}}{Appendix~\ref{section:neyman-pearson}}{Appendix~A of the full version \cite{Sarath19aFull}}. The code used for our simulations is available at \url{https://github.com/sarath1789/ahtg_neurips2019}.} 

We illustrate the result in Theorem~\ref{theorem:errorexponent} numerically in the following setting. We fix $\mathcal{X} = \{0,1\}$ (i.e. $d=2$) and each probability distribution on $\mathcal{X}$  is represented by the probability that it assigns to the symbol $1$, and hence $M_1(\mathcal{X})$ is viewed as the unit interval. We fix $p = 0.5$. For numerical computations, we discretize the set $Q$ into $100$ equally spaced points, and we only consider deterministic threshold-based decision rules for the defender. To compute a NE, we solve the linear programs associated with attacker as well as the defender for the zero-sum equivalent game of  $\mathcal{G}^B(2,n)$.

For the function $c(q) = |q-q^*|$ with $q^* = 0.8$, Figure~\ref{fig:error_exponents}(\subref{fig:bayesian_plots/fig7}) shows the error exponent at the NE computed by the above procedure as a function of the number of samples, from $n = 10$ to $n = 300$ in steps of $10$. As suggested by Theorem~\ref{theorem:errorexponent}, we see that the error exponents approach the value $\Lambda_0^*(0) = 0.054$ (the boundary of the decision region is  around the point $q =0.66 $, and $D(q||p) \approx D(q||q^*) \approx 0.054$).

\begin{figure*}[t!]
    \centering
    \begin{subfigure}[t]{0.50\textwidth}
        \centering
        \includegraphics[scale=0.40]{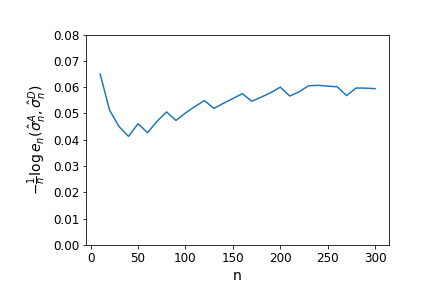}
        \caption{$Q =[0.7,0.9] , c(q) =|q-0.8|$}
        \label{fig:bayesian_plots/fig7}
    \end{subfigure}%
    ~ 
    \begin{subfigure}[t]{0.50\textwidth}
        \centering
        \includegraphics[scale=0.40]{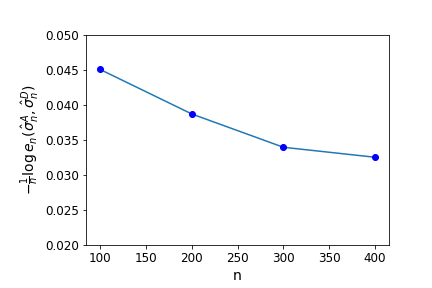}
        \caption{$Q =[0.6,0.9] , c(q) =3|q-0.9|$}
        \label{fig:bayesian_plots/fig9}
    \end{subfigure}
    \caption{Error exponents as a function of $n$}
\label{fig:error_exponents}
\end{figure*}

We now consider an example which demonstrates that, the result on error exponent in Theorem~\ref{theorem:errorexponent} may not be valid if Assumption~\ref{assm:a3} is not satisfied. In this experiment, we consider the case where $Q = [0.6, 0.9]$ and $q^* = 0.9$. Note that the present setting does not satisfy Assumption~\ref{assm:a3}. Figure~\ref{fig:error_exponents}(\subref{fig:bayesian_plots/fig9}) shows the error exponent at the equilibrium as a function of $n$, from $n=100$ to $n=400$ in steps of $100$, for the cost function $c(q) = 3|q-q^*|$. From this plot, we see that, the error exponents converge to somewhere around $0.032$, whereas $\Lambda^*_0(0)\approx 0.111$.

\section{Concluding Remarks}\label{section:conclusion}

In this paper, we studied hypothesis testing games that arise in the context of adversarial classification. We showed that, at equilibrium, the strategy of the classifier is to use a likelihood ratio test. We also examined the exponential rate of decay of classification error at equilibrium and showed that it is same as that of a classical testing problem with parameters derived from the adversarial model.

Throughout the paper, we assumed that the alphabet $\mathcal{X}$ is finite. This is a reasonable assumption in applications that deal with digital signals such as image forensics (an important application for adversarial hypothesis testing); and it is also a good starting point because even in this case, our analysis of the error exponents is nontrivial. Making $\mathcal{X}$ countable/uncountable will make the space $M_1(\mathcal{X})$ infinite dimensional, and the analysis of error exponents will become more difficult (e.g., the continuity of relative entropy is no longer true in this case, which we crucially use in our analysis), but the case of general state space $\mathcal{X}$ is an interesting future direction.

Finding the exact structure of the equilibrium for our hypothesis testing games is a challenging future direction. This will also shed some light on the error exponent analysis for the case when Assumption~\ref{assm:a3} is not satisfied. Another interesting future direction is to examine the hypothesis testing game in the sequential detection context where the defender can also decide the number of data samples for classification. In such a setting, an important question is to understand whether the optimal strategy of the classifier is to use a standard sequential probability ratio test.

\subsubsection*{Acknowledgments}
The first author is partially supported by the Cisco-IISc Research Fellowship grant. The work of the second author was supported in part by the French National Research Agency (ANR) through the ``Investissements d’avenir''
program (ANR-15-IDEX-02) and through grant ANR-16-
TERC0012; and by the Alexander von Humboldt Foundation. 


\bibliographystyle{apalike}
\medskip
\small
\bibliography{main.bib}

\ifthenelse{\boolean{full}}{

\appendix

\section{Hypothesis Testing Game: Neyman-Pearson Formulation}
\label{section:neyman-pearson}
In this section, we study the Neyman-Pearson version of the hypothesis testing problem. The presentation of results in this section is similar to the Bayesian case,  and we will also use the same notation as in the Bayesian formulation.
\subsection{Problem Formulation}
In the Neyman-Pearson point of view, we do not assume any knowledge on the probability that the external agent is an attacker. Fix $\varepsilon > 0$. As before, the strategy set of the attacker is the set $Q$. For the defender, motivated by the Neyman-Pearson approach for the classical hypothesis testing problem, we define the strategy set as the set of all randomized decision rules on $n$-length words whose false alarm probability is at most $\varepsilon$, i.e., 
\begin{align*}
\Phi_n = \{\varphi : \mathcal{X}^n \to [0,1]: P_n^{FA}(\varphi) \leq \varepsilon \},
\end{align*}
where $P_n^{FA}(\varphi) = \sum_{\mathbf{x}^n} \varphi(\mathbf{x}^n) p(\mathbf{x}^n)$ denotes the false alarm probability under the decision rule $\varphi$.

We now define the utilities. Similar to the Bayesian framework, the utility of the attacker is defined as
\begin{align}
u^A_n(q,\varphi) = \sum_{\mathbf{x}^n} (1-\varphi(\mathbf{x}^n)) q(\mathbf{x}^n) - c(q).
\label{eqn:attacker_revenue_pure_np}
\end{align}
Since we have already constrained the  strategy set of the defender by imposing an upper bound on the Type I error, the utility of the defender is defined as
\begin{align}
u^D_n(q,\varphi) = -\left(\sum_{\mathbf{x}^n} (1-\varphi(\mathbf{x}^n)) q(\mathbf{x}^n )\right),
\label{eqn:defender_revenue_pure_np}
\end{align}
which captures the Type II error.

We denote our two-player Neyman-Pearson hypothesis testing game with utility functions~(\ref{eqn:attacker_revenue_pure_np}) and (\ref{eqn:defender_revenue_pure_np}) by $\mathcal{G}^{NP}(\varepsilon, d,n)$. Similar to the Bayesian case, we will use Assumptions~\ref{assm:a1}-\ref{assm:a3} throughout the analysis of the Neyman-Pearson case. 
\subsection{Mixed Strategy Nash Equilibrium for $\mathcal{G}^{NP}(\varepsilon, d, n)$}
We now examine the mixed strategy Nash equilibrium for the game $\mathcal{G}^{NP}(\varepsilon, d, n)$. Similar to the Bayesian framework, we endow the set $Q$ with the standard Euclidean topology on $\mathbb{R}^d$ and the set $\Phi_n$ with the sup-norm metric. The following lemma asserts compactness of the strategy space of the defender. 
\begin{lemma}
The set $\Phi_n$ equipped with the metric $d_n$ is a compact metric space. \label{lemma:compactness-phin-np}
\end{lemma}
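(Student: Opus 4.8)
The plan is to realize $\Phi_n$ as a closed subset of a compact space and then invoke the elementary fact that a closed subset of a compact metric space is compact. Since $\mathcal{X}^n$ is finite, with $d^n$ elements, any decision rule $\varphi : \mathcal{X}^n \to [0,1]$ can be identified with the vector $(\varphi(\mathbf{x}^n))_{\mathbf{x}^n \in \mathcal{X}^n} \in [0,1]^{d^n} \subseteq \mathbb{R}^{d^n}$, and under this identification the metric $d_n$ is exactly the $\ell^\infty$ (sup) metric on $\mathbb{R}^{d^n}$. This reduces the whole statement to a concrete finite-dimensional claim.

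First I would observe that the unconstrained set of all decision rules $\{\varphi : \mathcal{X}^n \to [0,1]\}$ corresponds to the cube $[0,1]^{d^n}$, which is a closed and bounded subset of $\mathbb{R}^{d^n}$ and hence compact by the Heine--Borel theorem; this is the same compactness already noted in the Bayesian formulation. Next I would show that the false-alarm constraint carves out a closed subset of this cube. The map $\varphi \mapsto P_n^{FA}(\varphi) = \sum_{\mathbf{x}^n} \varphi(\mathbf{x}^n) p(\mathbf{x}^n)$ is linear in the coordinates $(\varphi(\mathbf{x}^n))$, hence $d_n$-continuous, so $\Phi_n = \{\varphi : P_n^{FA}(\varphi) \le \varepsilon\}$ is the intersection of the compact cube with the preimage $(P_n^{FA})^{-1}\bigl((-\infty,\varepsilon]\bigr)$ of a closed half-line under a continuous map. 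Thus $\Phi_n$ is closed in $[0,1]^{d^n}$, and a closed subset of a compact metric space is compact, which is the claim. (Nonemptiness, though not needed for compactness, is clear since $\varphi \equiv 0$ gives $P_n^{FA} = 0 \le \varepsilon$.)

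There is essentially no obstacle here: the entire content is the standard fact that a closed subset of a compact metric space is compact, together with the continuity of the linear functional $P_n^{FA}$. The only point requiring any care is confirming the $d_n$-continuity of $P_n^{FA}$, which follows immediately from the estimate $|P_n^{FA}(\varphi_1) - P_n^{FA}(\varphi_2)| \le \sum_{\mathbf{x}^n} p(\mathbf{x}^n)\,|\varphi_1(\mathbf{x}^n) - \varphi_2(\mathbf{x}^n)| \le d_n(\varphi_1,\varphi_2)$, using that $\sum_{\mathbf{x}^n} p(\mathbf{x}^n) = 1$. Hence $P_n^{FA}$ is in fact $1$-Lipschitz, and the argument goes through without complication.
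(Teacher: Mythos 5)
Your proof is correct, but it is packaged differently from the paper's. The paper proves sequential compactness by hand: given a sequence $(\varphi_k)_{k\ge 1}$ in $\Phi_n$, it iteratively extracts nested subsequences so that $\varphi_k(\mathbf{x}^n)$ converges for each of the finitely many words $\mathbf{x}^n$ (in effect re-proving Bolzano--Weierstrass for the cube), defines the pointwise limit $\varphi$, and then checks $P^{FA}_n(\varphi)\le\varepsilon$ by exchanging the limit with the finite sum. You instead decompose the claim into two standard facts: the unconstrained set of rules is the cube $[0,1]^{d^n}$, compact by Heine--Borel, and $\Phi_n$ is closed in it because it is the preimage of $(-\infty,\varepsilon]$ under the continuous (indeed $1$-Lipschitz) linear functional $P^{FA}_n$; a closed subset of a compact space is compact. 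The mathematical content is identical --- finite dimensionality plus the fact that the false-alarm constraint survives limits --- but your route is shorter and cleaner, delegating the subsequence extraction to Heine--Borel and isolating the only substantive verification (continuity of $P^{FA}_n$) as an explicit Lipschitz estimate, whereas the paper's argument is self-contained and makes the limit-exchange step visible. A small point in your favor: you correctly count $|\mathcal{X}^n| = d^n$, while the paper's proof enumerates the words as $\mathbf{x}^n_1,\dots,\mathbf{x}^n_{2^n}$, which is only accurate in the binary case $d=2$.
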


We also have joint continuity of the utility functions of both the players, and it can be proved similar to Lemma~\ref{lemma:cts-payoffs}.
\begin{lemma}Assume~\ref{assm:a1}-\ref{assm:a2}. Then, the utility functions $u_n^A$ and $u_d^D$ are continuous on $Q \times \Phi_n$. \label{lemma:cts-payoffs-np}
\end{lemma}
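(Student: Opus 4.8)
The plan is to observe that both Neyman-Pearson utilities are built from the single functional
\[
f(q,\varphi) = \sum_{\mathbf{x}^n} (1-\varphi(\mathbf{x}^n))\, q(\mathbf{x}^n),
\]
since $u_n^D = -f$ and $u_n^A = f - c$. Thus it suffices to prove that $f$ is jointly continuous on $Q \times \Phi_n$ and then invoke the continuity of $c$ from Assumption~\ref{assm:a2}. This is exactly the route taken for Lemma~\ref{lemma:cts-payoffs}; indeed, the Neyman-Pearson defender utility is even simpler than its Bayesian counterpart, as it lacks the false-alarm term $\gamma \sum_{\mathbf{x}^n} \varphi(\mathbf{x}^n) p(\mathbf{x}^n)$, so no new ideas are required.

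For the joint continuity of $f$, the key facts are that $\mathcal{X}^n$ is finite (so $f$ is a finite sum of summands), that for each fixed word the map $q \mapsto q(\mathbf{x}^n) = \prod_{j=1}^n q(x_j)$ is a polynomial in the coordinates of $q \in \mathbb{R}^d$ and hence continuous on $Q$, and that the map $\varphi \mapsto \varphi(\mathbf{x}^n)$ is $1$-Lipschitz under the sup-norm metric $d_n$. I would bound each summand using the elementary product estimate $|ab - cd| \le |a|\,|b-d| + |d|\,|a-c|$ with $a = 1-\varphi_1(\mathbf{x}^n)$, $b = q_1(\mathbf{x}^n)$, $c = 1-\varphi_2(\mathbf{x}^n)$, $d = q_2(\mathbf{x}^n)$. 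Since all factors lie in $[0,1]$, summing over the $d^n$ words gives
\[
|f(q_1,\varphi_1) - f(q_2,\varphi_2)| \le \sum_{\mathbf{x}^n} |q_1(\mathbf{x}^n) - q_2(\mathbf{x}^n)| + d^n\, d_n(\varphi_1,\varphi_2),
\]
and both terms on the right vanish as $(q_2,\varphi_2) \to (q_1,\varphi_1)$, the first because it is a finite sum of the continuous maps $q \mapsto q(\mathbf{x}^n)$.

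Having established continuity of $f$, I would conclude immediately that $u_n^D = -f$ is continuous and that $u_n^A = f - c$ is continuous, the latter because $c$ is continuous on $Q$ by Assumption~\ref{assm:a2}. The only point requiring a word of care is that $\Phi_n$ here carries the extra false-alarm constraint $P_n^{FA}(\varphi) \le \varepsilon$; but this merely realizes $\Phi_n$ as a subspace of the full rule set, so joint continuity on the larger product space restricts to joint continuity on $Q \times \Phi_n$. In short, there is no substantive obstacle: the argument is a routine product-of-continuous-functions estimate, and the only mild subtlety is ensuring \emph{joint} rather than separate continuity, which the displayed product bound handles directly.
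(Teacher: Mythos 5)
Your proof is correct and follows essentially the same route as the paper, which proves the Bayesian Lemma~\ref{lemma:cts-payoffs} by term-by-term convergence of the finite sum over $\mathcal{X}^n$ (using continuity of $q \mapsto q(\mathbf{x}^n)$, the sup-norm on $\Phi_n$, and continuity of $c$ from~\ref{assm:a2}) and then notes the Neyman--Pearson case is proved identically, the constrained $\Phi_n$ being just a subspace. Your explicit product estimate is merely a quantitative rephrasing of the paper's sequential-continuity argument, so there is no substantive difference.
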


We are now ready to show the existence of a mixed strategy Nash equilibrium.
\begin{proposition}Assume~\ref{assm:a1}-\ref{assm:a2}. Then, there exists a mixed strategy Nash equilibrium for $\mathcal{G}^{NP}(\varepsilon,d,n)$. If $(\hat{\sigma}^A_n, \hat{\sigma}^D_n)$ is a NE, then $\hat{\sigma}^D_n$ is the point mass at the most powerful $\varepsilon$-level Neyman-Pearson test for $X_1,\dots, X_n\, i.i.d.\sim p$ versus $(X_1, \dots, X_n) \sim q_{\hat{\sigma}^A_n}$. \label{prop:defenderpure-np}
\end{proposition}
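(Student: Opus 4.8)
The plan is to mirror the proof strategy of Proposition~\ref{prop:defenderpure} from the Bayesian case, adapting it to the constrained defender strategy set. For existence, first note that $Q$ is compact, being a closed subset of the compact simplex $M_1(\mathcal{X})$, and that $\Phi_n$ is a compact metric space by Lemma~\ref{lemma:compactness-phin-np}. Combined with the joint continuity of $u_n^A$ and $u_n^D$ on $Q \times \Phi_n$ from Lemma~\ref{lemma:cts-payoffs-np}, the game meets the hypotheses of Glicksberg's fixed point theorem (Corollary~2.4 in~\cite{reny-05}), which yields a mixed strategy Nash equilibrium $(\hat{\sigma}_n^A, \hat{\sigma}_n^D) \in M_1(Q) \times M_1(\Phi_n)$.

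For the characterization, the observation I would exploit is that both utility functions are affine in the decision rule $\varphi$. Consequently, for any defender mixed strategy $\sigma_n^D$, the payoffs to both players depend on $\sigma_n^D$ only through the averaged rule $\varphi_{\sigma_n^D}(\mathbf{x}^n) = \int \varphi(\mathbf{x}^n)\, \sigma_n^D(d\varphi)$; explicitly $u_n^A(q, \sigma_n^D) = u_n^A(q, \varphi_{\sigma_n^D})$ and $u_n^D(q, \sigma_n^D) = u_n^D(q, \varphi_{\sigma_n^D})$ for all $q \in Q$. Since the constraint $P_n^{FA}(\varphi) \leq \varepsilon$ is linear, $\Phi_n$ is convex, so $\varphi_{\sigma_n^D}$ again lies in $\Phi_n$. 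Hence replacing $\sigma_n^D$ by the point mass $\delta_{\varphi_{\sigma_n^D}}$ leaves every payoff unchanged, and a defender mixed strategy is always payoff-equivalent to the pure rule $\varphi_{\sigma_n^D}$.

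I would then identify the defender's best response. At equilibrium the effective pure rule $\varphi_{\hat{\sigma}_n^D}$ must maximize $\varphi \mapsto u_n^D(\hat{\sigma}_n^A, \varphi) = -\sum_{\mathbf{x}^n}(1 - \varphi(\mathbf{x}^n))\, q_{\hat{\sigma}_n^A}(\mathbf{x}^n)$ over $\varphi \in \Phi_n$. This is precisely the problem of minimizing the Type~II error (maximizing detection power) subject to the level constraint $P_n^{FA}(\varphi) \leq \varepsilon$, whose solution is given by the classical Neyman-Pearson lemma: the likelihood ratio test comparing $q_{\hat{\sigma}_n^A}(\mathbf{x}^n)$ to a threshold multiple of $p(\mathbf{x}^n)$, with randomization at the boundary tuned so that the false alarm probability equals $\varepsilon$. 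This is exactly the most powerful $\varepsilon$-level test for $p$ against $q_{\hat{\sigma}_n^A}$.

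The main obstacle I anticipate is the point-mass assertion. Because $u_n^D(\hat{\sigma}_n^A, \cdot)$ is linear in $\varphi$, its set of maximizers is a face of the polytope $\Phi_n$ rather than a single point in general: on the boundary set $\{\mathbf{x}^n : q_{\hat{\sigma}_n^A}(\mathbf{x}^n) = \tau\, p(\mathbf{x}^n)\}$ the randomization mass needed to attain level $\varepsilon$ can be spread in different ways, all with identical power and level. Since a mixed equilibrium strategy must be supported on pure best responses, $\hat{\sigma}_n^D$ is supported on this best-response face, and (by convexity of the face) $\varphi_{\hat{\sigma}_n^D}$ is itself a most powerful $\varepsilon$-level test; every rule in the face yields identical payoffs to both players. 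Thus $\hat{\sigma}_n^D$ is a point mass at the most powerful test whenever that test is unique, and otherwise any NE is payoff-equivalent to one in which the defender plays the point mass at such a test. Carefully handling the boundary randomization to reach level exactly $\varepsilon$ and arguing the interchangeability of the optimal tests is the delicate step; the remaining details parallel the Bayesian analysis.
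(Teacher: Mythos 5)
Your proposal follows essentially the same route as the paper: existence is proved identically (compactness of $Q$ and of $\Phi_n$ via Lemma~\ref{lemma:compactness-phin-np}, joint continuity via Lemma~\ref{lemma:cts-payoffs-np}, then Glicksberg), and the characterization likewise rests on the observation that against $\hat{\sigma}^A_n$ the defender faces the simple testing problem of $p^{\otimes n}$ versus $q_{\hat{\sigma}^A_n}$, so that defender best responses are exactly the most powerful $\varepsilon$-level tests given by the Neyman--Pearson lemma. Your reduction of any defender mixture to the averaged rule $\varphi_{\sigma^D_n}$ (affineness of payoffs plus convexity of $\Phi_n$) is a clean packaging that the paper does not spell out, and your use of the fact that every rule in $\mathrm{supp}(\hat{\sigma}^D_n)$ must be a best response matches the paper's first step.

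The gap is in the final point-mass assertion, which you leave conditional on uniqueness of the most powerful test; the proposition claims unconditionally that $\hat{\sigma}^D_n$ \emph{is} a point mass, and your fallback (``any NE is payoff-equivalent to one where the defender plays a point mass'') is a strictly weaker statement. The paper closes precisely this step in two moves. First, it shows by a perturbation argument that every $\varphi \in \mathrm{supp}(\hat{\sigma}^D_n)$ has size exactly $\varepsilon$: if $P^{FA}_n(\varphi) < \varepsilon$, one can raise $\varphi$ at a single word while staying in $\Phi_n$, and this strictly decreases the Type II error because, by Assumption~\ref{assm:a1b}, every word has positive probability under $q_{\hat{\sigma}^A_n}$ --- contradicting the equilibrium property. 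Second, it invokes the \emph{uniqueness part} of the Neyman--Pearson lemma (Section 5.1 in \cite{ferguson-67}), which applies here because each $\mathbf{x}^n$ has positive probability under both hypotheses, to conclude that every supported $\varphi$ coincides with the most powerful test $\hat{\varphi}_n$, so the support is a singleton and $\hat{\sigma}^D_n = \delta_{\hat{\varphi}_n}$. Your worry about reallocating randomization across a multi-word boundary set $\{\mathbf{x}^n : q_{\hat{\sigma}^A_n}(\mathbf{x}^n) = \tau\, p(\mathbf{x}^n)\}$ is a legitimate subtlety --- the classical uniqueness statement pins the test down only off that set --- but the paper's proof resolves the proposition by committing to the uniqueness appeal under positivity, whereas you stop short; as written, your argument does not prove the statement as claimed.
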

The existence of a mixed strategy equilibrium easily follows from the compactness of strategy spaces and continuity of utility functions. To show the specific form of defender's equilibrium strategy, we appeal to the Neyman-Pearson lemma.

Let $(\hat{\sigma}^A_n, \hat{\varphi}_n)$ denote a NE given by Proposition~\ref{prop:defenderpure-np}.
\subsection{Characterization of Equilibrium in the Binary case}
Consider the game $\mathcal{G}^{NP}(\varepsilon,d,n)$ in the binary case, i.e., $d=2$. Here, there are some interesting monotonicity properties of the utility functions that allow us to get a pure strategy Nash equilibrium for $\mathcal{G}^{NP}(\varepsilon,2,n)$, in which the defender plays a threshold-based test, i.e., declares the presence of an adversary whenever the number of 1's in the observation exceeds a threshold: 
\begin{lemma}Assume~\ref{assm:a1}-\ref{assm:a2}. Then,  the defender admits a strictly dominant strategy, and there exists a pure strategy Nash equilibrium for $\mathcal{G}^{NP}(\varepsilon,2,n)$.
\label{lemma:pureeqforbinary_np}
\end{lemma}
\begin{remark} The monotonicity alluded to above is a consequence of the fact that $u^D_n$ captures just the Type II error. In the Bayesian framework, we do not have this monotonicity in $u^D_n$, due to the presence of both the Type I and Type II errors in $u^D_n$, and hence, existence of a pure strategy Nash equilibrium in the binary case cannot be guaranteed in the Bayesian framework.
\end{remark}
\subsection{Concentration Properties of Equilibrium}
In this section, we study some concentration properties of the equilibrium. We have the following two lemmas, which can be proved similar to the corresponding Lemmas for the Bayesian formulation in Section~\ref{section:eqconcentration}.
\begin{lemma}Assume~\ref{assm:a1}-\ref{assm:a2}. Let $(\hat{\sigma}^A_n, \hat{\varphi}_n)_{n \geq 1}$  be a sequence such that, for each $n \geq 1$,  $(\hat{\sigma}^A_n, \hat{\varphi}_n)$ is a mixed strategy Nash equilibrium for $\mathcal{G}^{NP}(\varepsilon,d,n)$. Then, $e_n(\hat{\sigma}^A_n, \hat{\varphi}_n) \to 0 $ as $n \to \infty$. \label{lemma:errorbound-np}
\end{lemma}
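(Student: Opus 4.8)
The plan is to mirror the proof of the Bayesian Lemma~\ref{lemma:errorbound}, exploiting the defender's best-response property to reduce the claim to a statement about a single explicitly constructed test, with one extra step to check that this test respects the false-alarm constraint. By Proposition~\ref{prop:defenderpure-np} the defender's equilibrium strategy is the pure test $\hat{\varphi}_n$, and at the NE it is a best response against $\hat{\sigma}^A_n$. Since maximizing $u^D_n(\hat{\sigma}^A_n,\cdot)$ is the same as minimizing the (averaged) Type~II error $e_n(\hat{\sigma}^A_n,\cdot)=\int_Q e_n(q,\cdot)\,\hat{\sigma}^A_n(dq)$ over $\Phi_n$, we obtain $e_n(\hat{\sigma}^A_n,\hat{\varphi}_n)\le e_n(\hat{\sigma}^A_n,\varphi)$ for every feasible $\varphi\in\Phi_n$. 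It therefore suffices to exhibit one $\varphi\in\Phi_n$ whose averaged Type~II error tends to $0$.

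I would take $\varphi$ to be the deterministic rule that accepts $H_0$ exactly when the empirical type $L_{\mathbf{x}^n}$ of the observation lies in a small Euclidean ball $B_\eta(p)$ around $p$, and declares $H_1$ otherwise. Two things must then be checked. First, feasibility: the false-alarm probability is $P_n^{FA}(\varphi)=P_p(L_{\mathbf{x}^n}\notin B_\eta(p))$, which tends to $0$ as $n\to\infty$ because the empirical type concentrates at $p$ under $p$; hence $\varphi\in\Phi_n$ for all $n$ large enough, so that the best-response inequality applies from that index onward. Second, smallness of the Type~II error: for each attacker distribution $q$ one has $e_n(q,\varphi)=P_q(L_{\mathbf{x}^n}\in B_\eta(p))$, and I need this to vanish \emph{uniformly} in $q\in Q$, since otherwise integrating against the moving measure $\hat{\sigma}^A_n$ need not preserve the limit.

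This uniform decay is where the assumptions do the work, and I expect it to be the main obstacle. From~\ref{assm:a1} ($p\notin Q$ and $Q$ closed) the distance $\rho=\mathrm{dist}(p,Q)$ is strictly positive; choosing $\eta<\rho$ small enough to also keep $\overline{B_\eta(p)}$ inside the interior of the simplex—possible since $p$ is interior by~\ref{assm:a1b}—makes $\overline{B_\eta(p)}$ disjoint from $Q$. As $M_1(\mathcal{X})$ is compact, $Q$ is compact, and by~\ref{assm:a1b} together with compactness every $q\in Q$, as well as every $\mu\in\overline{B_\eta(p)}$, is uniformly bounded away from the boundary. Hence $(\mu,q)\mapsto D(\mu\|q)$ is continuous and strictly positive on the compact product $\overline{B_\eta(p)}\times Q$, so it attains a positive minimum $\alpha>0$. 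Sanov's theorem then yields the uniform bound $\sup_{q\in Q}P_q(L_{\mathbf{x}^n}\in B_\eta(p))\le (n+1)^d e^{-n\alpha}$, whence $e_n(\hat{\sigma}^A_n,\varphi)=\int_Q P_q(L_{\mathbf{x}^n}\in B_\eta(p))\,\hat{\sigma}^A_n(dq)\le (n+1)^d e^{-n\alpha}\to 0$. Combining this with the best-response inequality gives $e_n(\hat{\sigma}^A_n,\hat{\varphi}_n)\to 0$, as claimed. The crux is thus the passage from the pointwise fact ``$q$ lies outside $\overline{B_\eta(p)}$'' to a uniform positive lower bound on the rate function, which relies essentially on compactness of $Q$, the uniform positivity from~\ref{assm:a1b}, and continuity of relative entropy away from the boundary.
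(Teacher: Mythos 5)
Your proof is correct and takes essentially the same approach as the paper: the paper proves the Bayesian analogue (Lemma~\ref{lemma:errorbound}) by letting the defender deviate to a type-based test that accepts $H_0$ exactly when the empirical type falls in a small ball around $p$, bounding the resulting error by the method-of-types estimate $(n+1)^d e^{-n\inf_{\nu \in B(p,\delta), q \in Q} D(\nu\|q)}$, and states that the Neyman--Pearson version follows by the same argument. Your extra verification that the deviation test satisfies the false-alarm constraint $P^{FA}_n(\varphi)\le\varepsilon$ for all large $n$ is precisely the one additional step this adaptation requires, and your uniform-in-$q$ positivity of the rate (via compactness of $Q$, \ref{assm:a1b}, and continuity of relative entropy) matches the paper's reasoning.
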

\begin{lemma}
Assume~\ref{assm:a1}-\ref{assm:a2}, and let $(\hat{\sigma}^A_n, \hat{\varphi}_n)_{n \geq 1}$  be as in Lemma~\ref{lemma:errorbound-np}. Then $\hat{\sigma}^A_n \to \delta_{q^*}$ weakly as $n \to \infty$. \label{lemma:weakconv-np}
\end{lemma}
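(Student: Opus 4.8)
The plan is to mirror the proof of Lemma~\ref{lemma:weakconv}, exploiting that deviating to the cost-minimizer $q^*$ is available to the attacker at every $n$. Recall $e_n(q,\varphi) = \sum_{\mathbf{x}^n}(1-\varphi(\mathbf{x}^n))q(\mathbf{x}^n) = -u^D_n(q,\varphi)$ is the missed-detection term, and let $V_n := u^A_n(\hat{\sigma}^A_n,\hat{\varphi}_n)$ be the attacker's equilibrium payoff. Since $\hat{\varphi}_n$ is the (pure) defender equilibrium strategy and $\hat{\sigma}^A_n$ is a best response to it, the Nash property gives $V_n \geq u^A_n(q^*,\hat{\varphi}_n)$. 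Because $q^*\in Q$ and the missed-detection term is nonnegative, $u^A_n(q^*,\hat{\varphi}_n) = e_n(q^*,\hat{\varphi}_n) - c(q^*) \geq -c(q^*)$, whence $V_n \geq -c(q^*)$ for all $n$.

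Next I would relate $V_n$ to the average cost. Integrating the definition of $u^A_n$ against $\hat{\sigma}^A_n$ (with $\hat{\varphi}_n$ pure) yields $V_n = e_n(\hat{\sigma}^A_n,\hat{\varphi}_n) - \int_Q c(q)\,\hat{\sigma}^A_n(dq)$, so that $\int_Q c\, d\hat{\sigma}^A_n = e_n(\hat{\sigma}^A_n,\hat{\varphi}_n) - V_n \leq e_n(\hat{\sigma}^A_n,\hat{\varphi}_n) + c(q^*)$. By Lemma~\ref{lemma:errorbound-np}, $e_n(\hat{\sigma}^A_n,\hat{\varphi}_n) \to 0$, giving $\limsup_n \int_Q c\, d\hat{\sigma}^A_n \leq c(q^*)$. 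On the other hand $c(q) \geq c(q^*)$ for every $q \in Q$ by the definition of $q^*$ in~\ref{assm:a2}, so $\int_Q c\, d\hat{\sigma}^A_n \geq c(q^*)$ for all $n$. Combining the two bounds, $\int_Q c\, d\hat{\sigma}^A_n \to c(q^*)$; equivalently, the nonnegative continuous integrand $c - c(q^*)$ integrates to zero in the limit.

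Finally I would upgrade this to weak convergence via compactness and uniqueness of the minimizer. By~\ref{assm:a1}, $Q$ is a closed subset of the compact simplex $M_1(\mathcal{X})$ and hence compact, so $M_1(Q)$ is weakly sequentially compact by Prokhorov's theorem. Take any subsequence of $(\hat{\sigma}^A_n)$; it has a further subsequence converging weakly to some $\sigma \in M_1(Q)$. Since $c$ is continuous and bounded on the compact set $Q$ by~\ref{assm:a2}, $\int_Q c\, d\sigma = \lim_k \int_Q c\, d\hat{\sigma}^A_{n_k} = c(q^*)$, so $\int_Q (c - c(q^*))\, d\sigma = 0$. As $c - c(q^*) \geq 0$ is continuous and vanishes only at $q^*$ (uniqueness in~\ref{assm:a2}), $\sigma$ must be supported on $\{q^*\}$, i.e.\ $\sigma = \delta_{q^*}$. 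Every subsequential weak limit thus equals $\delta_{q^*}$, and since $M_1(Q)$ is compact and metrizable this forces $\hat{\sigma}^A_n \to \delta_{q^*}$ weakly.

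The argument is essentially soft; the only place requiring care is the first step, where one must verify that the Nash best-response inequality is applied against the correct pure defender strategy $\hat{\varphi}_n$ and that the missed-detection term $e_n(q^*,\hat{\varphi}_n)$ is genuinely nonnegative, so that the deviation bound $V_n \geq -c(q^*)$ holds uniformly in $n$. The remaining steps are a standard compactness-plus-uniqueness tightening, which is why the conclusion holds under~\ref{assm:a1}--\ref{assm:a2} alone, without invoking the separation assumption~\ref{assm:a3}.
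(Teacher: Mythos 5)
Your proposal is correct and follows essentially the same route as the paper: the paper proves this lemma by repeating its Bayesian argument (Lemma~\ref{lemma:weakconv}), namely using the Nash deviation bound $u^A_n(\hat{\sigma}^A_n,\hat{\varphi}_n) \geq u^A_n(q^*,\hat{\varphi}_n) \geq -c(q^*)$ together with $e_n(\hat{\sigma}^A_n,\hat{\varphi}_n)\to 0$ to get $\int_Q c\,d\hat{\sigma}^A_n \to c(q^*)$, then Prohorov compactness and uniqueness of the minimizer $q^*$ to identify every subsequential weak limit as $\delta_{q^*}$. Your only cosmetic difference is phrasing the last step via a nonnegative integrand vanishing only at $q^*$ rather than the paper's explicit contradiction on $\mu(B(q^*,\varepsilon)^c)>0$, which is the same argument.
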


We also have that the error at equilibrium goes to $0$ even when the attacker deviates from her equilibrium strategy.
\begin{lemma}\label{lemma:errorgoesto0-np}
Assume~\ref{assm:a1}-\ref{assm:a3}, and let $(\hat{\sigma}^A_n, \hat{\varphi}_n)_{n \geq 1}$ be as in Lemma~\ref{lemma:errorbound-np}. Then,
\begin{align*}
\sup_{q \in Q}e_n(q, \hat{\varphi}_n) \to 0 \text{ as }n \to \infty.
\end{align*}%
\end{lemma}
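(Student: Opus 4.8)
The plan is to show that for all large $n$ the set of types that $\hat\varphi_n$ does \emph{not} reject stays within a fixed open neighborhood of $p$ that is bounded away from $Q$; the bound on $\sup_{q\in Q}e_n(q,\hat\varphi_n)$ then follows from a method-of-types estimate. Write $L_n(\mathbf{x}^n)$ for the type (empirical distribution) of $\mathbf{x}^n$. Since $q(\mathbf{x}^n)$, and hence $q_{\hat\sigma^A_n}(\mathbf{x}^n)=\int_Q q(\mathbf{x}^n)\,\hat\sigma^A_n(dq)$, depends on $\mathbf{x}^n$ only through $L_n(\mathbf{x}^n)$, the likelihood ratio test $\hat\varphi_n$ of Proposition~\ref{prop:defenderpure-np} is constant on type classes, and for a type $\mu$,
\begin{equation*}
\frac{1}{n}\log\frac{q_{\hat\sigma^A_n}(\mathbf{x}^n)}{p(\mathbf{x}^n)}=D(\mu\|p)-R_n(\mu),\qquad R_n(\mu):=-\frac{1}{n}\log\int_Q e^{-nD(\mu\|q)}\hat\sigma^A_n(dq).
\end{equation*}

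First I would extract a uniform gap from~\ref{assm:a3}: the map $q\mapsto D(q\|p)-D(q\|q^*)$ is continuous and strictly positive on the compact set $Q$ (finiteness, and $Q$ being bounded away from the simplex boundary, coming from~\ref{assm:a1b}), hence has a positive minimum $2\delta$, and by continuity there is an open $N\supseteq Q$, still bounded away from the boundary, on which $D(\mu\|p)-D(\mu\|q^*)\ge\delta$. Next I would control $R_n$ on both sides. The crude bound $\int_Q e^{-nD(\mu\|q)}\hat\sigma^A_n(dq)\le e^{-n\inf_{q\in Q}D(\mu\|q)}$ gives $R_n(\mu)\ge\inf_{q\in Q}D(\mu\|q)$. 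For a matching upper bound, I restrict the integral to a small ball $A=\{q\in Q:\|q-q^*\|<r\}$ around $q^*$: weak convergence $\hat\sigma^A_n\to\delta_{q^*}$ (Lemma~\ref{lemma:weakconv-np}) gives $\hat\sigma^A_n(A)\to1$, so $-\tfrac1n\log\hat\sigma^A_n(A)\to0$, while uniform continuity of $D(\mu\|\cdot)$ near $q^*$ gives $\sup_{q\in A}D(\mu\|q)\le D(\mu\|q^*)+o_r(1)$ uniformly in $\mu$. Choosing $r$ small and then $n$ large yields $R_n(\mu)\le D(\mu\|q^*)+\delta/4$ uniformly, so on $N$ the log-likelihood ratio is at least $D(\mu\|p)-D(\mu\|q^*)-\delta/4\ge 3\delta/4$ for all large $n$.

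It then remains to check that the (normalized) test threshold $t_n$ eventually lies below $\delta/2$, so that every type in $N$ is rejected. Since $\hat\varphi_n$ is the most powerful level-$\varepsilon$ test, it may be taken of exact size $P_n^{FA}(\hat\varphi_n)=\varepsilon$. If instead $t_n\ge\delta/2$ for infinitely many $n$, then on those $n$ the rejection region is contained in $\{\mu:D(\mu\|p)-\inf_{q\in Q}D(\mu\|q)\ge\delta/2\}$ by the lower bound on $R_n$; this set is closed and excludes $p$ (its value there is $-\inf_{q\in Q}D(p\|q)<0$ because $p\notin Q$), hence is bounded away from $p$, and Sanov's theorem forces $P_n^{FA}(\hat\varphi_n)\to0$ along the subsequence, a contradiction. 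Thus for large $n$ every $\mu\in N$ has log-likelihood ratio $\ge 3\delta/4>t_n$, so $\hat\varphi_n=1$ there. Consequently $e_n(q,\hat\varphi_n)=\sum_{\mathbf{x}^n}(1-\hat\varphi_n(\mathbf{x}^n))q(\mathbf{x}^n)\le\sum_{\mathbf{x}^n:\,L_n(\mathbf{x}^n)\notin N}q(\mathbf{x}^n)$, and since $N^{c}$ is compact and disjoint from $Q$ the exponent $\kappa:=\inf_{q\in Q}\inf_{\mu\in N^{c}}D(\mu\|q)$ is positive, whence the method-of-types bound gives $\sup_{q\in Q}e_n(q,\hat\varphi_n)\le(n+1)^{d}e^{-n\kappa}\to0$.

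I expect the main obstacle to be the interplay between the $n$-dependent mixture $q_{\hat\sigma^A_n}$ and the threshold. Because $q_{\hat\sigma^A_n}$ is not a product measure, the log-likelihood ratio is governed by the Laplace-type functional $R_n$ rather than a single i.i.d.\ sum, so the upper bound on $R_n$ must be squeezed out of mere weak convergence of $\hat\sigma^A_n$ localized near $q^*$; and pinning $t_n$ below the gap $\delta$ relies essentially on the exact size-$\varepsilon$ property of the most powerful test combined with the Sanov estimate under $p$. Everything else (the gap from~\ref{assm:a3} and the final method-of-types step) is routine.
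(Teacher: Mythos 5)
Your proof is correct, and its overall architecture matches the paper's: both arguments show that for large $n$ the set of types on which $\hat\varphi_n$ accepts $H_0$ is disjoint from a fixed neighborhood of $Q$, by combining (i) control of the Neyman--Pearson threshold via the exact size-$\varepsilon$ property of the equilibrium test plus concentration of types under $p$, (ii) the weak convergence $\hat\sigma^A_n\to\delta_{q^*}$ of Lemma~\ref{lemma:weakconv-np}, and (iii) Assumption~\ref{assm:a3}; the conclusion then follows from a method-of-types bound uniform over $Q$. Where you genuinely depart from the paper is in how the $n$-dependent mixture is reduced to $q^*$. The paper first applies Jensen's inequality, $\log\int (q(\mathbf{x}^n)/p(\mathbf{x}^n))\,\hat\sigma^A_n(dq)\ \ge\ \int \log\left(q(\mathbf{x}^n)/p(\mathbf{x}^n)\right)\hat\sigma^A_n(dq)$, so that the acceptance region is contained in $\Gamma_n=\{\mu : D(\mu\|p)-\int D(\mu\|q)\,\hat\sigma^A_n(dq)\le (\log\gamma_n)/n\}$, and then invokes a separate auxiliary result (Lemma~\ref{lemma:uconv-integrals-np}, proved by a compactness/covering argument) giving uniform convergence of $\int D(\mu\|q)\,\hat\sigma^A_n(dq)$ to $D(\mu\|q^*)$. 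You instead control the Laplace functional $R_n(\mu)=-\tfrac1n\log\int_Q e^{-nD(\mu\|q)}\hat\sigma^A_n(dq)$ on both sides: from below trivially by $\inf_{q\in Q}D(\mu\|q)$ (which is also, in essence, what the paper uses to prove its threshold bound $\limsup_n\gamma_n\le 1$), and from above by localizing the integral to a small ball $A$ around $q^*$ and using $\hat\sigma^A_n(A)\to1$. This buys a self-contained treatment of the exact likelihood ratio that dispenses with both the Jensen step and the auxiliary uniform-convergence lemma; the paper's route, in exchange, linearizes the problem in one line and reuses a lemma it needs anyway for the Bayesian case. One cosmetic point: the exact-size property $P^{FA}_n(\hat\varphi_n)=\varepsilon$ is not something you need to ``take''; it is established for every equilibrium test in the proof of Proposition~\ref{prop:defenderpure-np}, so you can simply cite it.
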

The main idea in the proof of the above lemma is to show that the acceptance region of $H_0$ under any equilibrium does not intersect the set $Q$. With this lemma at hand, we now have the following concentration property for the support of the equilibrium strategy of the attacker, which can be proved similar to Lemma~\ref{lemma:eqconcentration} in the Bayesian formulation.
\begin{lemma}
Assume~\ref{assm:a1}-\ref{assm:a3}, and let $(\hat{\sigma}^A_n, \hat{\varphi}_n)_{n \geq 1}$ be as in Lemma~\ref{lemma:errorbound-np}. Let $(q_n)_{n \geq 1}$ be a sequence such that $q_n \in supp(\hat{\sigma}^A_n)$ for each $n \geq 1$. Then, $q_n \to q^*$ as $n \to \infty$. \label{lemma:qnconvergence-np}
\end{lemma}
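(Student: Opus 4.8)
The plan is to mirror the proof of Lemma~\ref{lemma:eqconcentration} from the Bayesian setting, exploiting that in any Nash equilibrium $(\hat{\sigma}^A_n, \hat{\varphi}_n)$ the defender plays the single pure strategy $\hat{\varphi}_n$ (Proposition~\ref{prop:defenderpure-np}), so that the attacker is effectively maximizing the continuous payoff $q \mapsto u^A_n(q, \hat{\varphi}_n)$ over the compact set $Q$. First I would record the standard best-response characterization of the support: since $\hat{\sigma}^A_n$ maximizes $\int_Q u^A_n(q, \hat{\varphi}_n)\,\hat{\sigma}^A_n(dq)$ and $u^A_n(\cdot, \hat{\varphi}_n)$ is continuous (Lemma~\ref{lemma:cts-payoffs-np}) on the compact simplex, the equilibrium strategy must be supported on the (closed, nonempty) set of maximizers of $u^A_n(\cdot, \hat{\varphi}_n)$. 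In particular, every $q_n \in \text{supp}(\hat{\sigma}^A_n)$ satisfies $u^A_n(q_n, \hat{\varphi}_n) \geq u^A_n(q^*, \hat{\varphi}_n)$, using that $q^* \in Q$.

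Next I would unpack this inequality via the explicit form~(\ref{eqn:attacker_revenue_pure_np}), namely $u^A_n(q, \hat{\varphi}_n) = e_n(q, \hat{\varphi}_n) - c(q)$, to obtain $c(q_n) - c(q^*) \leq e_n(q_n, \hat{\varphi}_n) - e_n(q^*, \hat{\varphi}_n)$. Since the Type~II error $e_n$ is nonnegative, the right-hand side is at most $\sup_{q \in Q} e_n(q, \hat{\varphi}_n)$, which vanishes by Lemma~\ref{lemma:errorgoesto0-np}. Hence $\limsup_n c(q_n) \leq c(q^*)$; combined with $c(q_n) \geq c(q^*) = \min_{q \in Q} c(q)$, this forces $c(q_n) \to c(q^*)$.

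Finally I would conclude by a compactness-and-uniqueness argument. Suppose, for contradiction, that $q_n \not\to q^*$; since $Q$ is a closed (hence compact) subset of the simplex, some subsequence $q_{n_k}$ converges to a limit $\bar{q} \in Q$ with $\bar{q} \neq q^*$. Continuity of $c$ (Assumption~\ref{assm:a2}) then gives $c(\bar{q}) = \lim_k c(q_{n_k}) = c(q^*) = \min_{q \in Q} c(q)$, so $\bar{q}$ is also a minimizer, contradicting the uniqueness of $q^*$ in Assumption~\ref{assm:a2}. The hard part is not in this lemma itself---given the uniform error bound of Lemma~\ref{lemma:errorgoesto0-np}, the remaining steps are essentially bookkeeping---but rather in that uniform bound, which is exactly where Assumption~\ref{assm:a3} enters: without it, $\sup_{q \in Q} e_n(q, \hat{\varphi}_n)$ need not decay, the cost gap $c(q_n) - c(q^*)$ could be compensated by an error advantage for the attacker, and concentration at $q^*$ can fail.
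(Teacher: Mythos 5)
Your proposal is correct and follows essentially the same route as the paper: the paper proves this lemma by adapting its proof of Lemma~\ref{lemma:eqconcentration} (use the uniform error decay from Lemma~\ref{lemma:errorgoesto0-np} together with the best-response property of points in $\mathrm{supp}(\hat{\sigma}^A_n)$ to get $c(q_n) \to c(q^*)$, then conclude via continuity, compactness, and uniqueness of the minimizer $q^*$), which is precisely what you do. Your observation that all the difficulty, and the role of Assumption~\ref{assm:a3}, sits in Lemma~\ref{lemma:errorgoesto0-np} also matches the paper's own discussion.
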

\begin{remark}
Note that, in one dimension ($d=1$), the acceptance region of an optimal Neyman-Pearson test for a fixed alternative $q$ will be a ``vanishingly small neighborhood of the null distribution $p$'' and that while it can still intersect $Q$ for finite $n$, it may not for large-enough $n$; so that Lemma~\ref{lemma:errorgoesto0-np} may always hold. However, it is unclear how this might generalize to higher dimension. Intuitively, in higher dimension, the acceptance region may become close to $p$ only from certain directions. We also note that our proof of Lemma A.6 actually uses Assumption (A4) and not a weaker version of it---see the expression of $\Gamma_n$ in the proof of Lemma~\ref{lemma:errorgoesto0-np}. Therefore, we believe that (A4) is needed in higher dimensions even for the Neyman-Pearson case; although it is possible that a weaker assumption will suffice in one dimension.
\end{remark}
\subsection{Error Exponents}
Our main result in the Neyman-Pearson formulation is the following theorem.
\begin{theorem}\label{theorem:errorexponent_np}
Assume~\ref{assm:a1}-\ref{assm:a3}, and let $(\hat{\sigma}^A_n, \hat{\varphi}_n)_{n \geq 1}$ be as in Lemma~\ref{lemma:errorbound-np}. Then,
\begin{align*}
\lim_{n \to \infty} \frac{1}{n} \log e_n(\hat{\sigma}^A_n, \hat{\varphi}_n) = - D(p||q^*).
\end{align*}
\end{theorem}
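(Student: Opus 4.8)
The plan is to sandwich the equilibrium error $e_n(\hat{\sigma}^A_n,\hat{\varphi}_n)$—which, since the Type II error is linear in the attacker's distribution, equals the missed-detection probability $\sum_{\mathbf{x}^n}(1-\hat{\varphi}_n(\mathbf{x}^n))q_{\hat{\sigma}^A_n}(\mathbf{x}^n)$ of the defender's test against the mixture $q_{\hat{\sigma}^A_n}$ on $\mathcal{X}^n$—between two bounds whose exponents both tend to $-D(p\|q^*)$. Throughout I would exploit three facts established earlier: by Proposition~\ref{prop:defenderpure-np} the equilibrium test $\hat{\varphi}_n$ is the most powerful $\varepsilon$-level test against $q_{\hat{\sigma}^A_n}$, hence it minimizes $e_n(\hat{\sigma}^A_n,\cdot)$ over the feasible set $\Phi_n$; by Lemma~\ref{lemma:qnconvergence-np} the support of $\hat{\sigma}^A_n$ collapses onto $q^*$; and Assumption~\ref{assm:a1b} keeps all distributions bounded away from the boundary of the simplex, so that relative entropy and the log-likelihood ratios are continuous and uniformly bounded on the relevant neighborhoods.

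For the \textbf{lower bound} I would first note that $\hat{\varphi}_n\in\Phi_n$ has false-alarm probability at most $\varepsilon<1$, so the converse half of Stein's lemma (Chernoff--Stein, see Section~3.4 in~\cite{dembo-zeitouni}) applied to the simple-versus-simple problem $p$ versus $q^*$ gives $\liminf_n \frac1n\log e_n(q^*,\hat{\varphi}_n)\ge -D(p\|q^*)$. To pass from $q^*$ to the actual mixture I would use concentration: by Lemma~\ref{lemma:qnconvergence-np} (upgraded from sequences to uniform shrinkage of the support via a routine subsequence argument), every $q\in\mathrm{supp}(\hat{\sigma}^A_n)$ satisfies $\max_{i}|\log(q(i)/q^*(i))|\le\kappa_n$ with $\kappa_n\to0$, whence $q^{\otimes n}(\mathbf{x}^n)\ge e^{-n\kappa_n}(q^*)^{\otimes n}(\mathbf{x}^n)$ pointwise. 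Integrating this against $\hat{\sigma}^A_n$ yields $e_n(\hat{\sigma}^A_n,\hat{\varphi}_n)\ge e^{-n\kappa_n}e_n(q^*,\hat{\varphi}_n)$, and combining with the Stein converse gives $\liminf_n\frac1n\log e_n(\hat{\sigma}^A_n,\hat{\varphi}_n)\ge -D(p\|q^*)$.

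For the \textbf{upper bound} I would let the defender deviate to the deterministic relative-entropy-ball test $\varphi^{\delta}_n$ that accepts $H_0$ exactly on $\{\mathbf{x}^n:D(\hat{P}_{\mathbf{x}^n}\|p)\le\delta\}$, where $\hat{P}_{\mathbf{x}^n}$ is the empirical type. Sanov's theorem makes its false-alarm probability decay to zero, so $\varphi^\delta_n\in\Phi_n$ for $n$ large and the best-response property of $\hat{\varphi}_n$ gives $e_n(\hat{\sigma}^A_n,\hat{\varphi}_n)\le e_n(\hat{\sigma}^A_n,\varphi^\delta_n)\le\sup_{q\in\mathrm{supp}(\hat{\sigma}^A_n)}q^{\otimes n}(\{D(\hat{P}\|p)\le\delta\})$. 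A method-of-types bound controls the last quantity by $(n+1)^d\exp(-n\inf_{\mu:D(\mu\|p)\le\delta}D(\mu\|q))$; using continuity of relative entropy together with the uniform concentration of the support around $q^*$, I would obtain $\limsup_n\frac1n\log e_n(\hat{\sigma}^A_n,\hat{\varphi}_n)\le -\inf_{\mu:D(\mu\|p)\le\delta}D(\mu\|q^*)$ for each fixed $\delta>0$. Finally, letting $\delta\downarrow0$ and using that this infimum increases to $D(p\|q^*)$ (its constrained minimizer converges to $p$) closes the bound.

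The main obstacle I anticipate is the upper bound, specifically the interplay between the two limits $n\to\infty$ and $\delta\downarrow0$ together with the need for uniformity over the attacker's support. Because $q_{\hat{\sigma}^A_n}$ is a mixture and not a product measure, one cannot simply quote the achievability half of Stein's lemma for a fixed alternative; instead one must run the method-of-types estimate uniformly over $\mathrm{supp}(\hat{\sigma}^A_n)$, which is exactly where Assumption~\ref{assm:a3} enters through the concentration Lemma~\ref{lemma:qnconvergence-np}. Indeed, were the support allowed to retain mass on points of $Q$ closer to $p$ than $q^*$, the corresponding smaller value of $D(p\|\cdot)$ would dominate the supremum and the exponent would fall strictly below $D(p\|q^*)$—this is the mechanism behind the failure of the analogous statement when~\ref{assm:a3} is dropped.
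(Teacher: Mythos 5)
Your proof is correct, and your upper bound is essentially the paper's argument: the defender deviates to a deterministic test whose acceptance region is a small ball around $p$ in type space (the paper uses a Euclidean ball $B(p,\delta)$, you use a KL-ball $\{\mu : D(\mu\|p)\le\delta\}$ --- an immaterial difference), the missed-detection probability is bounded by the method of types, the exponent $\inf_\mu D(\mu\|q)$ is transferred to $\inf_\mu D(\mu\|q^*)$ uniformly over $\mathrm{supp}(\hat{\sigma}^A_n)$ via the concentration Lemma~\ref{lemma:qnconvergence-np}, and finally $\delta\downarrow 0$. You are in fact more careful than the paper on one point: you note explicitly that the deviation test satisfies the false-alarm constraint, i.e.\ lies in $\Phi_n$, only for $n$ large enough, a feasibility issue the paper's proof glosses over when it invokes $e_n(\hat{\sigma}^A_n,\hat{\varphi}_n)\le e_n(\hat{\sigma}^A_n,\varphi^{\delta})$ ``for all $n\ge 1$''. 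Where you genuinely diverge is the lower bound. The paper exploits the game structure: since $u^A_n(q,\varphi)=e_n(q,\varphi)-c(q)$ and $c(q)\ge c(q^*)$ for all $q$, the attacker's Nash-deviation inequality to the pure strategy $q^*$ gives $e_n(\hat{\sigma}^A_n,\hat{\varphi}_n)\ge e_n(q^*,\hat{\varphi}_n)-c(q^*)+\int c\,d\hat{\sigma}^A_n\ge e_n(q^*,\hat{\varphi}_n)\ge e_n(q^*,\varphi^*_n)$, and then Stein's lemma for the optimal $\varepsilon$-level test $\varphi^*_n$ finishes. You instead compare measures pointwise: uniform shrinkage of $\mathrm{supp}(\hat{\sigma}^A_n)$ onto $q^*$ (a legitimate upgrade of Lemma~\ref{lemma:qnconvergence-np}, using \ref{assm:a1b} to control the log-ratios) gives $q(\mathbf{x}^n)\ge e^{-n\kappa_n}q^*(\mathbf{x}^n)$ with $\kappa_n\to 0$, hence $e_n(\hat{\sigma}^A_n,\hat{\varphi}_n)\ge e^{-n\kappa_n}e_n(q^*,\hat{\varphi}_n)$, and the converse half of Stein's lemma applied to $\hat{\varphi}_n$ closes the bound. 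Both routes are valid, but note the trade-off: the paper's lower bound needs only the equilibrium property and Assumptions \ref{assm:a1}--\ref{assm:a2} (Assumption~\ref{assm:a3} enters solely through the upper bound), whereas your lower bound consumes \ref{assm:a3} as well, since it routes through the concentration lemma; under the theorem's hypotheses this costs nothing, but the paper's argument reveals that the $\liminf$ half holds without \ref{assm:a3}, which is consistent with your own closing observation that it is the upper bound that fails when \ref{assm:a3} is dropped.
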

%
Again, we note that the error exponent is the same as that of the classical Neyman-Pearson hypothesis testing problem between $p$ and $q^*$.

\section{Proofs}\label{section:proofs}
\subsection{Proof of Lemma~\ref{lemma:cts-payoffs}}
Since we are on a metric space, it suffices to show sequential continuity. Let $\{(q_k,\varphi_k), k \geq 1\}$ be a sequence such that $(q_k,\varphi_k) \to (q, \varphi)$ as $k \to \infty$. First, consider $u^D_n$. Notice that, for each $\mathbf{x}^n$, we have $q_k(\mathbf{x}^n) \to q(\mathbf{x}^n)$, and $\varphi_k (\mathbf{x}^n) \to \varphi(\mathbf{x}^n)$ as $k \to \infty$. Therefore, we have that $q_k(\mathbf{x}^n) \varphi_k(\mathbf{x}^n) \to q(\mathbf{x}^n) \varphi(\mathbf{x}^n)$, which yields that,
\begin{align*}
\lim_{k \to \infty} \sum_{\mathbf{x}^n} q_k(\mathbf{x}^n) \varphi_k(\mathbf{x}^n) = \sum_{\mathbf{x}^n} q(\mathbf{x}^n) \varphi(\mathbf{x}^n).
\end{align*}
Similarly, we also have
\begin{align*}
\lim_{k \to \infty} \sum_{\mathbf{x}^n}p(\mathbf{x}^n) \varphi_k(\mathbf{x}^n) = \sum_{\mathbf{x}^n} p(\mathbf{x}^n) \varphi(\mathbf{x}^n).
\end{align*}

Therefore, we have that $u^D_n(q_k,\varphi_k) \to u^D_n(q,\varphi)$ as $k \to \infty$ which proves continuity of the utility of the defender. Using similar arguments, and by using the continuity of the cost function $c$ on $Q$, we see that $u^A_n(q_k,\varphi_k) \to u^A_n(q,\varphi)$ as $k \to \infty$, which shows the continuity of the utility of the attacker. 
\qed

\subsection{Proof of Proposition~\ref{prop:defenderpure}}
$\mathcal{G}^B(d,n)$ is a two-player game with compact strategy spaces. Also, by Lemma~\ref{lemma:cts-payoffs}, the utilities (in pure strategies) of both the attacker and the defender are jointly continuous on $Q \times \Phi_n$. Therefore, an application of the Glicksberg fixed point theorem (see, for example, Corollary~2.4 in~\cite{reny-05}) tell us that there exists a mixed strategy Nash equilibrium (NE) for the adversarial hypothesis testing game $\mathcal{G}^B(d,n)$.

We now show the structure of the equilibrium strategy of the defender. Note that, for any $\varphi \in \Phi_n$,  $u^D_n(\hat{\sigma}^A_n,\varphi) = -1 + \sum_{\mathbf{x}^n} \varphi(\mathbf{x}^n) (q_{\hat{\sigma}^A_n}(\mathbf{x}^n ) - \gamma p(\mathbf{x}^n)), $ where $q_{\hat{\sigma}^A_n}(\mathbf{x}^n) = \int q(\mathbf{x}^n) \hat{\sigma}^A_n(dq)$. Therefore, using the characterization of a NE (see Proposition 140.1 in~\cite{osborne}), it follows that for any $\varphi \in \text{ supp}(\hat{\sigma}^{D}_n)$, we have
\begin{align*}
\varphi(\mathbf{x}^n) = \left\{
\begin{array}{ll}
1, & \text{if } q_{\hat{\sigma}^A_n}(\mathbf{x}^n) - \gamma p(\mathbf{x}^n) > 0,\\
0, & \text{if } q_{\hat{\sigma}^A_n}(\mathbf{x}^n) - \gamma p(\mathbf{x}^n) < 0.
\end{array}
\right.
\end{align*}

Now, define $\hat{\varphi}_n$ such that $\hat{\varphi}_n(\mathbf{x}^n) = \int \varphi(\mathbf{x}^n)\hat{\sigma}^{D}_n(d\varphi)$ whenever $\mathbf{x}^n$ is such that $q_{\hat{\sigma}^A_n}(\mathbf{x}^n) - \gamma p(\mathbf{x}^n) = 0$, and that satisfies the above condition when $\mathbf{x}^n$ is such that $q_{\hat{\sigma}^A_n}(\mathbf{x}^n) - \gamma p(\mathbf{x}^n) \neq 0$. Consider the strategy profile $(\hat{\sigma}^A_n,\hat{\varphi}_n)$ where the defender plays the pure strategy $\hat{\varphi}_n$. By the choice of $\hat{\varphi}_n$, we see that $u^A_n(q,\hat{\varphi}_n) = u^A_n(q,\hat{\sigma}^D_n)$ for all $q \in Q$, and $u^D_n(\hat{\sigma}^A_n,\varphi) \leq u^D_n(\hat{\sigma}^A_n,\hat{\varphi}_n)$ for any $\varphi \in \Phi_n$. Therefore, using the characterization of a NE, we see that $(\hat{\sigma}^A_n,\hat{\varphi}_n)$ is a NE. This completes the proof of the Proposition. 
\qed
\subsection{Proof of Lemma~\ref{lemma:errorbound}}
By Assumption~\ref{assm:a1}, there exist a $\delta >0$ such that $B(p,\delta) \cap Q = \emptyset$, where $B(p,\delta)$ denotes an open ball of radius $\delta$ centered at $p$. Let $\varphi^{\delta}$ denote the deterministic decision rule whose rejection region is the set $B(p,\delta)^c$, i.e., $\varphi^{\delta}(\mathbf{x}^n) = 1$ whenever $\mathcal{P}_{\mathbf{x}^n} \in B(p,\delta)^c$ and $\varphi^{\delta}(\mathbf{x}^n) = 0$ otherwise, where $\mathcal{P}_{\mathbf{x}^n} \in M_1(\mathcal{X})$ denotes the type of $\mathbf{x}^n$, i.e., $\mathcal{P}_{\mathbf{x}^n}(i) = \frac{1}{n}\sum_{k=1}^n 1_{\{x_k = i\}}$. Since $(\hat{\sigma}^A_n, \hat{\sigma}^D_n)$ is a Nash equilibrium, and $e_n(\hat{\sigma}^A_n, \hat{\sigma}^D_n) = -u^D_n (\hat{\sigma}^A_n, \hat{\sigma}^D_n)$, we see that
\begin{align}
e_n(\hat{\sigma}^A_n, \hat{\sigma}^D_n) \leq e_n(\hat{\sigma}^A_n, \varphi^{\delta}),
\label{eqn:bound-error}
\end{align}
where $(\hat{\sigma}^A_n, \varphi^{\delta})$ denotes the strategy profile where the attacker plays the mixed strategy $\hat{\sigma}^A_n$ and the defender plays the pure strategy $\varphi^{\delta}$.

We now proceed to bound the error term $e_n(\hat{\sigma}^A_n, \varphi^{\delta})$. We have
\begin{align*}
e_n (\hat{\sigma}^A_n, \varphi^{\delta}) &= \int \left[\sum_{\mathbf{x}^n} (1-\varphi^{\delta}(\mathbf{x}^n))q(\mathbf{x}^n) + \gamma \varphi^{\delta}(\mathbf{x}^n)p(\mathbf{x}^n)\right] \hat{\sigma}^A_n (dq) \\
& =   \int q(\mathcal{P}_{\mathbf{x}^n} \in B(p,\delta)) \hat{\sigma}^A_n (dq) +  \gamma p(\mathcal{P}_{\mathbf{x}^n} \in B(p,\delta)^c).
\end{align*}
We bound the first term above using a simple upper bound for the probability of observing a given type under a given distribution (see, for example, Lemma 2.1.9 in~\cite{dembo-zeitouni}). Let $\mathcal{P}_n$ denote the set of all possible types of an $n$-length word. For any $q \in Q$, we have that 
\begin{align*}
q(\mathcal{P}_{\mathbf{x}^n} \in B(p,\delta)) & \leq \sum_{\nu \in B(p,\delta) \cap \mathcal{P}_n} e^{-nD(\nu || q)} \\
& \leq |B(p,\delta) \cap \mathcal{P}_n| \ e^{-n \inf_{\nu \in B(p,\delta)} D(\nu||q)} \\
& \leq (n+1)^d \ e^{-n \inf_{\nu \in B(p,\delta), q \in Q} D(\nu||q)},
\end{align*}
where the last inequality follows since $|\mathcal{P}_n| \leq (n+1)^d$. Therefore,
\begin{align*}
e_n(\hat{\sigma}^A_n, \varphi^{\delta}) \leq & (n+1)^d \ e^{-n \inf_{\nu \in B(p,\delta), q \in Q} D(\nu||q)}  + \gamma p(\mathcal{P}_{\mathbf{x}^n} \in B(p,\delta)^c).
\end{align*}
The first term above goes to $0$ as $n\to \infty$, since $\inf_{\nu \in B(p,\delta), q \in Q} D(\nu||q) > 0$. Also, by the weak law of large numbers, we see that $\mathcal{P}_{\mathbf{x}^n}$ converges to $p$ in probability under the null hypothesis $H_0$. Therefore,
\begin{align*}
p(\mathcal{P}_{\mathbf{x}^n} \in B(p,\delta)^c) \to 0.
\end{align*}
Hence, we conclude that $e_n(\hat{\sigma}^A_n,\varphi^{\delta}) \to 0$ as $n \to \infty$. Combining this with~(\ref{eqn:bound-error}) completes the proof of the Lemma. 
\qed
\subsection{Proof of Lemma~\ref{lemma:weakconv}}
From Lemma~\ref{lemma:errorbound}, we have $e_n(\hat{\sigma}^A_n, \hat{\sigma}^D_n) \to 0$ as $n \to \infty$. Since u$^A_n(q^*, \hat{\sigma}^D_n) \geq -c(q^*)$ and since $(\hat{\sigma}^A_n, \hat{\sigma}^D_n)$ is a NE for all $n \geq 1$, it follows that 
\begin{align*}
\int c(q) \hat{\sigma}^A_n (dq) \to c(q^*)
\end{align*} 
as $n \to \infty$.
Since $(\hat{\sigma}^A_n)_{n \geq 1}$ is a sequence of probability measures on the compact space $Q$, by Prohorov's theorem (see Theorem 1, Section 2 in Chapter 3 of~\cite{shiryaev}), there exists a weakly convergent subsequence (say $(n_k)_{k \geq 1})$. Let $\mu$ denote the weak limit of $(\hat{\sigma}^A_{n_k})_{k \geq 1}$. Then, we have,
\begin{align}
c(q^*) & = \lim_{k \to \infty} \int c(q) \hat{\sigma}^A_{n_k} (dq) \nonumber  \\
& = \int c(q) \mu(dq), \label{eqn:convintofcq}
\end{align}
where the last equality follows from weak convergence. 

We now claim that $\mu = \delta_{q^*}$. Suppose not. Then, there exists $\varepsilon >0$ such that $\mu(B(q^*,\varepsilon)^c) > 0$. By Assumption~\ref{assm:a2}, for the above $\varepsilon$, there exists a $\delta>0$ such that $c(q) >  c(q^*) + \delta$ whenever $q \in B(q^*,\varepsilon)^c$. Therefore,
\begin{align*}
\int_{Q} c(q) &\mu(dq) = \int_{B(q^*,\varepsilon)} c(q) \mu(dq) + \int_{B(q^*,\varepsilon)^c}c(q) \mu(dq) \\
& \geq c(q^*) \mu(B(q^*,\varepsilon)) + (c(q^*)+\delta) \mu(B(q^*,\varepsilon)^c) \\
& = c(q^*) + \delta \mu(B(q^*,\varepsilon)^c) \\
& > c(q^*),
\end{align*}
which contradicts~(\ref{eqn:convintofcq}). Therefore, it follows that $\mu(B(q^*, \varepsilon)^c) = 0$ for every $\varepsilon >0$ and hence $\mu = \delta_{q^*}$. Since $\mu$ is independent of the subsequence $(n_k)_{k \geq 1}$, it follows that the whole sequence $(\hat{\sigma}^A_n)_{n \geq 1}$ converges to $\delta_{q^*}$ (see Lemma 1, Section 3 in Chapter 3 of~\cite{shiryaev}). This completes the proof of the lemma. 
\qed

To prove Lemma~\ref{lemma:errorgoesto0}, we need the following lemma, which asserts uniform convergence of integrals of the relative entropy functional w.r.t. the equilibrium strategy of the attacker.
\begin{lemma}
Let $(\hat{\sigma}^A_n, \hat{\sigma}^D_n)_{n \geq 1}$ be as in Lemma~\ref{lemma:errorbound}. Then,
\begin{align*}
\sup_{\mu \in M_1(\mathcal{X})} \left| \int D(\mu ||q) \hat{\sigma}^A_n(dq) - D(\mu||q^*) \right| \to 0 \text{ as } n \to \infty.
\end{align*}
\label{lemma:uconv-integrals}
\end{lemma}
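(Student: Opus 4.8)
The plan is to exploit the additive structure of relative entropy in order to reduce the uniform statement over the infinite set $M_1(\mathcal{X})$ to a convergence statement involving only the $d$ symbols of the (finite) alphabet. Writing $H(\mu) = -\sum_{i \in \mathcal{X}} \mu(i) \log \mu(i)$ for the Shannon entropy, I would first record the elementary decomposition $D(\mu\|q) = -H(\mu) - \sum_{i \in \mathcal{X}} \mu(i) \log q(i)$, valid for every $\mu \in M_1(\mathcal{X})$ and every $q \in Q$ (both terms are finite by Assumption~\ref{assm:a1b}).

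The crucial observation is that the term $-H(\mu)$ does not depend on $q$, hence it is unchanged by integration against $\hat{\sigma}^A_n$ and cancels exactly when I subtract $D(\mu\|q^*)$. Consequently, setting $a_n(i) := \int \log q(i)\, \hat{\sigma}^A_n(dq) - \log q^*(i)$, I obtain the identity $\int D(\mu\|q)\,\hat{\sigma}^A_n(dq) - D(\mu\|q^*) = -\sum_{i \in \mathcal{X}} \mu(i)\, a_n(i)$. Since $\mu$ is a probability vector, $\bigl|\sum_{i} \mu(i) a_n(i)\bigr| \leq \max_{i \in \mathcal{X}} |a_n(i)|$, and this bound is uniform in $\mu$. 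Thus $\sup_{\mu \in M_1(\mathcal{X})} \bigl| \int D(\mu\|q)\,\hat{\sigma}^A_n(dq) - D(\mu\|q^*) \bigr| \leq \max_{i \in \mathcal{X}} |a_n(i)|$, and it remains only to show $a_n(i) \to 0$ for each of the finitely many $i \in \mathcal{X}$.

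For the scalar convergence, I would invoke the weak convergence $\hat{\sigma}^A_n \to \delta_{q^*}$ established in Lemma~\ref{lemma:weakconv}. Fix $i \in \mathcal{X}$. Because $Q$ is a closed subset of the compact simplex $M_1(\mathcal{X})$ it is itself compact, and under Assumption~\ref{assm:a1b} the coordinate map $q \mapsto q(i)$ is continuous and strictly positive on $Q$, so $\min_{q \in Q} q(i) > 0$; hence $q \mapsto \log q(i)$ is a bounded continuous function on $Q$. Weak convergence tested against this function then yields $\int \log q(i)\, \hat{\sigma}^A_n(dq) \to \log q^*(i)$, i.e. $a_n(i) \to 0$. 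Finiteness of $\mathcal{X}$ upgrades this pointwise statement to $\max_{i \in \mathcal{X}} |a_n(i)| \to 0$, which combined with the uniform bound above completes the argument.

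The only delicate point, and the step I would be most careful about, is justifying that the integrand $q \mapsto \log q(i)$ is bounded on $Q$, so that weak convergence (which tests only against bounded continuous functions) applies. This is precisely where Assumption~\ref{assm:a1b} together with the compactness of $Q$ enter, guaranteeing a uniform positive lower bound on $q(i)$. Everything else reduces to the exact cancellation of the entropy term and a finite maximum, so no further estimates are required.
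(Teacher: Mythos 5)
Your proof is correct, but it takes a genuinely different route from the paper's. The paper proves the lemma by a generic compactness argument: for each fixed $\mu$ it gets $\int D(\mu\|q)\,\hat{\sigma}^A_n(dq) \to D(\mu\|q^*)$ from the weak convergence $\hat{\sigma}^A_n \to \delta_{q^*}$ (Lemma~\ref{lemma:weakconv}), then uses uniform continuity of $(\mu,q)\mapsto D(\mu\|q)$ on the compact product $M_1(\mathcal{X})\times Q$ to extract a finite subcover $B(\mu_i,\delta)$ of $M_1(\mathcal{X})$, and finishes with a three-$\varepsilon$ estimate taking $N=\max_i N_{\mu_i}$. You instead exploit the structure of relative entropy itself: the decomposition $D(\mu\|q) = -H(\mu) - \sum_i \mu(i)\log q(i)$ makes the only $\mu$-nonlinear term cancel in the difference, so the whole supremum is bounded by $\max_{i\in\mathcal{X}} |a_n(i)|$ with $a_n(i) = \int \log q(i)\,\hat{\sigma}^A_n(dq) - \log q^*(i)$, and you only need weak convergence tested against the $d$ bounded continuous functions $q\mapsto \log q(i)$ (boundedness coming, as you note, from Assumption~\ref{assm:a1b} plus compactness of $Q$ --- the same facts the paper needs to make $q\mapsto D(\mu\|q)$ bounded continuous). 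Your argument is more elementary and more quantitative: it avoids uniform continuity and covering entirely, and gives an explicit rate-transfer from the scalar quantities $a_n(i)$ to the uniform statement. The paper's argument, on the other hand, is structure-agnostic: it would apply verbatim to any jointly continuous integrand on a compact product space, which is why the paper can reuse it (as Lemma~\ref{lemma:uconv-integrals-np}) without change; your cancellation trick is special to functionals that become affine in $\mu$ after removing a $q$-independent part. Both proofs are complete and rest on the same two ingredients, Lemma~\ref{lemma:weakconv} and Assumption~\ref{assm:a1b}.
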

\begin{proof}
Fix $\varepsilon > 0$ and $ \mu \in M_1(\mathcal{X})$. Then, using the uniform continuity of the relative entropy function on $M_1(\mathcal{X}) \times Q$, there exists $\delta >0$ such that
\begin{align*}
\sup_{q \in Q} |D(\mu ||q) - D(\mu^\prime ||q) | < \varepsilon \text{ for all } \mu^\prime \in B(\mu,\delta).
\end{align*}
Therefore, for all $\mu^\prime \in B(\mu,\delta)$, we have
\begin{align*}
 \left| \int  D(\mu^\prime ||q) \hat{\sigma}^A_n(dq)- \int D(\mu||q)  \hat{\sigma}^A_n(dq)\right|  & \leq \int |D(\mu^\prime ||q) - D(\mu||q) | \hat{\sigma}^A_n(dq) \\
& \leq \varepsilon, \text{ for all } n \geq 1.
\end{align*}
Also, using weak convergence of $(\hat{\sigma}^A_n)_{n \geq 1}$ to the point mass at $q^*$, there exists $N_ \mu \geq 1$ such that
\begin{align*}
\left|\int D(\mu ||q) \hat{\sigma}^A_n(dq) - D(\mu||q^*) \right| \leq \varepsilon \text{ for all } n \geq N_{\mu}.
\end{align*}

Note that, the sets $B(\mu,\delta)_{\mu \in Q}$ is an open cover for $M_1(\mathcal{X})$. By compactness of the space $M_1(\mathcal{X})$, extract a finite subcover $B(\mu_i, \delta), 1 \leq i \leq k$. Put $N = \max\{N_{\mu_1}, \dots, N_{\mu_k}\}.$ Then, for all $n \geq N$, we have
\begin{align*}
 \left| \int D(\mu || q) \hat{\sigma}^A_n(dq) - D(\mu ||q^*) \right| &\leq \left| \int D(\mu || q) \hat{\sigma}^A_n(dq)  - \int D(\mu_i || q) \hat{\sigma}^A_n(dq)\right| + \\ & \ \ \ \ \left|  \int D(\mu_i || q) \hat{\sigma}^A_n(dq) - D(\mu_i || q^*) \right| + \\ & \ \ \ \ \left| D(\mu_i || q^*)  - D(\mu ||q^*) \right| \\
& \leq 3 \varepsilon,
\end{align*}
where $\mu_i$ is such that $\mu \in B(\mu_i,\delta)$. The result now follows since $\varepsilon$ and $\mu$ are arbitrary.
\end{proof}

\subsection{Proof of Lemma~\ref{lemma:errorgoesto0}}
Recall the decision rule $\hat{\varphi}_n$ from Proposition~\ref{prop:defenderpure}. Note that if $H_0$ is accepted under $\hat{\varphi}_n$ when the defender observes $\mathbf{x}^n$, then we have
\begin{align*}
\frac{\int q(\mathbf{x}^n)\hat{\sigma}^A_n(dq)}{p(\mathbf{x}^n)} \leq \gamma
\end{align*}
(note that there could be randomization when equality holds above). By Proposition~\ref{prop:defenderpure}, notice that $(\hat{\sigma}^A_n, \hat{\varphi}_n)$ is a Nash equilibrium, and $e_n(q, \hat{\sigma}^D_n) =  e_n(q, \hat{\varphi}_n)$ for all $q \in Q$. Therefore it suffices to show that $\sup_{q \in Q} e_n(q, \hat{\varphi}_n) \to 0$ as $n \to \infty$.

Note that, the acceptance region of $H_0$ under the decision rule $\hat{\varphi}_n$ is type-based, i.e., for every $n$-length word $\mathbf{x}^n$,  $\hat{\varphi}_n(\mathbf{x}^n)$ depends only on $\mathcal{P}_{\mathbf{x}^n}$. Therefore, if $H_0$ is accepted when the defender observes $\mathbf{x}^n$, the type $\mathcal{P}_{\mathbf{x}^n}$ must belong to the following subset of $M_1(\mathcal{X})$:
\begin{align*}
\left\{ \mathcal{P}_{\mathbf{x}^n} : \log \int \frac{q(\mathbf{x}^n)}{p(\mathbf{x}^n)} \hat{\sigma}^A_n (dq) \leq \log \gamma \right\}.
\end{align*}
Define 
\begin{align*}
\Gamma_n = \left\{\mathcal{P}_{\mathbf{x}^n} : \int \log\left( \frac{q(\mathbf{x}^n)}{p(\mathbf{x}^n)} \right) \hat{\sigma}^A_n (dq) \leq \log \gamma \right\}.
\end{align*}
Notice that, by Jensen's inequality, the acceptance region of $H_0$ under the decision rule $\hat{\varphi}_n$ is a subset of the above set $\Gamma_n$. Also, it is easy to check that,
\begin{align*}
\Gamma_n = \{ \mu \in M_1(\mathcal{X}):D( \mu || p) - & \int D(\mu || q) \hat{\sigma}^A_n(dq)  \leq \frac{\log \gamma}{n} \} \cap \mathcal{P}_n.
\end{align*}
We now show that the set $\Gamma_n$ does not intersect the set $Q$ for large enough $n$. First, notice that, the set $\{\mu \in M_1(\mathcal{X}): D(\mu||p) \leq D(\mu||q^*)\}$ is closed in $M_1(\mathcal{X})$. Therefore, by Assumption~\ref{assm:a3}, there exists $\eta	 > 0$ such that $Q^\eta \cap \{\mu \in M_1(\mathcal{X}): D(\mu||p) \leq D(\mu||q^*)\} = \emptyset$, where $Q^\eta  = \{\mu \in M_1(\mathcal{X}): \inf_{q \in Q} || \mu -q || \leq \eta\}$ is the $\eta$-expansion of the set $Q$.

We show that there exists $N \geq 1$ such that $Q^\eta \cap \Gamma_n = \emptyset$ for all $n \geq N$. Suppose not, then we can find a sequence $(\mu_n)_{n \geq 1}$ such that $\mu_n \in Q^\eta$ and $\mu_n\in \Gamma_n$ for all $n \geq 1$. Since $Q^\eta$ is compact, we can find a subsequence $(n_k)_{k \geq 1}$ along which $\mu_n$ converges, and let $\mu = \lim_{k \to \infty} \mu_{n_k} \in Q^\eta.$ Since $\mu_n \in \Gamma_n$ for all $n \geq 1$, using Lemma~\ref{lemma:uconv-integrals}, we see that $\mu$ satisfies
$D(\mu || p) \leq D(\mu ||q^*)$. This contradicts the fact that $Q^\eta \cap \Gamma_n = \emptyset $, and hence, there exists $N \geq 1$ such that $Q^\eta \cap \Gamma_n = \emptyset$ for all $n \geq N$.

By the law of large numbers, we have
\begin{align*}
\sup_{q \in Q} q(\mathcal{P}_{\mathbf{x}^n} \notin B(q, \eta)) \to 0,
\end{align*}
and 
\begin{align*}
p(\mathcal{P}_{\mathbf{x}^n} \notin \Gamma_n) \to 0
\end{align*}
as $n \to \infty$. But, notice that
\begin{align*}
e_n(q, \hat{\varphi}_n) &\leq q(\mathcal{P}_{\mathbf{x}^n} \in \Gamma_n) + \gamma p(\mathcal{P}_{\mathbf{x}^n} \notin \Gamma_n) \\
& \leq  q(\mathcal{P}_{\mathbf{x}^n} \notin B(q, \eta)) + \gamma p(\mathcal{P}_{\mathbf{x}^n} \notin \Gamma_n)
\end{align*}
for all $q \in Q$ and $n \geq N$. Therefore,
\begin{align*}
\sup_{q \in Q} e_n(q, \hat{\varphi}_n) & \leq \sup_{q \in Q} q(\mathcal{P}_{\mathbf{x}^n} \notin B(q, \eta)) + \gamma p(\mathcal{P}_{\mathbf{x}^n} \notin \Gamma_n) \\
& \to 0
\end{align*}  
as $n \to  \infty$. 
\qed

\subsection{Proof of Lemma~\ref{lemma:eqconcentration}}
Fix $\varepsilon > 0$. By Lemma~\ref{lemma:errorgoesto0}, there exists $N_{\varepsilon}$ such that
\begin{align*}
e_n(q_n, \hat{\sigma}^D_n) \leq \varepsilon
\end{align*}
for all $n \geq N_{\varepsilon}$. Therefore, 
\begin{align*}
u^A_n(q_n, \hat{\sigma}^D_n) \leq \varepsilon - c(q_n)
\end{align*}
for all $n \geq N_{\varepsilon}$. However, by playing the pure strategy $q^*$, the attacker utility is 
\begin{align*}
u^A_n(q^* ,\hat{\sigma}^D_n) \geq -c(q^*)
\end{align*}
for all $n \geq 1$. Since $(\hat{\sigma}^A_n, \hat{\sigma}^D_n)$ is a Nash equilibrium, and since $q_n \in supp(\hat{\sigma}^A_n)$, we must have $u^A_n(q_n, \hat{\sigma}^D_n) \geq  u^A_n(q^*, \hat{\sigma}^D_n)$ for all $n \geq N_{\varepsilon}$. That is,
\begin{align*}
c(q_n) \leq c(q) + \varepsilon
\end{align*}
for all $n \geq N_{\varepsilon}$. Thus, it follows that, $c(q_n) \to c(q^*)$ as $n \to \infty$. Using the definition of $q^*$, we see that $q_n \to q^*$ as $n \to \infty$. 
\qed

\subsection{Proof of Theorem~\ref{theorem:errorexponent}}
First, we obtain the asymptotic lower bound. Towards this, we shall consider an equivalent zero-sum game for $\mathcal{G}^B(d,n)$. For $q \in Q$ and $\varphi \in \Phi_n$, define
\begin{align*}
u^{eq}_n(q,\varphi)& = \sum_{\mathbf{x}^n} (1-\varphi(\mathbf{x}^n)) q(\mathbf{x}^n) +\gamma  \sum_{\mathbf{x}^n}\varphi(\mathbf{x}^n)p(\mathbf{x}^n) - c(q).
\end{align*}
Observe that, as far as the attacker is concerned, for any $\varphi \in \Phi_n$, maximizing $u^A_n(\cdot, \varphi)$ is the same as maximizing $u^{eq}_n(\cdot, \varphi)$, as the extra term  present in $u^{eq}_n$ does not depend on the attacker strategy. Similarly, for any $q \in Q$, maximizing the defender's utility function $u^D_n(q,\cdot)$ is the same as minimizing $u^{eq}_n(q,\cdot)$, as the cost function $c$ does not depend on the defender's strategy. Therefore, we see that $\mathcal{G}^B(d,n)$ is best-response equivalent to a two-player zero sum game (with attacker being first player and defender being second player) with the above utility for the first player. Hence, any equilibrium for the original game is also going to be an equilibrium for the zero-sum equivalent game with utility function $u^{eq}_n $ (see Definition~4~in~\cite{dritsoula-etal-17} and the remarks before Theorem 2).

Consider the strategy profile $(q^*, \hat{\sigma}^D_n)$, i.e., the attacker plays the pure strategy $q^*$ and the defender plays the mixed strategy $\hat{\sigma}^D_n$ that comes from the equilibrium. By definition of the Nash equilibrium, and the equivalence of $\mathcal{G}^B(d,n)$ with the above zero-sum game, we have 
\begin{align}
u_n^{eq} (\hat{\sigma}^A_n, \hat{\sigma}^D_n) \geq u^{eq}_n (q^*, \hat{\sigma}^D_n). \label{eqn:utilityeqbound}
\end{align}
($u_n^{eq} (\hat{\sigma}^A_n, \hat{\sigma}^D_n)$ denotes the utility in mixed extension of the equivalent zero-sum game).

Define the deterministic decision rule $\varphi^*_n$ by
\begin{align*}
\varphi^*_n(\mathbf{x}^n) = \left\{
\begin{array}{ll}
1, & \text{ if } \frac{q^*(\mathbf{x}^n)}{p(\mathbf{x}^n)} \geq \gamma, \\
0, & \text{ otherwise}.
\end{array}
\right.
\end{align*}

It is easy to see that $\varphi^*_n$ minimizes $e_n(q^*, \cdot)$. Writing the probabilities $p(\mathbf{x}^n)$ and $q(\mathbf{x}^n)$ in terms of $\mathcal{P}_{\mathbf{x}^n}$, it is easy to check that, the acceptance region of $\varphi^*_n$ is given by
\begin{align*}
\Gamma^*_n = \{\nu \in M_1(\mathcal{X})\cap \mathcal{P}_n : D(\nu||q^*) - D(\nu|| p) > \frac{\log \gamma}{n}\},
\end{align*}
i.e., $\varphi^*_n(\mathbf{x}^n) = 0$ whenever $\mathcal{P}_{\mathbf{x}^n} \in \Gamma^*_n$, and $\varphi^*_n(\mathbf{x}^n) = 1$ otherwise.
Noting that $u_n^{eq} (q,\varphi) = e_n (q,\varphi) - c(q)$, (\ref{eqn:utilityeqbound}) becomes
\begin{align}
e_n(\hat{\sigma}^A_n, \hat{\sigma}^D_n) & \geq \int \sum_{\mathbf{x}^n} \left( (1-\varphi(\mathbf{x}^n)) q^*(\mathbf{x}^n) +\gamma \varphi(\mathbf{x}^n)p(\mathbf{x}^n) \right)\hat{\sigma}^D_n(d\varphi)  - c(q^*) + \int c(q) \hat{\sigma}^A_n(dq) \nonumber \\
& \geq \int \sum_{\mathbf{x}^n} \left( (1-\varphi(\mathbf{x}^n)) q^*(\mathbf{x}^n) +\gamma \varphi(\mathbf{x}^n)p(\mathbf{x}^n) \right)\hat{\sigma}^D_n(d\varphi)  \nonumber \\
& \geq \sum_{\mathbf{x}^n} \left( (1-\varphi^*_n(\mathbf{x}^n)) q^*(\mathbf{x}^n) +\gamma \varphi^*_n(\mathbf{x}^n)p(\mathbf{x}^n) \right), \label{eqn:errorboundlower}
\end{align}
where the second inequality follows from the definition of $q^*$, and the last inequality follows from the optimality of $\varphi^*_n$. The quantitiy in the RHS of the last inequality is the minimum Bayesian error for the following standard binary hypothesis testing problem: under the null hypothesis, each symbol in $\mathbf{x}^n$ is generated independently from $p$,  and under the alternate hypothesis, each symbol is generated independently from $q^*$. It is well known that (see, for example, Corollary 3.4.6 in~\cite{dembo-zeitouni}),
\begin{align*}
\liminf_{n \to \infty} \frac{1}{n} \log e_n(q^*,\varphi^*_n) \geq -\Lambda_0^*(0),
\end{align*}
and hence, from (\ref{eqn:utilityeqbound}) and (\ref{eqn:errorboundlower}), it follows that,
\begin{align}
\liminf_{n \to \infty} \frac{1}{n} \log e_n(\hat{\sigma}^A_n, \hat{\sigma}^D_n)  \geq -\Lambda^*_0(0). \label{eqn:ratelowerbound}
\end{align}

We now proceed to show the upper bound. Define the decision rule $\varphi^\prime_n$ by
\begin{align*}
\varphi^\prime_n (\mathbf{x}^n) = \left\{
\begin{array}{ll}
1, & \text{ if } \frac{q^*(\mathbf{x}^n)}{p(\mathbf{x}^n)} \geq 1, \\
0, & \text{ otherwise}.
\end{array}
\right.
\end{align*}
Similar to the decision rule $\varphi^*_n$, the acceptance region of $\varphi^\prime_n$ can be written as
\begin{align*}
\Gamma^\prime = \{\nu \in M_1(\mathcal{X}) : D(\nu||q^*) - D(\nu|| p) > 0\},
\end{align*}
i.e., $\varphi^\prime_n(\mathbf{x}^n) = 0$ if $\mathcal{P}_{\mathbf{x}^n} \in \Gamma^\prime$, and $\varphi^\prime_n(\mathbf{x}^n) = 1$ otherwise. By the definition of a Nash equilibrium, and noting that $u^D_n(\hat{\sigma}^A_n, \hat{\sigma}^D_n) = - e_n(\hat{\sigma}^A_n, \hat{\sigma}^D_n)$, we have
\begin{align}
e_n(\hat{\sigma}^A_n, \hat{\sigma}^D_n) \leq e_n(\hat{\sigma}^A_n, \varphi_n^\prime), \label{eqn:rateupperphiprime}
\end{align}
where $(\hat{\sigma}^A_n, \varphi_n^\prime)$ denotes the strategy profile where the attacker plays the mixed strategy $\hat{\sigma}^A_n$ that comes form the equilibrium, and the defender plays the pure strategy $\varphi^\prime_n$. We have,
\begin{align*}
e_n(\hat{\sigma}^A_n, \varphi_n^\prime) &= \int \sum_{\mathbf{x}^n} \left( (1-\varphi^\prime_n(\mathbf{x}^n)) q(\mathbf{x}^n) +\gamma \varphi^\prime_n(\mathbf{x}^n)p(\mathbf{x}^n) \right) \sigma^A_n(dq) \\
& = \int q(\mathcal{P}_{\mathbf{x}^n} \in \Gamma^\prime) \hat{\sigma}^A_n(dq) + p(\mathcal{P}_{\mathbf{x}^n} \in (\Gamma^\prime)^c).
\end{align*}
Consider the first term. Using the upper bound on the probability of observing a type under a given distribution (Lemma~2.1.9 in~\cite{dembo-zeitouni}), we have
\begin{align*}
 q(\mathcal{P}_{\mathbf{x}^n} \in \Gamma^\prime) \leq (n+1)^d e^{-n\inf_{\nu \in \Gamma^\prime}D(\nu || q)}.
\end{align*}
Fix $\varepsilon > 0$.  Since  the relative entropy is jointly uniformly continuous on $\Gamma^\prime \times Q$,  there exists a $\delta >0$ such that
\begin{align*}
D(\nu || q) \geq D(\nu || q^*) -\varepsilon \end{align*}
for all $\nu \in \Gamma^\prime$ whenever $ ||q -q^*||_2 < \delta$. For the above $\delta$, by Lemma~\ref{lemma:eqconcentration}, there exists $N_{\delta}$ such that $||q - q^*||_2 < \delta$ whenever $q \in \text{supp}(\hat{\sigma}^A_n)$ for all $n \geq N_{\delta}$. Therefore, we see that, for all $n \geq N_{\delta}$ and $\nu \in \Gamma^\prime$,
\begin{align*}
D(\nu || q) \geq D(\nu || q^*) -\varepsilon \text{ for all } q \in \text{supp}(\hat{\sigma}^A_n).
\end{align*}
Therefore, for all $n \geq N_{\delta}$, we have
\begin{align*}
q(\mathcal{P}_{\mathbf{x}^n} \in \Gamma^\prime) \leq (n+1)^d e^{-n(\inf_{\nu \in \Gamma^\prime}D(\nu || q^*)-\varepsilon)}
\end{align*}
for all $ q\in \text{supp}(\hat{\sigma}^A_n)$. For the second term, using Lemma~2.1.9 in~\cite{dembo-zeitouni}, we have
\begin{align*}
    p(\mathcal{P}_{\mathbf{x}^n} \in (\Gamma^\prime)^c) \leq e^{-n\inf_{\nu \in (\Gamma^\prime)^c}D(\nu || p)}.
\end{align*}
It can be easily shown that (for example, see Exercise 3.4.14(b) in \cite{dembo-zeitouni}), $\inf_{\nu \in \Gamma^\prime}D(\nu || q^*) = \inf_{\nu \in (\Gamma^\prime)^c}D(\nu || p) =  \Lambda_0^*(0)$. Hence, the above implies that
\begin{align*}
\limsup_{n \to \infty} \frac{1}{n} \log e_n(\hat{\sigma}^A_n, \varphi_n^\prime) \leq -\Lambda_0^*(0)+\varepsilon.
\end{align*}
Letting $\varepsilon \to 0$, we get
\begin{align*}
\limsup_{n \to \infty} \frac{1}{n} \log e_n(\hat{\sigma}^A_n, \varphi_n^\prime) \leq -\Lambda_0^*(0).
\end{align*}
 Therefore, from~(\ref{eqn:rateupperphiprime}) and the above inequality, we have
\begin{align}
\limsup_{n \to \infty} \frac{1}{n} \log e_n(\hat{\sigma}^A_n, \hat{\sigma}^D_n) \leq -\Lambda_0^*(0). \label{eqn:rateupperbound}
\end{align}
The theorem now follows from~(\ref{eqn:ratelowerbound}) and (\ref{eqn:rateupperbound}). 
\qed

\subsection{Proof of Lemma~\ref{lemma:compactness-phin-np}}
We show sequential compactness of $\Phi_n$. Let $(\varphi_k)_{k \geq 1}$ be a sequence in $\Phi_n$. Let $\mathbf{x}^n_1, \dots, \mathbf{x}^n_{2^n}$ denote the elements of $\mathcal{X}^n$. Since $\varphi_{n}(\mathbf{x}^n) \in [0,1]$ for all $\mathbf{x}^n \in \mathcal{X}^n$, there exists a subsequence $(k^{(1)}_l)_{l \geq 1}$ along which $\varphi(\mathbf{x}^n_1)$ converges. We can then extract a further subsequence $(k^{(2)}_l)_{l \geq 1}$ of $(k^{(1)}_l)_{l \geq 1}$ along which $\varphi(\mathbf{x}^n_2)$ converges. Repeating the above procedure $2^n$ times, we see that, there exists a subsequence $(k_l)_{l \geq 1}$ along which $\varphi(\mathbf{x}^n)$  converges for all $\mathbf{x}^n \in \mathcal{X}^n$. Put
\begin{align*}
\varphi(\mathbf{x}^n) = \lim_{l \to \infty} \varphi_{k_l}(\mathbf{x}^n), \, \, \mathbf{x}^n \in \mathcal{X}^n.
\end{align*}
It is clear that $d_n(\varphi_{k_l}, \varphi) \to 0$ as $l \to \infty$, and we have
\begin{align*}
P^{FA}_n(\varphi) & = \sum_{\mathbf{x}^n} \varphi(\mathbf{x}^n) p(\mathbf{x}^n) \\
& = \sum_{\mathbf{x}^n} \lim_{l \to \infty } (\varphi_{k_l}(\mathbf{x}^n)) p(\mathbf{x}^n) \\
& = \lim_{l \to \infty} \sum_{\mathbf{x}^n} \varphi_{k_l}(\mathbf{x}^n)p(\mathbf{x}^n) \\
& = \lim_{l \to \infty} P^{FA}(\varphi_{k_l})\\
& \leq \varepsilon,
\end{align*}
since $P^{FA}(\varphi_{k_l}) \leq \varepsilon$ for all $l \geq 1$. This shows that the space $\Phi_n$ equipped with the metric $d_n$ is sequentially compact, and hence compact.
\qed

\subsection{Proof of Proposition~\ref{prop:defenderpure-np}}
By Lemma~\ref{lemma:compactness-phin-np}, the strategy space of the defender is compact. Also, the strategy space of the attacker is compact under the standard Euclidean topology on $\mathbb{R}^d$. By Lemma~\ref{lemma:cts-payoffs-np}, we see that the utility functions of both players are jointly continuous. Therefore, by the Glicksberg fixed point theorem (see, for example, Corollary 2.4. in~\cite{reny-05}), there exists a mixed strategy Nash equilibrium for the game $\mathcal{G}(\varepsilon, d, n)$.

We now show the structure of the equilibrium strategy of the defender. Let $(\hat{\sigma}^A_n, \hat{\sigma}^D_n)$ denote a mixed strategy Nash equilibrium of $\mathcal{G}(\varepsilon,d,n)$. By the property of Nash equilibrium, we have that $e_n(\hat{\sigma}^A_n, \varphi) = e_n(\hat{\sigma}^A_n, \hat{\sigma}^D_n)$ for all $\varphi \in supp(\hat{\sigma}^D_n)$. We claim that $P^{FA}(\varphi) = \varepsilon$ for all $\varphi \in supp(\hat{\sigma}^D_n$). If there exists $\varphi \in supp(\hat{\sigma}^D_n)$ with $P^{FA}(\varphi) < \varepsilon$, then we can find $\mathbf{x}^n_0 \in \mathcal{X}^n$ such that $\varphi(\mathbf{x}^n_0) = 0$ and $\delta >0$ such that the decision rule defined by $\varphi^\prime(\mathbf{x}^n) =\varphi(\mathbf{x}^n)$ for all $\mathbf{x}^n \neq \mathbf{x}^n_0$, and $\varphi^\prime(\mathbf{x}^n_0) = \delta$ has the property that $P^{FA}_n(\varphi^\prime) \leq \varepsilon$ and $e_n(\hat{\sigma}^A_n,\varphi^\prime) < e_n(\hat{\sigma}^A_n, \hat{\sigma}^D_n)$. This contradicts the fact that, $(\hat{\sigma}^A_n, \hat{\sigma}^D_n)$ is a Nash equilibrium, which proves our claim.

But, note that
\begin{align*}
e_n(\hat{\sigma}^A_n, \hat{\sigma}^D_n) & = \int \sum_{\mathbf{x}^n} (1-\varphi(\mathbf{x}^n)) q(\mathbf{x}^n) \hat{\sigma}^A_n(dq) \hat{\sigma}^D_n(d\varphi) \\
& =  \sum_{\mathbf{x}^n}  \left[  \int (1-\varphi(\mathbf{x}^n)) q(\mathbf{x}^n) \hat{\sigma}^A_n(dq) \hat{\sigma}^D_n(d \varphi) \right] \\ 
& = \sum_{\mathbf{x}^n}  \left[  \int (1-\varphi(\mathbf{x}^n)) q_{\hat{\sigma}^A_n}(\mathbf{x}^n)) \hat{\sigma}^D_n(d \varphi) \right]
\end{align*}
where $q_{\hat{\sigma}^A_n} \in M_1(\mathcal{X}^n)$ is given by
\begin{align*}
q_{\hat{\sigma}^A_n} (\mathbf{x}^n) = \int q(\mathbf{x}^n) \hat{\sigma}^A_n(dq).
\end{align*}
That is, when the attacker plays the Nash equilibrium $\hat{\sigma}^A_n$, the defender faces the problem of distinguishing between the two alternatives: (i) $(X_1, \dots, X_n)$ is generated by i.i.d. $p$, versus (ii) $(X_1, \dots, X_n)$ is generated by $q_{\hat{\sigma}^A_n}$. By the Neyman-Pearson lemma, we know that there exists a Neyman-Pearson decision rule $\hat{\varphi}_n \in \Phi_n$ with the property that $P^{FA}(\hat{\varphi}_n) = \varepsilon$ and $e_n(\hat{\sigma}^A_n, \cdot)$ is minimized by $\hat{\varphi}_n$ on $\Phi_n$. Since every $\varphi \in supp(\hat{\sigma}^D_n)$ minimizes $e_n(\hat{\sigma}^A_n, \cdot)$, and $P^{FA}(\varphi) = \varepsilon$,	and since each $\mathbf{x}^n \in \mathcal{X}^n$ has positive probability of observing under both $H_0$ and $H_1$, an application of the uniqueness part in Neyman-Pearson lemma (see, for example, Section 5.1 in \cite{ferguson-67}) yields that that $\hat{\sigma}^D_n = \delta_{\hat{\varphi}_n}$. This completes the proof. 
\qed

\subsection{Proof of Lemma~\ref{lemma:pureeqforbinary_np}}
Recall the definition of a Neyman-Pearson decision rule. In the binary case, since the comparison of the ratio $\frac{q(\mathbf{x}^n)}{p(\mathbf{x}^n)}$ to a threshold is the same as comparison of the number of $1$'s in the $n$-length word $\mathbf{x}^n$ to some other threshold, we see that any Neyman-Pearson decision rule $\varphi$ must necessarily be of the following form:
\begin{align}
\varphi(\mathbf{x}^n) = \left\{
\begin{array}{ll}
0, &\text{ if } \mathcal{P}_{\mathbf{x}^n}(1) \in \{0, \frac{1}{n} , \dots, \frac{k}{n}\}, \\
\pi, & \text{ if } \mathcal{P}_{\mathbf{x}^n}(1) = \frac{k+1}{n}, \\
1, & \text{ if } \mathcal{P}_{\mathbf{x}^n}(1) \in \{\frac{k+2}{n} , \dots, 1\},
\end{array}
\right.
\label{eqn:nprule_1d}
\end{align}
for some $\pi \in [0,1]$ and $0 \leq k \leq n$. Here, $\mathcal{P}_{\mathbf{x}^n}(1)$ denotes the fraction of $1$'s in $\mathbf{x}^n$. The false alarm probability of the above decision rule is
\begin{align*}
P^{FA}_n(\varphi) & = p\left(\mathcal{P}_{\mathbf{x}^n}(1) \in \{0, \frac{1}{n} , \dots, \frac{k}{n}\} \right) +\pi p\left(\mathcal{P}_{\mathbf{x}^n}(1) = \frac{k+1}{n}\right).
\end{align*}
Since every $n$-length word $\mathbf{x}^n$ has positive probability under the distribution $p$, we see that, there exists a unique $k$ and $\pi$ such that $P^{FA}_n(\varphi) = \varepsilon$. Let $\hat{\varphi}_n$ denote the above Neyman-Pearson decision rule. Then, by the Neyman-Pearson lemma (see, for example, Proposition II.D.1~in~\cite{poor}), we see that,
\begin{align*}
\hat{\varphi}_n = \arg \max_{\varphi \in \Phi_n} u^D_n(q, \varphi) \text{ for all } q \in Q.
\end{align*}
Thus, the defender has a unique strictly dominant strategy. Using the continuity of $c$, and the continuity of the Type II error term in the attacker's strategy, we see that $u^A_n(\cdot, \hat{\varphi}_n)$ is continuous on Q, and hence there exist a maximum. Therefore, letting the attacker play a pure strategy $\hat{q}_n$ that maximizes $u^A_n(\cdot, \hat{\varphi}_n)$ yields a pure strategy Nash equilibrium $(\hat{q}_n, \hat{\varphi}_n)$.
\qed

To prove Lemma~\ref{lemma:errorgoesto0-np}, we need the following lemma, which can be proved similar to Lemma~\ref{lemma:uconv-integrals}.

\begin{lemma}
Let $(\hat{\sigma}^A_n, \hat{\varphi}_n)_{n \geq 1}$ be as in Lemma~\ref{lemma:errorbound-np}. Then,
\begin{align*}
\sup_{\mu \in M_1(\mathcal{X})} \left| \int D(\mu ||q) \hat{\sigma}^A_n(dq) - D(\mu||q^*) \right| \to 0 \text{ as } n \to \infty.
\end{align*}
\label{lemma:uconv-integrals-np}
\end{lemma}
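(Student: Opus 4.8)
The plan is to follow verbatim the strategy used for Lemma~\ref{lemma:uconv-integrals}, the only change being that the weak-convergence input is now supplied by Lemma~\ref{lemma:weakconv-np} (valid under~\ref{assm:a1}--\ref{assm:a2}) in place of Lemma~\ref{lemma:weakconv}. Two ingredients drive the argument. First, $\hat{\sigma}^A_n \to \delta_{q^*}$ weakly as $n \to \infty$. Second, the map $(\mu, q) \mapsto D(\mu \| q)$ is jointly uniformly continuous on the compact product space $M_1(\mathcal{X}) \times Q$. The latter holds because, under~\ref{assm:a1b}, the quantity $\min_{q \in Q}\min_{i \in \mathcal{X}} q(i)$ is attained and strictly positive (as $Q$ is compact by~\ref{assm:a1} and $\mathcal{X}$ is finite), so each $\log q(i)$ stays bounded; hence $D(\mu\|q) = \sum_i \mu(i)\log\mu(i) - \sum_i \mu(i)\log q(i)$ is continuous on the compact set $M_1(\mathcal{X}) \times Q$, and therefore uniformly continuous.

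Granting these, I would first establish the estimate \emph{locally} in $\mu$. Fix $\varepsilon > 0$ and $\mu \in M_1(\mathcal{X})$. By joint uniform continuity there is $\delta_\mu > 0$ with $\sup_{q \in Q}|D(\mu'\|q) - D(\mu\|q)| < \varepsilon$ for every $\mu' \in B(\mu,\delta_\mu)$; integrating against the probability measure $\hat{\sigma}^A_n$ gives $|\int D(\mu'\|q)\hat{\sigma}^A_n(dq) - \int D(\mu\|q)\hat{\sigma}^A_n(dq)| \le \varepsilon$ for all $n \ge 1$ and all $\mu' \in B(\mu,\delta_\mu)$. Separately, since $q \mapsto D(\mu\|q)$ is bounded and continuous on $Q$, weak convergence of $\hat{\sigma}^A_n$ to $\delta_{q^*}$ yields some $N_\mu$ with $|\int D(\mu\|q)\hat{\sigma}^A_n(dq) - D(\mu\|q^*)| \le \varepsilon$ for all $n \ge N_\mu$.

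The second step upgrades this to a bound uniform in $\mu$ via compactness. The balls $\{B(\mu,\delta_\mu)\}_{\mu \in M_1(\mathcal{X})}$ form an open cover of the compact space $M_1(\mathcal{X})$; I would extract a finite subcover centered at $\mu_1,\dots,\mu_k$ and set $N = \max\{N_{\mu_1},\dots,N_{\mu_k}\}$. For an arbitrary $\mu$, choosing $\mu_i$ with $\mu \in B(\mu_i,\delta_{\mu_i})$ and splitting $|\int D(\mu\|q)\hat{\sigma}^A_n(dq) - D(\mu\|q^*)|$ through the intermediate quantities $\int D(\mu_i\|q)\hat{\sigma}^A_n(dq)$ and $D(\mu_i\|q^*)$ bounds it by $3\varepsilon$ for all $n \ge N$, uniformly over $\mu$. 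Letting $\varepsilon \to 0$ concludes the proof.

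I do not expect a serious obstacle: the statement is essentially the Neyman--Pearson analogue of Lemma~\ref{lemma:uconv-integrals}, and the weak-convergence lemma it rests on (Lemma~\ref{lemma:weakconv-np}) is already in hand. The one point demanding care is that the continuity of $D(\cdot\|q)$ must be controlled \emph{uniformly in} $q \in Q$ inside the integral; this is exactly what joint uniform continuity on the compact set $M_1(\mathcal{X}) \times Q$ delivers, and it is the reason~\ref{assm:a1b} (positivity, hence boundedness of $\log q(i)$ over $Q$) is invoked.
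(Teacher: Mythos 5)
Your proposal is correct and follows essentially the same route as the paper: the paper proves this lemma by noting it "can be proved similar to" the Bayesian version (Lemma~\ref{lemma:uconv-integrals}), whose proof is exactly your argument—local control via joint uniform continuity of $D(\cdot\|\cdot)$ on the compact set $M_1(\mathcal{X})\times Q$, pointwise-in-$\mu$ convergence from the weak convergence $\hat{\sigma}^A_n \to \delta_{q^*}$ (here supplied by Lemma~\ref{lemma:weakconv-np}), and a finite-subcover/triangle-inequality step yielding the uniform $3\varepsilon$ bound. Your explicit justification of the uniform continuity via~\ref{assm:a1} and~\ref{assm:a1b} is a detail the paper leaves implicit, but the argument is the same.
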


\subsection{Proof of Lemma~\ref{lemma:errorgoesto0-np}}
Let $\gamma_n$ denote the threshold and $\pi$ denote the randomization used in the decision rule $\hat{\varphi}_n$, i.e., $\hat{\varphi}_n$ is of the form
\begin{align*}
\hat{\varphi}_n(\mathbf{x}^n) = \left\{
\begin{array}{ll}
1, \text{ if } & \frac{q_{\hat{\sigma}^A_n}(\mathbf{x}^n)}{p(\mathbf{x}^n)} > \gamma_n, \\
\pi, \text{ if } & \frac{q_{\hat{\sigma}^A_n}(\mathbf{x}^n)}{p(\mathbf{x}^n)} = \gamma_n, \\
0, \text{ if } & \frac{q_{\hat{\sigma}^A_n}(\mathbf{x}^n)}{p(\mathbf{x}^n)} < \gamma_n.
\end{array}
\right.
\end{align*}

We first claim that $\limsup_{n \to \infty} \gamma_n \leq 1$. Since $P^{FA}(\hat{\varphi}_n) = \varepsilon$, we have that
\begin{align}
p\left(  \frac{q_{\hat{\sigma}^A_n}(\mathbf{x}^n)}{p(\mathbf{x}^n)} \geq \gamma_n \right) \geq \varepsilon \label{eqn:pfaequalsepsilon-np}
\end{align}
But, using the probability of observing an $n$-length word under a distribution (see, for example, Lemma 2.1.6 in~\cite{dembo-zeitouni}), we have
\begin{align*}
q_{\hat{\sigma}^A_n(\mathbf{x}^n)} &= \int q(\mathbf{x}^n) \hat{\sigma}^A_n(dq) \\
& = \int e^{-n(H(\mathcal{P}_{\mathbf{x}^n}) + D(\mathcal{P}_{\mathbf{x}^n}||q ))} \hat{\sigma}^A_n(dq),
\end{align*}
and
\begin{align*}
p(\mathbf{x}^n) = e^{-n(H(\mathcal{P}_{\mathbf{x}^n}) + D(\mathcal{P}_{\mathbf{x}^n})||p )}.
\end{align*}
Therefore,
\begin{align*}
p\left(  \frac{q_{\hat{\sigma}^A_n}(\mathbf{x}^n)}{p(\mathbf{x}^n)} \geq \gamma_n\right) &= p\left(  \int e^{n(D(\mathcal{P}_{\mathbf{x}^n}||p) -D(\mathcal{P}_{\mathbf{x}^n}||q))} \geq \gamma_n\right) \\
&\leq p\left( e^{n(D(\mathcal{P}_{\mathbf{x}^n}||p) -\inf_{q \in Q}D(\mathcal{P}_{\mathbf{x}^n}||q))} \geq \gamma_n \right)
\end{align*}
By Assumption~\ref{assm:a1}, we can choose $\delta >0 $ such that $D(\mu || p) < \inf_{q \in Q} D(\mu ||q)$ for all $\mu \in B(p,\delta)$. Thus,
\begin{align*}
&p\left( e^{n(D(\mathcal{P}_{\mathbf{x}^n}||p) -\inf_{q \in Q}D(\mathcal{P}_{\mathbf{x}^n}||q))} \geq \gamma_n \right) \\
&= p\left( e^{n(D(\mathcal{P}_{\mathbf{x}^n}||p) -\inf_{q \in Q}D(\mathcal{P}_{\mathbf{x}^n}||q))} \geq \gamma_n, \mathcal{P}_{\mathbf{x}^n} \in B(p,\delta)\right) \\ &+ p\left( e^{n(D(\mathcal{P}_{\mathbf{x}^n}||p) -\inf_{q \in Q}D(\mathcal{P}_{\mathbf{x}^n}||q))} \geq \gamma_n, \mathcal{P}_{\mathbf{x}^n} \notin B(p,\delta)\right) 
\end{align*}
By law of large numbers, $p(\mathcal{P}_{\mathbf{x}^n} \notin B(p, \delta)) \to 0$, and hence the second term above goes to $0$ as $n \to \infty$. Suppose that $\limsup_{n \to \infty} \gamma_n > 1$, then there exists a subsequence $(n_k)_{k \geq 1}$ such that $\gamma_{n_k} > 1$ for all $k \geq 1$. Therefore, along this subsequence, the first term above becomes
\begin{align*}
&p( e^{n_k(D(\mathcal{P}_{\mathbf{x}^{n_k}}||p) -\inf_{q \in Q}D(\mathcal{P}_{\mathbf{x}^{n_k}}||q))} \geq \gamma_{n_k}, \mathcal{P}_{\mathbf{x}^{n_k}} \in B(p,\delta) ) \\ 
& \leq p( e^{n(D(\mathcal{P}_{\mathbf{x}^{n_k}}||p) -\inf_{q \in Q}D(\mathcal{P}_{\mathbf{x}^{n_k}}||q))}  >1, \mathcal{P}_{\mathbf{x}^{n_k}} \in B(p,\delta))
\end{align*}
which goes to $0$ as $n \to \infty$ by the choice of $\delta$. This implies that, $p\left(  \frac{q_{\hat{\sigma}^A_n}(\mathbf{x}^n)}{p(\mathbf{x}^n)} \geq \gamma_n\right) \to 0$ as $n \to \infty$, which contradicts~(\ref{eqn:pfaequalsepsilon-np}). Therefore, we must have $\limsup_{n \to \infty} \gamma_n \leq 1$.

We now argue that, for some $\eta > 0$, the acceptance set of $H_0$ under $\hat{\varphi}_n$ does not intersect the set $Q^\eta$. Towards this, consider the set
\begin{align*}
\Gamma_n = \left\{\mathcal{P}_{\mathbf{x}^n} : \int \log\left( \frac{q(\mathbf{x}^n)}{p(\mathbf{x}^n)} \right) \hat{\sigma}^A_n (dq) \leq \log \gamma_n \right\}.
\end{align*}
Notice that, by Jensen's inequality, the acceptance region of $H_0$ under the decision rule $\hat{\varphi}_n$ is a subset of the above set $\Gamma_n$. Also, it is easy to check that,
\begin{align*}
\Gamma_n = \{ \mu \in M_1(\mathcal{X}):D( \mu || p) - \int D(\mu || q) \hat{\sigma}^A_n(dq) 
\leq \frac{\log \gamma_n}{n} \} \cap \mathcal{P}_n.
\end{align*}
We now show that the set $\Gamma_n$ does not intersect the set $Q$ for large enough $n$. First, notice that, the set $\{\mu \in M_1(\mathcal{X}): D(\mu||p) \leq D(\mu||q^*)\}$ is closed in $M_1(\mathcal{X})$.  Therefore,by Assumption~\ref{assm:a3} there exists $\eta > 0$ such that $\{\mu \in M_1(\mathcal{X}): D(\mu||p) \leq D(\mu||q^*)\} \cap Q^\eta = \emptyset$.

We show that there exists $N \geq 1$ such that $Q^\eta \cap \Gamma_n = \emptyset$ for all $n \geq N$. Suppose not, then we can find a sequence $(\mu_n)_{n \geq 1}$ such that $\mu_n \in Q^\eta$ and $\mu_n\in \Gamma_n$ for all $n \geq 1$. Since $Q^\eta$ is compact, we can find a subsequence $(n_k)_{k \geq 1}$ along which $\mu_n$ converges, and let $\mu = \lim_{k \to \infty} \mu_{n_k} \in Q^\eta.$ Since $\mu_n \in \Gamma_n$ for all $n \geq 1$, using Lemma~\ref{lemma:uconv-integrals-np} and the fact that $\limsup_{n \to \infty} \gamma_n = 0$, we see that $\mu$ satisfies
$D(\mu || p) \leq D(\mu ||q^*)$. This contradicts the fact that $Q^\eta \cap \Gamma_n = \emptyset $, and hence, there exists $N \geq 1$ such that $Q^\eta \cap \Gamma_n = \emptyset$ for all $n \geq N$.

By the law of large numbers, we have
\begin{align*}
\sup_{q \in Q} q(\mathcal{P}_{\mathbf{x}^n} \notin B(q, \eta)) \to 0
\end{align*}
as $n \to \infty$. But, notice that
\begin{align*}
e_n(q, \hat{\varphi}_n) &\leq q(\mathcal{P}_{\mathbf{x}^n} \in \Gamma_n) \\
& \leq  q(\mathcal{P}_{\mathbf{x}^n} \notin B(q, \eta))
\end{align*}
for all $q \in Q$ and $n \geq N$. Therefore,
\begin{align*}
\sup_{q \in Q} e_n(q, \hat{\varphi}_n) \leq \sup_{q \in Q} q(\mathcal{P}_{\mathbf{x}^n} \notin B(q, \eta)) \to 0
\end{align*}  
as $n \to  \infty$. 
\qed

\subsection{Proof of Theorem~\ref{theorem:errorexponent_np}}
We proceed through similar steps as in the proof of Theorem~\ref{theorem:errorexponent}. To show the lower bound, we let the attacker play the pure strategy $q^*$ instead of the her equilibrium strategy $\hat{\sigma}^A_n$ for all $n\geq 1$. Since $u^A_n(q, \varphi) = e_n(q,\varphi) - c(q)$, and since $(\hat{\sigma}^A_n,\hat{\varphi}_n)$ is a Nash equilibrium for $\mathcal{G}^{NP}(\varepsilon,d,n)$, we see that
\begin{align*}
e_n(\hat{\sigma}^A_n,\hat{\varphi}_n) &\geq e_n(q^*, \hat{\varphi}_n) - c(q^*) + \int c(q) \hat{\sigma}^A_n(dq) \\
& \geq e_n(q^*, \hat{\varphi}_n) \\
& \geq e_n(q^*, \varphi^*_n)
\end{align*}
where $\varphi^*_n$ denotes the best $\varepsilon$-level Neyman-Pearson test for distinguishing $p$ from $q^*$ from $n$ independent samples. Here, the  second inequality follows from the definition of $q^*$, and the last inequallity follows from the optimality of Neyman-Pearson test $\varphi^*_n$. Hence, using Stein's lemma (see, for example, Lemma 3.4.7 in~\cite{dembo-zeitouni}), we see that
\begin{align}
\liminf_{n \to \infty} \frac{1}{n} \log e_n(\hat{\sigma}^A_n, \hat{\varphi}_n) \geq -D(p||q^*).
\label{eqn:ratelb-np}
\end{align}

We now show the upper bound. Fix $0<\delta<1$ such that $B(p,\delta)\cap Q = \emptyset$, and consider the deterministic decision rule $\varphi^{\delta}$ with acceptance region $B(p,\delta)$, i.e., $\varphi^\delta(\mathbf{x}^n) = 0$ whenever $\mathcal{P}_{\mathbf{x}^n} \in B(p,\delta)$ and $\varphi^\delta(\mathbf{x}^n) = 1$ otherwise. To obtain the upper bound, we let the defender play the strategy $\varphi^{\delta}_n$ for all $n \geq 1$. Since $(\hat{\sigma}^A_n, \hat{\varphi}_n)$ is a Nash equilibrium, and $u^D_n(q,\varphi) = -e_n(q,\varphi)$, we have
\begin{align}
e_n(\hat{\sigma}^A_n, \hat{\varphi}_n) &\leq e_n(\hat{\sigma}^A_n, \varphi^{\delta}) \nonumber \\
& = \int q(\mathcal{P}_{\mathbf{x}^n} \in \Gamma^{\delta}) \hat{\sigma}^A_n(dq) \nonumber \\
&\leq \int (n+1)^d e^{-n \inf_{\nu \in \Gamma^{\delta}} D(\nu || q)} \hat{\sigma}^A_n(dq), \label{eqn:errorub-np}
\end{align}
where the last inequality follows form the upper bound in Lemma~2.1.9 in~\cite{dembo-zeitouni}.
By Lemma~\ref{lemma:qnconvergence-np} and by the uniform continuity of $D(\cdot||\cdot)$ on $\Gamma^{\delta} \times Q$, there exists $N_{\delta} \geq 1$ such that
\begin{align*}
D(\nu || q) \geq D(\nu || q^*) - \delta \text{ for all } \nu \in \Gamma^{\delta}, q \in supp(\hat{\sigma}^A_n) \text{ and } n \geq N_{\delta}.
\end{align*}
Therefore, (\ref{eqn:errorub-np}) implies that
\begin{align*}
\limsup_{n \to \infty} \frac{1}{n} \log e_n(\hat{\sigma}^A_n, \hat{\varphi}_n) & \leq   - \inf_{\nu \in \Gamma^{\delta}} D(\nu || q^*)+ \delta.
\end{align*}
Letting $\delta \to 0$ and using the continuity of $D(\cdot||q^*)$ on $M_1(\mathcal{X})$, we get
\begin{align}
\limsup_{n \to \infty} \frac{1}{n} \log e_n(\hat{\sigma}^A_n, \hat{\varphi}_n) & \leq   - D(p || q^*).
\label{eqn:rateub-np}
\end{align}
The result now follows form~(\ref{eqn:ratelb-np}) and~(\ref{eqn:rateub-np}).
\qed

\section{Additional Numerical Experiments}\label{section:additional-experiments}
\subsection{Bayesian Formulation}
\label{appendix:numerical-experiments-bayesian}
\begin{figure*}[t!]
    \centering
    \begin{subfigure}[t]{0.50\textwidth}
        \centering
        \includegraphics[scale=0.40]{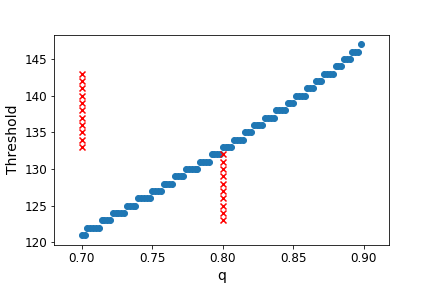}
        \caption{$n = 200$}
        \label{fig:bayesian_plots/fig1}
    \end{subfigure}%
    ~ 
    \begin{subfigure}[t]{0.50\textwidth}
        \centering
        \includegraphics[scale=0.40]{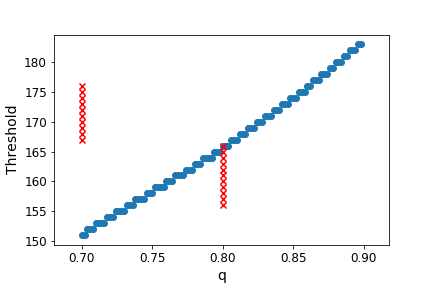}
        \caption{$n = 250$}
        \label{fig:bayesian_plots/fig2}
    \end{subfigure}
    \caption{Best response plots for $c(q) = |q-0.8|$ }
\label{fig:best_response_bayesian}
\end{figure*}

As explained in Section~\ref{section:numericals}, we fix $\mathcal{X} = \{0,1\}$ and $p = 0.5$. For numerical computations, we discretize the set $Q$ into $100$ equally spaced points, and we only consider deterministic threshold-based decision rules for the defender. 

We first examine the best response of the players. We fix $Q = [0.7,0.9]$ and the cost function to be $c(q) = |q-q^*|$ where $q^* = 0.8$. Figure~\ref{fig:best_response_bayesian}(\subref{fig:bayesian_plots/fig1}) shows the best response of the players for $n = 200$. The $x$-axis shows the strategy space of the attacker and $y$-axis shows the defender's threshold. The blue curve plots the best response of the defender, and the red curve plots the best response of the attacker for $20$ thresholds around the best response threshold corresponding to $q^*$. As we see from the figure, the two curves do not intersect (the best threshold for $0.8$ is $ 133$, whereas the best value of $q$ for threshold $133 $ is $0.7$) and hence this suggests that there is no pure strategy equilibrium in this case.  Figure~\ref{fig:best_response_bayesian}(\subref{fig:bayesian_plots/fig2}) plots the best response curves for $n=250$. We see that the two curves intersect (the point of intersection is when the attacker plays $0.8$ and the defender plays the threshold $166$).  However, this does not mean that there is a pure equilibrium, as our discretization may not capture the exact value of the attacker strategy. To see whether this is the case, we plot the attacker revenue when the defender plays the threshold $166$ over a finer grid around $q^*$ ($1000$ equally sized points on the interval of length $1/(100*n)$ around $q^*$), which is shown in Figure~\ref{fig:finer-plots-bayesian}(\subref{fig:bayesian_plots/fig3}). From this, we observe that the maximum of the attacker utility is indeed attained at the point $ q=0.8$. This suggests that there is a pure strategy Nash equilibrium when the attacker plays $0.8$ and defender plays the threshold $166$, though we could not prove this analytically.

\begin{figure*}[t!]
    \centering
    \begin{subfigure}[t]{0.50\textwidth}
        \centering
        \includegraphics[scale=0.40]{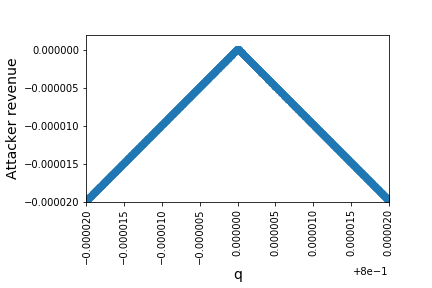}
        \caption{$c(q) = |q-0.8|, n=250$, defender plays threshold $166$}
        \label{fig:bayesian_plots/fig3}
    \end{subfigure}%
    ~ 
    \begin{subfigure}[t]{0.50\textwidth}
        \centering
        \includegraphics[scale=0.40]{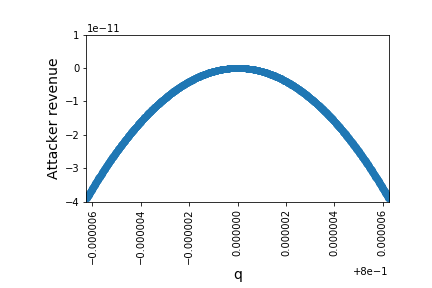}
        \caption{$c(q) = (q-0.8)^2, n=800$, defender plays threshold $529$}
        \label{fig:bayesian_plots/fig6}
    \end{subfigure}
    \caption{Finer plots of attacker revenue around $q^*$ for specific defender thresholds }
\label{fig:finer-plots-bayesian}
\end{figure*}

%

\begin{figure*}[t!]
    \centering
    \begin{subfigure}[t]{0.50\textwidth}
        \centering
        \includegraphics[scale=0.40]{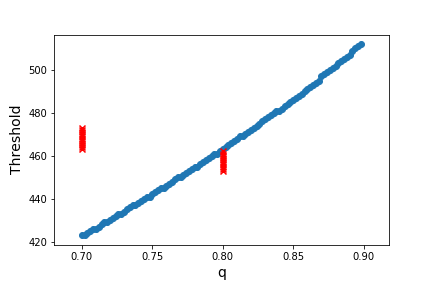}
        \caption{$n=700$}
        \label{fig:bayesian_plots/fig4}
    \end{subfigure}%
    ~ 
    \begin{subfigure}[t]{0.50\textwidth}
        \centering
        \includegraphics[scale=0.40]{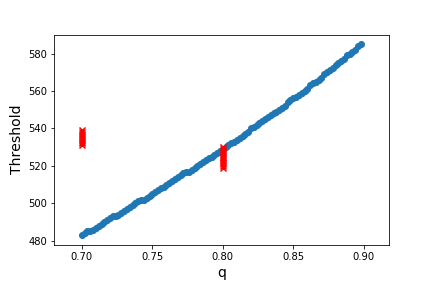}
        \caption{$n=800$}
        \label{fig:bayesian_plots/fig5}
    \end{subfigure}
    \caption{Best response plots for the cost function $c(q) = (q-0.8)^2$}
\label{fig:best-response-plots-appendix-bayesian}
\end{figure*}

%

Similar to the best response plots in Figure~\ref{fig:best_response_bayesian}, we plot the best response plots for the quadratic cost function $c(q) = (q-q^*)^2 $ where $q^* = 0.8$. Figure~\ref{fig:best-response-plots-appendix-bayesian}(\subref{fig:bayesian_plots/fig4}) shows the best response plots for $n=700 $. Here, they don't intersect (the best threshold for $0.8$ is $463$ whereas the best value of $q$ for threshold $463$ is $0.7$), which shows that there is no pure equilibrium for the game. Figure~\ref{fig:best-response-plots-appendix-bayesian}(\subref{fig:bayesian_plots/fig5}) shows the best repose plots for $n=800$. Here, we see that the curves intersect (when the attacker plays $0.8$ and defender plays the threshold $529$). As before, Figure~\ref{fig:finer-plots-bayesian}(\subref{fig:bayesian_plots/fig6}) shows a finer plot of the utility of the attacker around the point $q^*$. We see that the utility of the attacker is indeed maximized at $0.8$, which suggests that there is a pure strategy equilibrium when the attacker plays the strategy $0.8$ and defender plays the threshold $529$. 

From these experiments for the linear as well as quadratic cost functions, as we expect, there is no incentive for the attacker to deviate much form the point $q^*$, since for large values of $n$, the error term in the utility of the attacker does not contribute to the overall revenue. However, in the second case, since the cost function has zero derivative at $q^*$, it is not clear whether a slight deviation form the point $q^*$ can increase the error term compared to the decrease in the cost function, so that the overall utility of the attacker increases. Therefore, the existence of a pure strategy Nash equilibrium with the attacker strategy being equal to $q^*$ is surprising in this case. However, in the first case, since the left and right derivatives of the cost function at $q^*$ are non-zero, the  decease in the cost function is much larger compared to the possible  increase in the error term as we slightly deviate from $q^*$, and hence it is reasonable to expect the existence of a pure strategy equilibrium at $q^*$ for large $n$.

Comparing the two cost functions, a much higher value of $n$ is needed in the second case for us to have a pure equilibrium at $q^*$, since the increase in the cost function is much slower in the second case as we move away from the point $q^*$.

We now give two more examples that does not satisfy Assumption~\ref{assm:a3} whose error exponents are different from Theorem~\ref{theorem:errorexponent}. As before,  $Q = [0.6, 0.9]$ and $q^* = 0.9$, and recall that Assumption~\ref{assm:a3} is not satisfied in this case. We consider the linear cost function $c(q) = 2|q-q^*|$ and the quadratic cost function $c(q) = (q-q^*)^2$. Figures~\ref{fig:error-exponents-appendix-bayesian}(\subref{fig:bayesian_plots/fig10}) and \ref{fig:error-exponents-appendix-bayesian}(\subref{fig:bayesian_plots/fig11}) show the error exponent at the equilibrium as a function of $n$, from $n=100$ to $n=400$ in steps of $100$. From these plots, we see that, the error exponents converge to somewhere around $0.023$ and $0.011$ respectively for the above two cost functions, whereas the value of $\Lambda^*_0(0)$ is around $0.111$.

\begin{figure*}[t!]
    \centering
    \begin{subfigure}[t]{0.50\textwidth}
        \centering
        \includegraphics[scale=0.40]{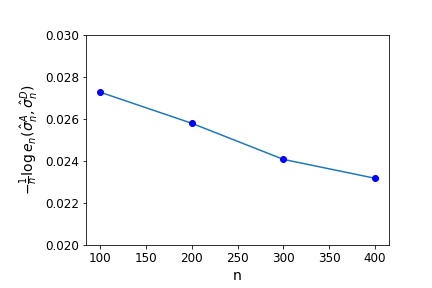}
        \caption{$c(q) = 2|q-0.9|$}
        \label{fig:bayesian_plots/fig10}
    \end{subfigure}%
    ~ 
    \begin{subfigure}[t]{0.50\textwidth}
        \centering
        \includegraphics[scale=0.40]{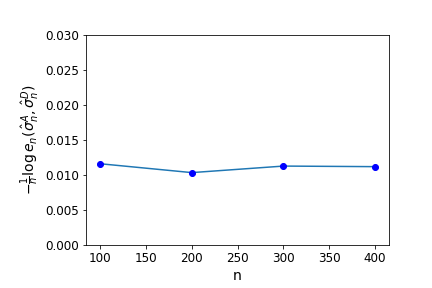}
        \caption{$c(q) = (q-0.9)^2$}
        \label{fig:bayesian_plots/fig11}
    \end{subfigure}
    \caption{Error exponents as a function of $n$}
\label{fig:error-exponents-appendix-bayesian}
\end{figure*}

%

\subsection{Neyman-Pearson Formulation}
We fix $\varepsilon = 0.1$ and consider the piecewise linear cost function $c(q) = |q-q^*|$ on $Q = [0.7,0.8]$ where $q^* = 0.8$. As suggested by Lemma~\ref{lemma:pureeqforbinary_np}, there exists a pure strategy Nash equilibrium for $\mathcal{G}^{NP}(\varepsilon,2,n)$ for each $n \geq 1$. We first compute the dominant decision rule of the defender by finding the appropriate value of threshold and randomization. Once this is done, we compute the equilibrium by finding the best response of the attacker corresponding to this dominant strategy of the defender (as before, we discretize the set $Q$ into $100$ equally-spaced points). We repeat the experiment for different values of $n$, and for the quadratic cost function $c(q) = 0.001*(q-0.8)^2$. Figure~\ref{fig:np-firstcost} and Figure~\ref{fig:np-secondcost} shows the results for the above two cost functions.

Since the former cost function increases much faster than the latter as we move away from the point $q^*$, we see that the attacker has much more incentive to play a strategy that is away from $q^*$ in the second case compared to the first. This is reflected in the equilibrium strategy of the attacker; from Figures~\ref{fig:np-firstcost}(\subref{fig:np_plots/fig1}) and \ref{fig:np-secondcost}(\subref{fig:np_plots/fig2}), we see that it takes much larger values of $n$ for the equilibrium strategy of the attacker to become equal to $q^*$ in the second case compared to the first. From Figures~\ref{fig:np-firstcost}(\subref{fig:np_plots/fig3}) and \ref{fig:np-secondcost}(\subref{fig:np_plots/fig4}), we see that the error exponents approach the limiting value $D(p||q^*) = 0.223$.

\begin{figure*}[t!]
    \centering
    \begin{subfigure}[t]{0.50\textwidth}
        \centering
        \includegraphics[scale=0.40]{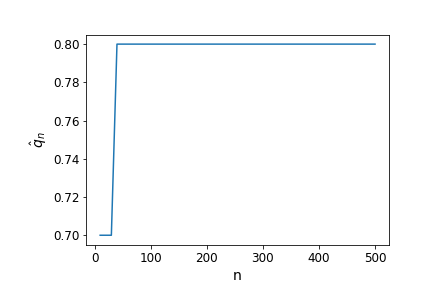}
        \caption{Attacker's strategy as a function of $n$}
        \label{fig:np_plots/fig1}
    \end{subfigure}%
    ~ 
    \begin{subfigure}[t]{0.50\textwidth}
        \centering
        \includegraphics[scale=0.40]{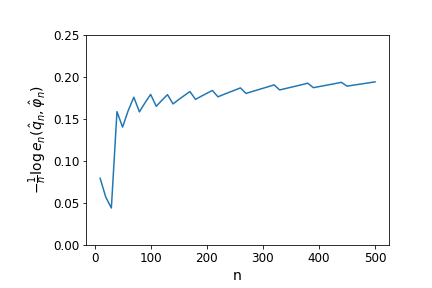}
        \caption{Error exponent as a function of $n$}
        \label{fig:np_plots/fig3}
    \end{subfigure}
    \caption{$c(q) = |q-0.8|$}
\label{fig:np-firstcost}
\end{figure*}
%

\begin{figure*}[t!]
    \centering
    \begin{subfigure}[t]{0.50\textwidth}
        \centering
        \includegraphics[scale=0.40]{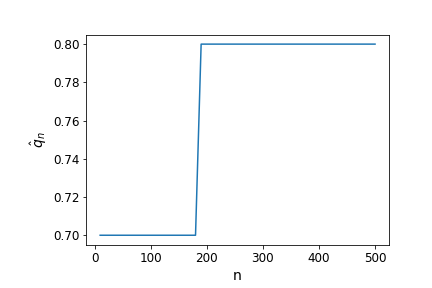}
        \caption{Attacker's strategy as a function of $n$}
        \label{fig:np_plots/fig2}
    \end{subfigure}%
    ~ 
    \begin{subfigure}[t]{0.50\textwidth}
        \centering
        \includegraphics[scale=0.40]{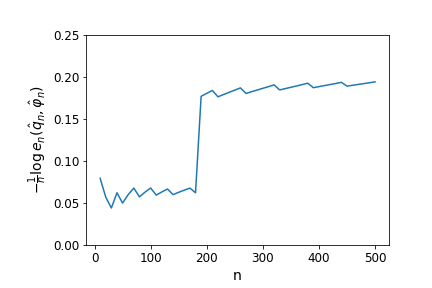}
        \caption{Error exponent as a function of $n$}
        \label{fig:np_plots/fig4}
    \end{subfigure}
    \caption{$c(q) = 0.001*(q-0.8)^2$}
\label{fig:np-secondcost}
\end{figure*}

%

}

\end{document}